\newtheorem {theorem} {Theorem}
\newtheorem {lemma} {Lemma}
\definecolor {infocolor} {rgb} {0.6,0.6,0.6}
\newcommand{\niceremark}[3]{\textcolor{blue}{\textsc{#1 #2:}} \textcolor{red}{\textsf{#3}}}
\newcommand{\rodrigo}[2][says]{\niceremark{Rodrigo}{#1}{#2}}
\newcommand{\maarten}[2][says]{\niceremark{Maarten}{#1}{#2}}
\DeclareMathOperator{\colvis}{ColVis}
\DeclareMathOperator{\vorvis}{VorVis}
\DeclareMathOperator{\vis}{Vis}
\newcommand{\viewshed}[2][\T]{\ensuremath{\mathcal{V}_{#1}(#2)}\xspace}
\newcommand{\vorviewshed}[3][\T]{\ensuremath{\mathcal{W}_{#1}(#2,#3)}\xspace}
\newcommand{\pow}[1]{\ensuremath{\mathit{pow}(#1)}\xspace}
\newcommand{\overviewshed}{\ensuremath{\mathbb{V}}\xspace}
\newcommand{\pts}{\ensuremath {\cal P}}
\newcommand{\vase}[2]{\ensuremath{\uparrow_{#1}^{#2}}\xspace}
\newtheorem{observation}{Observation}
\newtheorem{corollary}{Corollary}
\newtheorem{proposition}{Proposition}
\newtheorem{definition}{Definition}
\newcommand{\brep}[2]{\noindent{\bf #1~\ref{#2}.}\ \slshape}
\newcommand{\erep}{\normalfont}
\newcommand{\mkmcal}[1]{\ensuremath{\mathcal{#1}}\xspace}
\newcommand {\mathset} [1] {\ensuremath {\mathbb {#1}}}
\newcommand {\R} {\mathset {R}}
\newcommand{\B}{\mkmcal{B}}
\newcommand{\V}{\mkmcal{V}}
\newcommand{\G}{\mkmcal{P}}
\newcommand{\T}{\mkmcal{T}}
\newcommand{\A}{\mkmcal{A}}
\newcommand{\VD}{\mathrm{V\hspace{-2pt}D}}
\newcommand{\PD}{\mathrm{P\hspace{-1pt}D}}
\renewcommand{\S}{\G}
\begin{document}

\title{Terrain Visibility with Multiple Viewpoints\,\footnotemark[1]
}

\author{
 Ferran Hurtado\footnotemark[2]\and
 Maarten L{\"o}ffler\footnotemark[3] \and
 In{\^e}s Matos\footnotemark[4]\,\,\,\footnotemark[2] \and
 Vera Sacrist\'{a}n\footnotemark[2] \and
 Maria Saumell\footnotemark[5] \and
 Rodrigo I. Silveira\footnotemark[4]\,\,\,\footnotemark[2]  \and
 Frank Staals\footnotemark[3]
 }

\renewcommand{\thefootnote}{\fnsymbol{footnote}}

\footnotetext[1]{A preliminary version of this paper appeared in Proc. 24th International Symposium on Algorithms and Computation, pp. 317-327. The algorithm in Section~\ref{subs:alg-vis-map-1.5D} appeared in Abstracts from 30th European Workshop on Computational Geometry.}

\footnotetext[2]{Departament de Matem\`atica Aplicada II,
Universitat Polit\`ecnica de Catalunya, Barcelona, Spain, {\tt
\{ferran.hurtado,vera.sacristan,rodrigo.silveira\}@upc.edu}.}

\footnotetext[3]{Department of Information and Computing Sciences, Universiteit
  Utrecht, The Netherlands, {\tt \{m.loffler,f.staals\}@uu.nl}.}

\footnotetext[4]{Dept. de Matem\'atica \& CIDMA, Universidade de Aveiro, Portugal, {\tt \{ipmatos,rodrigo.silveira\}@ua.pt}.}

\footnotetext[5]{Department of Mathematics and European Centre of Excellence NTIS (New Technologies for the Information Society), University of West Bohemia, Czech Republic, {\tt saumell@kma.zcu.cz}.}

\renewcommand{\thefootnote}{\arabic{footnote}}

\maketitle
%------------------------------ Text -------------------------------------

\begin{abstract}
We study the problem of visibility in polyhedral terrains in the presence
  of multiple viewpoints.  We consider a triangulated terrain with $m>1$
  viewpoints (or guards) located on the terrain surface. A point on the terrain
  is considered \emph{visible} if it has an unobstructed line of sight to at
  least one viewpoint. We study several natural and fundamental visibility
  structures: (1) the visibility map, which is a partition of the terrain into visible
  and invisible regions; (2) the \emph{colored} visibility map, which is a partition of
  the terrain into regions whose points have exactly the same visible
  viewpoints; and (3) the Voronoi visibility map, which is a partition of the terrain
  into regions whose points have the same closest visible viewpoint. 
	We study the complexity of each structure for both 1.5D and 2.5D terrains, and
  provide efficient algorithms to construct them.
  Our algorithm for the visibility map in 2.5D terrains improves on the only existing algorithm in this setting. To the best of our knowledge, the other structures have not been studied before.
\end{abstract}

\section{Introduction}
\label{sec:intro}

\emph{Visibility} is one of the most studied topics in computational geometry. Many different terms, like art-gallery problems, guarding, or visibility itself, have been used during the last three decades to refer to problems related to the question of whether two objects are \emph{visible} from each other, amidst a number of obstacles.

In this paper we are interested in visibility on \emph{terrains}.  A \emph{2.5D terrain} is an $xy$-monotone polyhedral surface in $\mathbb{R}^3$.  We also study 1.5D terrains: $x$-monotone polygonal lines in $\mathbb{R}^2$. The obstacles we consider are the terrain edges or triangles themselves. A fundamental aspect of visibility in terrains is the \emph{viewshed} of a point (i.e. the \emph{viewpoint}): the (maximal) regions of the terrain that the viewpoint can see (see Figure~\ref{fig:viewshed}).

In a 1.5D terrain, the viewshed is almost equivalent to the \emph{visibility
  polygon} of the viewpoint, since the terrain can be turned into a simple
polygon by ``closing up'' the unbounded space above the terrain.  Therefore
well-known linear-time algorithms to construct visibility polygons can be
applied (e.g.~\cite{js-clvpa-87}).
In
2.5D the viewshed is more complex.  In an $n$-vertex terrain, the
viewshed of a viewpoint can have $\Theta(n^2)$ complexity.  The
best algorithms known to compute it take $O((n + k) \log n \log \log n)$ time~\cite{rs-eoshsra-88}, and $O((n
\alpha(n) + k) \log n)$ time~\cite{kos-ehsrosus-92}, where $k$ is the size of the
resulting viewshed, and $\alpha(n)$ is the extremely slowly
growing inverse of the Ackermann function.
 I/O-efficient versions of this problem for grid terrains have also been studied recently
(e.g.~\cite{ammfc-evc-11,fht-ivcmgt-09,htz-cvtem-08}), as well as other terrain
visibility structures like \emph{horizons} and
\emph{offsets}~\cite{FM03}.

While the computation of the viewshed from one viewpoint on a terrain has been
thoroughly studied, it is surprising that a natural and important variant has
been left open: What happens if instead of \emph{one} single viewpoint, one has
\emph{many}, say $m>1$, different viewpoints on the terrain? The \emph{common
  viewshed}, or \emph{visibility map} can then be defined as the regions of the
terrain that can be seen from \emph{at least one} viewpoint. Computing the
viewshed from each single viewpoint and then taking the union of the $m$
viewsheds is a straightforward way to compute it, but it
has a high running time that does
not take the final size of the visibility map into account. Obtaining more
efficient algorithms for this and other related problems is the main focus of
this paper.

To the best of our knowledge, there are no other studies on (the complexity of) the visibility map
of multiple viewpoints. 
The only related work that we are aware of for 1.5D terrains deals with the problem of determining whether at least two viewpoints \emph{above} a 1.5D terrain can see each other, for which Ben-Moshe et al.~\cite{bhkm-cvgpp-04} present an $O((n+m)\log m)$-time algorithm.
For 2.5D terrains we can only mention the work
by Coll et al.~\cite{cms-gvmv-10}, who essentially overlay the $m$
individual viewsheds without studying the complexity of the
visibility map. This results in the very high running time of
$O(m^2n^4)$. In
addition, a few papers deal with the computation of viewsheds for
multiple
viewpoints for rasterized terrains % (or polyhedral terrains after
 % being discretized into a grid)
\cite{cfms-mvmtt-07,cms-gvmv-10,lzl-mostsa-06}.
We refer to the survey by De
Floriani and Magillo~\cite{FM03} for an overview of terrain visibility from the Geographic Information Science (GIS) perspective.
A recent paper by Lebeck et al.~\cite{lma-chopt-13} studies finding paths  on 2.5D  terrains that are likely to be not visible from a ``reasonable'' (but unknown) set of viewpoints.

We would like to highlight the fact that it is not due to its lack
of interest that visibility from multiple viewpoints has been
overlooked up to now. Visibility in 1.5D terrains has been
thoroughly studied from related perspectives, and in particular
the problem of \emph{placing} a minimum number of viewpoints to
cover a 1.5D terrain has received a lot of attention
(e.g.~\cite{benmoche2007constantfactor,Clarkson:2007,ElbassioniKMMS11,gibson2009approximation,King:2011,Nilsson94}).
Their theoretical interest and the fact that 1.5D terrains already
pose a difficult challenge are the main motivation behind our work
in that dimension.

For 2.5D terrains, the study of visibility from multiple viewpoints has a large number of applications in GIS.
Here we only present a few concrete examples.
 For early fire prevention, fire lookout towers are essential. Thus visibility studies are crucial to determine the total area visible from a set of towers, and therefore evaluate the effectiveness of fire detection systems (e.g.~\cite{crsar-fpp-07}).
 In visual pollution studies, for instance when considering where to install a wind farm, turbines should be as hidden as possible, and should not be visible from certain ``sensitive sites'' (like touristic points and major buildings). Visibility analysis has been successfully used to identify suitable installation areas that are not visible from sensitive sites~\cite{m-cwpl-06}.
 Finally, even though in this paper we use the term \emph{visibility}, our results also apply to other contexts.
 For example, in sensor networks visibility problems are formulated as \emph{coverage problems}.
 This type of monitoring can be performed by radars, antennas, routers and basically any device that is able to send or receive some sort of wireless signal. Each of these devices can be placed almost anywhere on a terrain, so coverage is the discipline that measures the quality of the chosen device placement scheme. 
Numerous problems related to visibility arise in this area.
We refer to the survey by Yick et al.~\cite{ymg-wsns-08} for a more detailed treatment of the subject.

\paragraph{Problem Statement.} A 2.5D terrain \T consists of $n$ vertices, $O(n)$ edges, and $O(n)$ faces. Let
$V(\T)$ denote the set of vertices, $E(\T)$ the set of edges, and $F(\T)$ the
set of faces of \T. A 1.5D terrain \T consists of $n$ vertices and $n-1$ edges. We
again denote the vertices of \T with $V(\T)$ and the edges of \T with $E(\T)$.
With some slight abuse of notation, we sometimes use \T to refer to the \emph{set} of points on the surface of \T.

\begin{figure}[t]
  \centering
  \includegraphics[page=1]{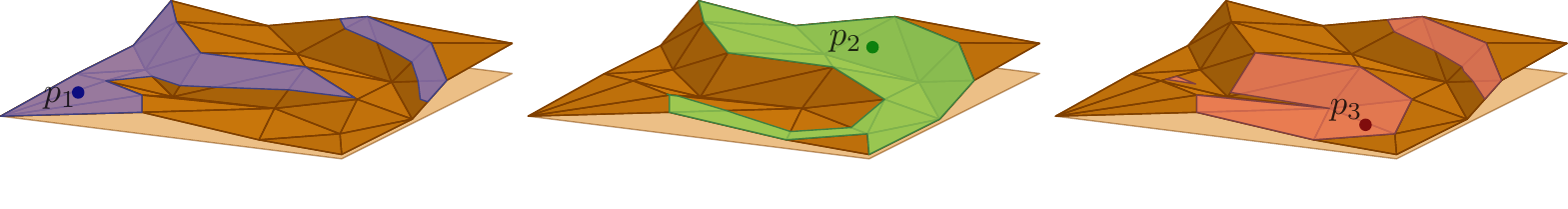}
  %\vspace{-0.5cm}
  \caption{The viewsheds of three viewpoints on a 2.5D terrain.}
  \label{fig:viewshed}
\end{figure}
\begin{figure}[t]
  \centering
  \includegraphics[page=2]{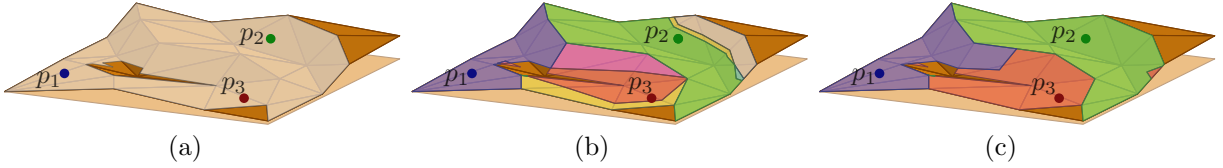}
  \caption{The visibility map (a), the colored visibility map (b), and the
    Voronoi visibility map~(c).}
  \label{fig:visibility_structures}
\end{figure}

For any point $p$ on the terrain \T (either a 2.5D terrain or a
1.5D terrain), the \emph{viewshed} of $p$ on \T, denoted by
\viewshed{p}, is the maximal set of points on \T that are
\emph{visible} from $p$. A point $q$ is visible from $p$ if and
only if the line segment $\overline{pq}$ does not contain any
point strictly below the terrain surface. 
%(intuitively, this corresponds to placing the viewpoints some small $\eps > 0$ above
%the terrain)
Note that our definition of visibility is symmetric,
and that viewpoints have unlimited sight
%uncomment in arxiv version:
(Section~\ref{sec:limited-sight} discusses the situation of
viewpoints with limited sight).
The viewshed
\viewshed{\G} of a set of viewpoints \G is the set of points
visible from at least one viewpoint in \G, that is, $\viewshed{\G}
= \bigcup_{p
  \in \G} \viewshed{p}$.

\newcommand{\closest}[2][\T]{\ensuremath{\mathit{closest}_{#1}(#2)}}

Given a set of viewpoints \G, we define the \emph{Voronoi viewshed}
\vorviewshed{p}{\G} of a viewpoint $p \in \G$ as the set of points in the viewshed of
$p$ that are closer to $p$ than to any other viewpoint that can see them. More
precisely, $\vorviewshed{p}{\G} = \viewshed{p} \cap \{x \mid x \in \T \land
\closest{x,\G} = p \}$, where $\closest{x,\G} $ denotes the closest (in terms of the Euclidean distance) viewpoint in \G that can see a point $x$ on \T. 

We can now formally define the three visibility structures studied in this paper, illustrated in Figure~\ref{fig:visibility_structures}.

\begin{definition}
  The \emph{visibility map} $\vis (\T,\G)$ is a subdivision of the terrain \T
  into a visible region $ \viewshed{\G}$ and an invisible region $\T
  \setminus  \viewshed{\G}$.
  \end{definition}

\begin{definition}
  The \emph{colored visibility map} $\colvis (\T,\S)$ is a subdivision of the
  terrain \T into maximally connected regions $R$, each of which is covered by exactly
  the same subset of viewpoints $\G' \subseteq \G$.
Each region $R$ is a
  (maximally connected) subset of $\bigcap_{p \in \G'} \viewshed{p}$ and we
  have that $R \cap \bigcup_{p \in \G\setminus\G'} \viewshed{p} =
  \emptyset$.
\end{definition}

\begin{definition}
  The \emph{Voronoi visibility map} $\vorvis (\T,\S)$ is a subdivision of the
  terrain \T into maximally connected regions, each of which is a subset of
  the Voronoi viewshed \vorviewshed{p}{\G} of a viewpoint $p \in \G$.
\end{definition}

We denote the \emph{size}, that is, the total complexity of all its regions, of
$\vis(\T,\S)$, $\colvis(\T,\S)$, and $\vorvis(\T,\S)$, by $k$, $k_c$, and
$k_v$, respectively.

In the remainder of the paper we assume that \G is a set of $m$ viewpoints on
the terrain surface. For simplicity, we assume that the viewpoints are placed
on terrain vertices, which implies that $m \leq n$. We consider this a reasonable assumption, since in the applications that motivate this work the number of terrain vertices is considerably larger than the number of viewpoints. Additionally, for the case of 1.5D terrains, if three or more viewpoints lie on the same edge, then the union of the viewsheds of the leftmost and rightmost viewpoints contains the viewshed of any other viewpoint on the edge (this partially follows from Observation~\ref{obs:left-vis}). Therefore, in general, if the initial number of viewpoints is large, in most applications the number of relevant viewpoints is still $O(n)$. 

\paragraph{Results.}

We present a comprehensive study of the visibility structures defined above.
We analyze the complexity of all the structures and propose algorithms to compute them.
Our results 
%uncomment in arxiv version
for unlimited sight 
are summarized in Table~\ref{tbl:results}. 

Regarding 1.5D terrains,
all our algorithms avoid computing individual viewsheds.  $\vis (\T,\G)$ is
computed in nearly optimal running time, while the algorithms for $\colvis
(\T,\G)$ and $\vorvis (\T,\G)$ are output-sensitive, although the running time of the latter depends also on~$k_c$. 

As for 2.5D terrains, we prove %with a careful analysis---interesting on its own---
that the
maximum complexity of $\vis (\T,\G)$ and $\colvis (\T,\G)$ is much less than the overlay
of the viewsheds, as implicitly assumed in previous work~\cite{cms-gvmv-10}.
Using that, we show how a combination of well-known algorithms can be used to compute the visibility structures reasonably fast.

%uncomment in arxiv version
Finally, in Section~\ref{sec:limited-sight} we analyze how our results change when viewpoints have limited sight, which is a very common situation for practical purposes. 

\begin{table}[t]
  \centering
  \begin{tabularx}{\textwidth}{rccX<{\centering}}
    \toprule
    \multicolumn{4}{c}{1.5D Terrains} \\
    Structure   & & Max. size        & Computation time                                     \\
    \cmidrule(l){1-1}                 \cmidrule(l){3-4}
      $\vis$    & & $\Theta(n) $     & $O(n + m \log m)$                                    \\
      $\colvis$ & & $\Theta(mn) $    & $O(n + (m^2 + k_c) \log n)$                          \\
      $\vorvis$ & & $\Theta(mn) $    & $O(n + (m^2 + k_c) \log n + k_v (m+\log n \log m))$  \\
    \\[1.2\baselineskip]
    \toprule
    \multicolumn{4}{c}{2.5D Terrains} \\
    Structure   & & Max. size        & Computation time                                    \\
    \cmidrule(l){1-1}                 \cmidrule(l){3-4}
      $\vis$    & & $\Theta(m^2n^2)$ & $O(m(n\alpha(n) + \min(k_c,n^2))\log n + mk_c)$      \\
      $\colvis$ & & $\Theta(m^2n^2)$ & $O(m(n\alpha(n) + \min(k_c,n^2))\log n + mk_c)$      \\
      $\vorvis$ & & $O(m^3n^2)$      & $O(m(n\alpha(n) + \min(k_c,n^2))\log n + mk_c\log m)$\\
    \bottomrule
  \end{tabularx}
  \vspace*{0.1cc} \caption{Complexity and computation time of the three visibility structures,
	%uncomment in arxiv version
	 for unlimited sight,
 for an $n$-vertex terrain with $m$ viewpoints.
}
  \label{tbl:results}
\end{table}

\section{1.5D Terrains}

A 1.5D terrain \T is an $x$-monotone polygonal chain, specified by its sorted list of $n$ vertices.
In addition, we are given a set \S of $m$ viewpoints placed on some of the terrain vertices.
We first study the complexity of the visibility, colored visibility, and Voronoi visibility maps. Then we present algorithms to compute them efficiently. Note that in 1.5D our visibility structures can be seen as subdivisions of the $x$-axis
into intervals.

In this section we use the following notation.
We denote the $x$- and $y$-coordinates of a point $p \in \mathbb{R}^2$ by $x(p)$ and $y(p)$, respectively.
We use $\T[a,c]$, for $a,c$ in
$\T$ and $x(a) < x(c)$, to denote the closed portion of the
terrain between $a$ and $c$, and $\T(a,c)$ for the open portion.
Similarly, if $q \in \mathbb{R}^2$, we use $\T[q]$ to denote the point on \T whose
$x$-coordinate is equal to $x(q)$; if $q \in \mathbb{R}$, $\T[q]$ denotes the point on \T whose
$x$-coordinate is equal to $q$. 

\subsection{Complexity of the Visibility Structures}

We start by considering the complexity of the visibility and
colored visibility maps.

\begin{theorem}
The visibility map  $\vis (\T,\G)$, for a 1.5D terrain \T, has
complexity $\Theta(n)$.
\end{theorem}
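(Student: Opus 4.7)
The lower bound $\Omega(n)$ follows already from the special case $m = 1$: a standard ``zig-zag'' 1.5D terrain produces a single viewshed of $\Omega(n)$ complexity, and this is immediately a lower bound for $|\vis(\T, \G)|$ since $\viewshed{p} \subseteq \viewshed{\G}$ for any $p \in \G$.

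For the upper bound, my plan is to bound the total number of boundary points of $\vis(\T, \G)$ by $O(n)$. I would split these boundary points into two types: those coinciding with a terrain vertex (trivially at most $n$), and those lying in the interior of some edge. An interior boundary point $q$ on some edge must be a \emph{tangent event}: there exist a viewpoint $p \in \G$ and a terrain vertex $v$ such that $p$, $v$, and $q$ are collinear, with $v$ strictly between $p$ and $q$ (otherwise the visibility function would be locally constant at $q$). I would then charge each such interior boundary point $q$ to the vertex $v$, and prove that each terrain vertex receives only a constant number of charges, giving $O(n)$ interior boundary points in total.

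To prove the per-vertex constant bound, I would analyze the shadow structure at $v$ separately on each side. On the right of $v$, each viewpoint $p_i$ to the left of $v$ that sees $v$ casts a ray through $v$ which hits the terrain at some point $q_i$, the right endpoint of $p_i$'s shadow behind $v$. These shadow regions are nested, and generically only the innermost endpoint (the $q_i$ closest to $v$) can correspond to a transition in $\vis(\T, \G)$: points just past any larger shadow endpoint already lie in a region visible from the viewpoint defining the innermost shadow, so no transition occurs there. A symmetric statement holds on the left side of $v$, giving at most a constant number of interior boundaries charged to $v$.

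The main obstacle I foresee is cascading re-blockings: after the innermost viewpoint starts seeing past $q_{i_1}$, its line of sight may be interrupted by another vertex $v'$ before the next shadow ends, which could potentially produce further transitions associated with $v$ at subsequent $q_{i_j}$'s. I would resolve this through a careful amortization that redistributes such extra charges onto the re-blocking vertices $v'$ --- each such $v'$ then absorbs at most constantly many charges in its own right, preserving the overall $O(1)$-per-vertex bound. Combining the at-most-$n$ terrain-vertex boundaries with the $O(n)$ interior boundaries then yields $|\vis(\T, \G)| = O(n)$.
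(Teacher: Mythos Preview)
Your charging scheme is headed in a workable direction, but it is far more involved than necessary, and the step you yourself flag as an obstacle is a genuine gap. The ``amortization that redistributes such extra charges onto the re-blocking vertices $v'$'' is asserted rather than carried out. In your cascading scenario, the vertex $v'$ that re-blocks $p_1$ may already carry its own primary charge (from the innermost shadow through $v'$), and may simultaneously receive redistributed charges from several distinct original vertices $v$. You have not argued why the total load on $v'$ stays $O(1)$; without that, the proof is incomplete.

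The paper avoids all of this by working per edge rather than per blocking vertex. The key observation is: if a viewpoint $p_i$ to the left of an edge $e$ sees any interior point $q$ of $e$, then $p_i$ sees the entire sub-segment of $e$ from $q$ to the right endpoint (the line of sight only improves as you move away from $p_i$ along $e$). Hence the portion of $e$ visible from all left viewpoints together is a single suffix of $e$, and symmetrically the portion visible from right viewpoints is a single prefix. The union of a prefix and a suffix has at most two boundary points in the interior of $e$, so there are at most $2(n-1)$ visible/invisible transitions in total. Together with the $n$ terrain vertices this gives $O(n)$, and the lower bound is immediate since the $n$ vertices already contribute to the complexity. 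This two-line per-edge argument replaces your entire shadow-nesting and amortization machinery.
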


\begin{proof}
There are two types of points of \T that contribute to the
complexity of $\vis (\T,\G)$: vertices of \T, and points where the
terrain changes between visible and invisible (notice that
there could be points that are of both types). There are $n$ points of the first type, and in the following paragraph we show that the points of the second type amount to $O(n)$. Consequently, $k$ is
$\Theta(n)$.

Consider an edge $e\in E(\T)$ and a viewpoint $p_i$ to the left of
$e$. If $p_i$ sees some interior point $q$ of $e$, then $p_i$ also
sees all the segment from $q$ to the rightmost endpoint of $e$. A
symmetric situation occurs if $p_i$ is to the right of $e$. Thus
there are three possible situations for the visibility of $e$ with respect to the whole set $\G$: 
(i) $e$ is fully visible or invisible,
(ii) $e$ contains two connected portions, including
endpoints, where one is visible and the other is not, 
or (iii) $e$ contains one invisible connected portion between two visible ones.
This last case is shown in Figure~\ref{fig:vis-edge}. In any case,
the interior of $e$ contains at most two points in which the terrain
changes between visible and invisible.
\end{proof}

\begin{figure}[t]
\centering
\subfigure[] % caption for subfigure a
{
    \includegraphics[scale=1]{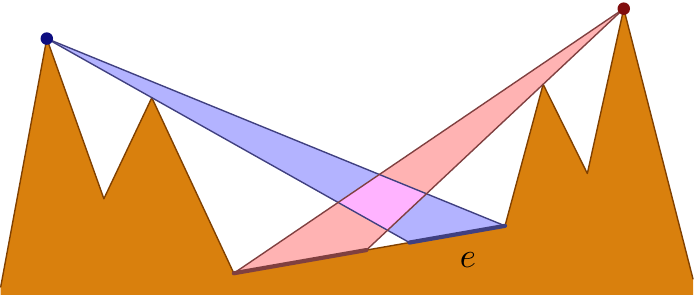}
  \label{fig:vis-edge}
} \hspace{1cm}
\subfigure[] % caption for subfigure b
{
    \includegraphics[scale=1]{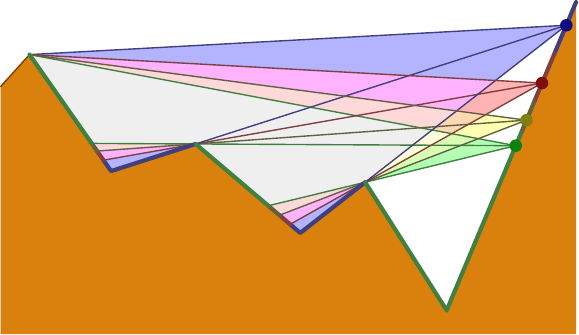}
  \label{fig:Quadratic}
}
% caption for the whole figure
\caption{(a) Edge $e$ contains one invisible connected portion
between two visible ones. (b) Every other edge has four different
regions of $\colvis(\T,\S)$ and four different regions of $\vorvis(\T,\S)$.
Viewpoints are indicated with disks.
%%% Commented out due to lack of space
%, each associated with a distinct viewpoint.
%%%
}
%\label{fig:terrains-example}
\end{figure}

\begin{theorem}
The colored visibility map $\colvis(\T,\S)$, for a 1.5D terrain \T,
has maximum
 complexity $\Theta(mn)$.
\end{theorem}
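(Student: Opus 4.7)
The plan is to establish matching upper and lower bounds of $\Theta(mn)$ on $k_c$.

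For the upper bound, I would reuse the structural observation from the proof of Theorem~1. For each edge $e \in E(\T)$ and each single viewpoint $p_i \in \S$, the interior of $e$ contains at most two points where visibility with respect to $p_i$ changes between visible and invisible. A color-change breakpoint of $\colvis(\T,\S)$ in the interior of $e$ must coincide with such a change for at least one of the $m$ viewpoints, so $e$ contributes at most $2m$ interior breakpoints. Summing over the $n-1$ edges and adding the $n$ terrain vertices yields $O(mn)$ breakpoints on the $x$-axis, and hence $O(mn)$ regions.

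For the lower bound, I would build on the construction hinted at in Figure~\ref{fig:Quadratic}, scaling up the number of colors per edge from a constant to $\Theta(m)$. First, I would cluster the $m$ viewpoints $v_1, \ldots, v_m$ near the right end of the terrain at nearly the same $x$-coordinate and strictly increasing heights $y_1 < \cdots < y_m$. To the left, I would chain together $\Omega(n)$ gadgets, each consisting of one long nearly-horizontal edge followed by a small blocking hill whose peak lies just below $y_1$. Along each long edge, as a point $p$ slides from left to right, the sightline past the nearby hilltop progressively steepens and eliminates the visibility of one viewpoint at a time (starting from the lowest), producing $m+1 = \Theta(m)$ distinct maximal monochromatic subintervals. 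Because consecutive long edges are separated by their blocking hills, the subdivisions in different gadgets stay independent, so multiplying by $\Omega(n)$ gadgets gives $\Omega(mn)$ regions overall.

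The main obstacle will be to verify that each of the claimed color changes along a single long edge truly produces a new region, i.e., that no two consecutive subintervals accidentally share the same visible set. I would handle this by placing the viewpoint heights and hill peaks in general position (no two viewpoints collinear with a common hilltop), ensuring that every grazing event as $p$ moves across a long edge toggles the visibility of exactly one viewpoint and thus forces a strict color change. This makes all $\Theta(m)$ breakpoints on each long edge effective, and combining with the $\Omega(n)$ gadgets completes the $\Omega(mn)$ lower bound.
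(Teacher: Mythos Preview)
Your proposal is correct and follows essentially the same approach as the paper. For the upper bound, the paper uses the same underlying fact---that each viewpoint's visibility on a fixed edge is a single subinterval touching one endpoint---to conclude directly that an edge carries at most $m+1$ colored regions; your ``at most two breakpoints per viewpoint per edge'' is a slightly loose restatement of this (the tight count is one), but it yields the same $O(mn)$ bound. For the lower bound, the paper simply exhibits a terrain (its Figure~\ref{fig:Quadratic}) in which every other edge already carries $\Theta(m)$ colored regions; your clustered-viewpoints-plus-gadgets construction is a different concrete realization of the same idea and works for the same reason, so the argument goes through once the general-position details you identify are fixed.
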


\begin{proof}
As already mentioned, once a viewpoint sees a given point $q$
on an edge $e\in E(\T)$, it necessarily sees the whole segment from
$q$ to one of $e$'s endpoints. Hence, all $m$ viewpoints may produce
at most $m+1$ different regions of $\colvis(\T,\S)$ on $e$.
Therefore the complexity of $\colvis(\T,\S)$ is $O(mn)$. The
example in Figure~\ref{fig:Quadratic} shows that the bound is tight.
\end{proof}

Next we study the Voronoi visibility map. We first consider this
map restricted to an edge $e\in E(\T)$. We define $\S_l$
(respectively, $\S_r$) as the subset of \S containing the
viewpoints on the left (respectively, on the right) of $e$.
If a
viewpoint is placed at the leftmost (respectively, rightmost)
endpoint of $e$, then we assign it to $\S_l$ (respectively,
$\S_r$).
We define $m_l=|\S_l|$ and $m_r=|\S_r|$.

\begin{lemma} \label{lem:vor-edg}
The Voronoi visibility map $\vorvis(\T,\S)$ restricted to an edge $e\in E(\T)$ has
at most $4m-2$ regions.
\end{lemma}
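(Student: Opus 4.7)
The plan is to view the Voronoi visibility map restricted to $e$ as the lower envelope of $m$ partially-defined distance functions $d_p(t)=\|p-e(t)\|$, where $e(t)=(1-t)u+tv$ is a linear parametrization of $e$ and each $d_p$ is restricted to the set of parameters at which $p$ sees $e(t)$. By the same argument already used in the proofs of the two preceding theorems, the visible region on $e$ of a left viewpoint $p\in\S_l$ is a right-suffix $[a_p,1]$ and that of a right viewpoint $p\in\S_r$ is a left-prefix $[0,b_p]$.

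The key algebraic observation is that for any two viewpoints $p,q$, the difference $d_p(t)^2-d_q(t)^2$ is \emph{linear} in $t$, because both squared distances are quadratics in $t$ with the same leading coefficient $\|v-u\|^2$. Consequently $\{t:d_p(t)<d_q(t)\}$ is a half-line in the parameter interval, and any intersection of such half-lines is a single (possibly empty) interval.

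Using this, I would first bound the lower envelope $L$ of only the $m_l$ left-viewpoint functions by induction on $m_l$: sort $\S_l$ by increasing $a_p$ and insert the functions one at a time. When the $(k+1)$-th function $f_{k+1}$ is inserted, every earlier function is already active throughout the new function's domain $[a_{k+1},1]$, so the locus where $f_{k+1}$ is strictly less than the current envelope $L_k$ is a single interval by the half-line intersection property. Replacing $L_k$ by $f_{k+1}$ on that interval can increase the piece count by at most two (splitting at most one old piece while introducing $f_{k+1}$ itself), giving $|L|\le 2m_l-1$. An entirely symmetric argument yields $|R|\le 2m_r-1$ for the envelope of the right-viewpoint functions.

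Finally I combine. The boundaries of $L$ and $R$ together partition $e$ into at most $|L|+|R|+1$ sub-intervals, and on each one $L$ is either a single distance function or undefined, and similarly for $R$. Hence $\min(L,R)$ contributes at most two pieces per sub-interval, because any two distance functions cross at most once. Summing yields at most $2(|L|+|R|+1)\le 2((2m_l-1)+(2m_r-1)+1)=4m-2$ regions. The step that requires the most care is the inductive claim that the improvement locus in $L_k$ is a single interval; this rests precisely on the linearity of $d_p^2-d_q^2$ together with the nested prefix/suffix structure of the visibility domains, which ensures that every earlier function is active throughout the domain of the incoming one.
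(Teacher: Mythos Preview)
Your proof is correct and follows essentially the same skeleton as the paper: split $\S$ into $\S_l$ and $\S_r$, show each side's Voronoi map on $e$ has at most $2m_l-1$ (resp.\ $2m_r-1$) defined pieces, then overlay and use that any two distance functions cross at most once on $e$ to double the count.  The only real difference is in how the one-sided bound is argued: the paper traverses $e$ and classifies region changes as ``new viewpoint becomes visible'' (at most $m_l$) versus ``bisector crossing'' (at most $m_l-1$, since the losing viewpoint never returns), whereas you phrase the same thing as an incremental lower-envelope construction, inserting the functions in order of domain start and using the linearity of $d_p^2-d_q^2$ to see that each insertion's improvement locus is a single interval.  Both arguments rest on the identical fact (a bisector meets $e$ once) and on the nested suffix/prefix structure of the visibility intervals, and both land on the same intermediate and final counts.
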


\begin{proof}
Without loss of generality, we assume that $e$ has a positive
slope. We first show that $\vorvis(\T,\S_l)$ restricted to $e$ has
at most $2m_l$ regions. Suppose that we traverse $e$ from bottom
to top, and that at point $q$ on $e$ we exit
$\vorviewshed{p_i}{\S_l}$ and enter $\vorviewshed{p_j}{\S_l}$.
Then either $p_j$ becomes visible at $q$, or the bisector of $p_i$
and $p_j$ intersects $e$ at $q$. We say that $q$ is of type I in
the first case, and of type II in the second case. Since the
portion of $e$ seen by any viewpoint is connected, there are at
most $m_l$ points of type I on $e$.

In order to bound the number of type II points, suppose that at point $q$ we exit
$\vorviewshed{p_i}{\S_l}$ and enter $\vorviewshed{p_j}{\S_l}$ due
to an event of type II. Notice that both $p_i$ and $p_j$ see the
interval of $e$ from $q$ up to the topmost point of $e$.
Furthermore, viewpoint $p_j$ is closer than $p_i$ to all the
points on such interval, since the bisector of $p_i$ and $p_j$
intersects $e$ only once at $q$. Hence no other component of
$\vorviewshed{p_i}{\S_l}$ will be found on $e$. In particular,
there will be no other events produced by the intersection of $e$
with the bisector of $p_i$ and some other viewpoint of $\S_l$.
Thus there are at most $m_l-1$ points of the second type, which
means that in $e$ there are at most $2m_l-1$ points where the regions
of $\vorvis(\T,\S_l)$ change.
%and therefore
%the edge can be broken into at most $2m_l$ regions.

Analogous arguments show that $\vorvis(\T,\S_r)$ restricted to $e$
has at most $2m_r-1$ points where the regions change. 
In order to count the number of components in $\vorvis(\T,\S)$, consider the $2m-2$ points where there is a change of region in $\vorvis(\T,\S_l)$ or $\vorvis(\T,\S_r)$, and suppose that we traverse $e$ stopping at each of these points.
Take a portion of $e$ between two consecutive points, and let
$p_i$ be the viewpoint such that the interval belongs to
$\vorviewshed{p_i}{\S_l}$, and $p_j$ be the viewpoint such that
the interval belongs to $\vorviewshed{p_j}{\S_r}$. If the bisector
of $p_i$ and $p_j$ does not intersect this portion of $e$, then
the portion is a component of either $\vorviewshed{p_i}{\S}$ or
$\vorviewshed{p_j}{\S}$. If the bisector of $p_i$ and $p_j$
intersects this portion of $e$, then the portion is divided into a
component of $\vorviewshed{p_i}{\S}$ and a component of
$\vorviewshed{p_j}{\S}$. The latter case shows that each of these
$2m-1$ intervals can hold two different components and
consequently $\vorvis(\T,\S)$ restricted to $e$ can have up to
$4m-2$ regions.
\end{proof}

\begin{theorem}
The Voronoi visibility map $\vorvis(\T,\S)$, for a 1.5D terrain \T,
has maximum complexity $\Theta(mn)$.
\end{theorem}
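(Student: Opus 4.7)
The plan is to establish the two matching bounds separately. For the upper bound, I would invoke Lemma~\ref{lem:vor-edg} directly: restricted to any single edge $e\in E(\T)$ the map $\vorvis(\T,\S)$ contributes at most $4m-2$ regions. Since \T has $n-1$ edges, summing the per-edge bound immediately gives $k_v = O(mn)$. No extra accounting for vertices of \T is needed, as each terrain vertex already lies on the boundary of an edge and is counted among the endpoints of those per-edge regions.

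For the lower bound I would reuse (essentially verbatim) the configuration of Figure~\ref{fig:Quadratic} that already witnesses the $\Omega(mn)$ bound for $\colvis$. The key point is that every other edge of that construction carries $\Theta(m)$ distinct viewpoint-visibility subsets, and by choosing the viewpoints so that the bisectors of the relevant pairs all cross those edges, each \emph{colored} region actually splits into two Voronoi regions (exactly the situation analyzed at the end of the proof of Lemma~\ref{lem:vor-edg}). This gives $\Theta(m)$ Voronoi regions on each of $\Theta(n)$ edges, hence $\Omega(mn)$ in total. In fact the caption of Figure~\ref{fig:Quadratic} already records that every other edge contains four different regions of $\vorvis(\T,\S)$ in the construction used there, scaled up with $m$ viewpoints.

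The mildly subtle step, and the one I would write out most carefully, is the verification that in the Figure~\ref{fig:Quadratic}-style construction the perpendicular bisector of the two viewpoints that are ``in charge'' of consecutive colored regions on an edge really does cross that edge, so that each colored region splits. This can be arranged by placing the viewpoints at nearly the same height but with horizontal offsets that dominate the geometry of the bisectors; any small perturbation that keeps the visibility combinatorics of the $\colvis$ example intact will do. Combining the $O(mn)$ upper bound with this $\Omega(mn)$ construction yields $k_v = \Theta(mn)$, as claimed.
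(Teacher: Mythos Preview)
Your proposal takes essentially the same approach as the paper: the upper bound comes straight from Lemma~\ref{lem:vor-edg} summed over the $n-1$ edges, and the lower bound is witnessed by the construction of Figure~\ref{fig:Quadratic}. That is exactly what the paper does.

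One remark on your lower-bound explanation: the mechanism you describe---each colored region on the edge being split in two by a bisector---is unnecessary and in fact inconsistent with the figure you are citing. The caption records \emph{four} colored regions and \emph{four} Voronoi regions per edge, not eight; in that construction the colored regions and the Voronoi regions coincide rather than one refining the other. The reason the Voronoi regions differ along the edge is simply that, as you walk down the edge, the set of visible viewpoints changes (one viewpoint drops out at each step), and at each such change the closest visible viewpoint changes as well. No bisector crossings are needed, so your ``mildly subtle step'' can be dropped entirely. The direct observation you already make---that the caption asserts $m$ Voronoi regions on every other edge, scaled to arbitrary $m$ and $n$---is all that is required.
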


\begin{proof}
The upper bound follows from Lemma~\ref{lem:vor-edg}. The
lower bound is achieved by a configuration of viewpoints on a
particular terrain \T that can be repeated so that that every
other edge of \T has as many Voronoi regions as viewpoints, for
arbitrary $n$ and $m$. An example % of such scheme
is shown in
Figure~\ref{fig:Quadratic}.
\end{proof}

Further, we note that the three maps have complexity $\Omega(n)$ because the vertices of
$\T$ contribute to their complexity. Notice that if we are
interested in the output as a subdivision of the domain, rather
than the terrain, we do not necessarily require $\Omega (n)$
space, since it suffices to provide the endpoints of each region.

Additionally, it holds that $k\le \min(k_c,k_v)$, and the
configuration of Figure~\ref{fig:Quadratic} shows that $k$ can be
$\Theta(n)$ while both $k_c$ and $k_v$ are $\Theta(mn)$. On the
other hand, it is not difficult to produce examples showing that
the complexity of $\vorvis(\T,\S)$ can be higher than, lower than,
or equal to that of $\colvis (\T,\S)$.

\subsection{Algorithms to Construct the Visibility Structures}

\subsubsection{Computing the Visibility Map} \label{subs:alg-vis-map-1.5D}

To construct the visibility map we first compute the left- and right-visibility maps, and then merge them. The
\emph{left-visibility map} partitions \T into two regions: the
visible and the ``invisible'' portions of the terrain, where
\emph{visible} means visible from a viewpoint \emph{on} or
\emph{to the left} of that point of the terrain (thus in the
left-visibility map, viewpoints only see \emph{themselves} and
portions of the terrain \emph{to their right}). The
right-visibility map is defined analogously.
Next we present the construction of the left-visibility map (thus,
\emph{visible} stands for \emph{left-visible}).

\paragraph{Some properties of visibility in 1.5D terrains.} 

We say that a viewpoint $p_1$ \emph{dominates} another viewpoint $p_2$ at a given $x$-coordinate $x_1$ if for all $p \in \T$ with $x(p) \geq x_1$ it holds that: if $p_2$ sees $p$, then $p_1$ also sees $p$.
%A viewpoint that, at a given $x$-coordinate, is not dominated by any other viewpoint is called \emph{non-dominated}.

The algorithm uses a couple of consequences
of the so-called \emph{order claim}:

\begin{lemma}[Claim~2.1 in \cite{benmoche2007constantfactor}]
Let $a,b,c,$ and $d$ be four points on $\T$ such that
$x(a)<x(b)<x(c)<x(d)$. If $a$ sees $c$ and $b$ sees $d$, then $a$
sees $d$.
\end{lemma}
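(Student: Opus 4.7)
The plan is to consider the two segments $\overline{ac}$ and $\overline{bd}$, argue they meet at a common point $p^*$, show that $\overline{ad}$ passes over this crossing, and then conclude that $\overline{ad}$ lies above the terrain on the entire interval $[x(a),x(d)]$. Throughout, I write $L_{pq}$ for the linear function whose graph contains the segment $\overline{pq}$.

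First, I would establish the intersection. Since $a$ sees $c$ and $b\in\T$ with $x(b)\in[x(a),x(c)]$, we have $L_{ac}(x(b))\geq y(b)=L_{bd}(x(b))$. Symmetrically, because $b$ sees $d$, we have $L_{bd}(x(c))\geq y(c)=L_{ac}(x(c))$. Hence the linear function $L_{ac}-L_{bd}$ is nonnegative at $x(b)$ and nonpositive at $x(c)$, so by the Intermediate Value Theorem it vanishes at some $x^*\in[x(b),x(c)]$, producing the intersection point $p^*=(x^*,y^*)$ with $y^*=L_{ac}(x^*)=L_{bd}(x^*)$.

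The heart of the argument is to show $L_{ad}(x^*)\geq y^*$, i.e.\ that $p^*$ lies on or below $\overline{ad}$. I would argue by contradiction: suppose $L_{ad}(x^*)<y^*$. Then $L_{ac}$ and $L_{ad}$ share the endpoint $a$ while $L_{ac}(x^*)>L_{ad}(x^*)$ at $x^*>x(a)$, so the slope of $L_{ac}$ strictly exceeds that of $L_{ad}$, giving $L_{ac}(x)>L_{ad}(x)$ for all $x>x(a)$; symmetrically, $L_{bd}(x)>L_{ad}(x)$ for all $x<x(d)$. In particular $L_{ac}(x(d))>y(d)=L_{bd}(x(d))$ and $L_{bd}(x(a))>y(a)=L_{ac}(x(a))$. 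Therefore the linear function $L_{ac}-L_{bd}$ is strictly negative at $x(a)$, strictly positive at $x(d)$, and vanishes only at $x^*$, so $L_{ac}(x)>L_{bd}(x)$ for every $x>x^*$. If $x(c)>x^*$, evaluating at $x(c)$ gives $y(c)=L_{ac}(x(c))>L_{bd}(x(c))$, contradicting $L_{bd}(x(c))\geq y(c)$. The boundary case $x(c)=x^*$ is handled symmetrically by evaluating at $x(b)<x^*$, where the strict inequality flips to give $y(b)=L_{bd}(x(b))>L_{ac}(x(b))$, contradicting $L_{ac}(x(b))\geq y(b)$.

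With $p^*$ on or below $\overline{ad}$ in hand, the conclusion is a routine comparison. On $[x(a),x^*]$, the segments $\overline{ad}$ and $\overline{ap^*}$ agree at $a$ and satisfy $L_{ad}(x^*)\geq y^*$, so by linearity $\overline{ad}\geq\overline{ap^*}$ pointwise on this interval; since $\overline{ap^*}\subseteq\overline{ac}$ lies above $\T$, so does $\overline{ad}$ on $[x(a),x^*]$. The mirror argument on $[x^*,x(d)]$ shows $\overline{ad}\geq\overline{p^*d}\subseteq\overline{bd}$, again above $\T$. Therefore $\overline{ad}$ is above $\T$ on all of $[x(a),x(d)]$, i.e., $a$ sees $d$. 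The main obstacle is the middle step: the slope comparisons must be propagated carefully to a uniform strict inequality between $L_{ac}$ and $L_{bd}$ on the correct side of $x^*$, and the degenerate configurations where $p^*$ coincides with $b$ or $c$ must be treated separately, since there the sign change of $L_{ac}-L_{bd}$ is not ``clean'' and one must argue at the opposite endpoint of $[x(b),x(c)]$.
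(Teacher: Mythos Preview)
The paper does not give its own proof of this lemma: it is quoted verbatim as Claim~2.1 of~\cite{benmoche2007constantfactor} and used as a black box. So there is no ``paper's proof'' to compare against here.

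Your argument is correct. The three steps---locating a crossing $p^*$ of $\overline{ac}$ and $\overline{bd}$ in $[x(b),x(c)]$ via the IVT, showing $p^*$ lies on or below $\overline{ad}$ by a slope-comparison contradiction, and then dominating the terrain on $[x(a),x^*]$ and $[x^*,x(d)]$ by the segments $\overline{ac}$ and $\overline{bd}$ respectively---are exactly the standard way this ``order claim'' is proved, and your treatment of the degenerate cases $x^*=x(b)$ and $x^*=x(c)$ is sound (in particular, under the contradiction hypothesis $L_{ac}-L_{bd}$ is strictly negative at $x(a)$ and strictly positive at $x(d)$, hence not identically zero, so its unique root is $x^*$ and the strict inequalities on either side follow). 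One cosmetic remark: the closing paragraph about ``main obstacles'' reads like hedging; the cases you flag are already handled cleanly by your own argument, so you can simply drop that caveat.
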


The following observation is a direct consequence of the order claim.

\begin{observation} \label{obs:left-vis}
Let $q \in \T$ be a point visible from viewpoints
$p_i$ and $p_j$, with $p_i$ to the left of $p_j$. For any 
$w\in \T$ to the right of $q$, if $p_i$ does not see $w$, then
$p_j$ does not see $w$ either (i.e. $p_i$ dominates $p_j$ at $x(q)$).
\end{observation}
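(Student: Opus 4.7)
My plan is to derive the observation as a direct application of the order claim, working via the contrapositive. The goal then becomes: assuming that $p_j$ sees $w$, show that $p_i$ must also see $w$.

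The four points to feed into the order claim suggest themselves: take $a := p_i$, $b := p_j$, $c := q$, $d := w$. Two of the three required $x$-coordinate inequalities are stated outright---$x(p_i) < x(p_j)$ from ``$p_i$ lies to the left of $p_j$'' and $x(q) < x(w)$ from ``$w$ lies to the right of $q$''. The middle inequality $x(p_j) \le x(q)$ is implicit in the context in which the observation is used: it is invoked inside the construction of the left-visibility map in Subsection~\ref{subs:alg-vis-map-1.5D}, where all viewpoints under consideration sit on or to the left of the point currently being analyzed.

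With the ordering $x(p_i) < x(p_j) \le x(q) < x(w)$ in place, the order claim applies almost verbatim. Its hypothesis ``$a$ sees $c$'' is precisely the given fact that $p_i$ sees $q$; its hypothesis ``$b$ sees $d$'' is exactly our contrapositive assumption that $p_j$ sees $w$. Its conclusion ``$a$ sees $d$'' then yields that $p_i$ sees $w$, contradicting the supposition that $p_i$ does not see $w$ and thus completing the proof.

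The only subtlety I anticipate is the degenerate case $p_j = q$, in which the points $b$ and $c$ of the order claim collapse. This can be bypassed by choosing as witness a terrain point $c'$ strictly to the right of $p_j$ that is close enough to $p_j$ that $p_i$ still sees it; a continuity-of-visibility argument along the terrain guarantees such a $c'$ exists, and the order claim applies with $c'$ in place of $c$, again delivering $p_i$ sees $w$.
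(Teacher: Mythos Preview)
Your proposal is correct and matches the paper's approach exactly: the paper simply states that the observation is ``a direct consequence of the order claim'' and gives no further argument. You have supplied the details the paper omits, including the observation that the inequality $x(p_j)\le x(q)$ is not stated in the hypothesis but is implicit from the left-visibility context, and the treatment of the boundary case $p_j=q$.
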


It is convenient to introduce some terminology related to rays.
Given a viewpoint $p_i$ and a vertex $v_k \in \T$, the ray with origin $p_i$ and direction $\overrightarrow{p_iv_k}$ is called a \emph{shadow ray} if:
(i) $p_i$ sees $v_k$;
(ii) $p_i$ does \emph{not} see the points of \T
immediately to the right of $v_k$.
We use $\rho(p_i, v_k)$ to denote such ray.

The next corollary is illustrated in
Figure~\ref{fig:crossing-rays}.

 \begin{figure}[tb]
\centering
\includegraphics{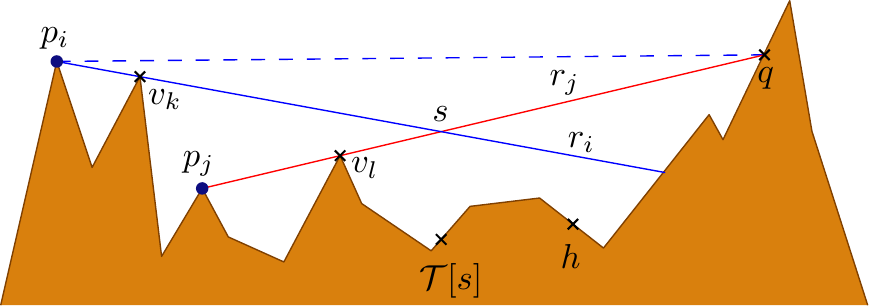}
\caption{Situation in Corollary~\ref{cor:crossing}.}
\label{fig:crossing-rays}
\end{figure}

\begin{corollary} \label{cor:crossing}
Let $r_i=\rho(p_i, v_k)$ and $r_j=\rho(p_j, v_l)$ be two shadow rays with $x(v_k)<x(v_l)$.
Suppose there exists some $h\in \T$ such that $\T(v_k,h)$ lies below
$r_i$ and $\T(v_l,h)$ lies below $r_j$.
Suppose that $r_i$ and $r_j$ cross at a point $s$ such that
$x(v_l)<x(s)<x(h)$. 
Then, for any point $w\in \T$ to the right
of $\T[s]$, if $p_i$ does not see $w$, then $p_j$ does not see
$w$ either (i.e. $p_i$ dominates $p_j$ at $x(s)$).
\end{corollary}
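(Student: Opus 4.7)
The plan is to prove the contrapositive: assuming $p_j$ sees some $w\in\T$ with $x(w)>x(s)$, I would derive that $p_i$ sees $w$, contradicting the hypothesis. The crux is that since $p_j$ sees $w$ and $v_l$ lies on $\T$ strictly between $p_j$ and $w$ (established below), the segment $\overline{p_jw}$ passes on or above $v_l$, so its slope is at least that of $r_j$, hence $y(w)\ge r_j(x(w))$. Because $r_i$ and $r_j$ meet at $s$ with $r_i>r_j$ on $(x(v_l),x(s))$, we have $r_j>r_i$ for all $x>x(s)$; therefore $y(w)>r_i(x(w))$, meaning $w$ lies strictly above $r_i$ at $x(w)$ and the slope of $\overline{p_iw}$ strictly exceeds that of $r_i$.

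Before using this, I would pin down the horizontal order. If $p_j$ were to the left of $p_i$, then $r_j(x(v_k))\ge y(v_k)=r_i(x(v_k))$ (since $p_j$ sees $v_l$ with $v_k$ in between), while $r_j(x(v_l))=y(v_l)<r_i(x(v_l))$, so the two lines would necessarily cross within $[x(v_k),x(v_l)]$, contradicting $x(s)>x(v_l)$. Hence $x(p_i)<x(p_j)$. Next, if $x(p_j)<x(v_k)$, the order claim applied to $p_i,p_j,v_k,v_l$ would force $p_i$ to see $v_l$; but $v_l\in\T(v_k,h)$ lies strictly below $r_i$, so $\overline{p_iv_l}$ would pass below the terrain at $v_k$, a contradiction. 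Therefore $x(p_i)<x(v_k)\le x(p_j)<x(v_l)<x(s)<x(w)$, placing $v_l$ between $p_j$ and $w$ as used above.

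It remains to verify that $\overline{p_iw}$ stays above the terrain on $(x(p_i),x(w))$. On $(x(p_i),x(v_k))$ the segment lies strictly above $r_i$ (same origin, larger slope), and $r_i$ is above the terrain because $p_i$ sees $v_k$. On $(x(v_k),x(p_j))$ the terrain is strictly below $r_i$ by hypothesis, while $\overline{p_iw}$ stays above $r_i$; in particular, at $x(p_j)$ the segment reaches height $L$ with $L>r_i(x(p_j))\ge y(p_j)$. Finally, on $(x(p_j),x(w))$ the segments $\overline{p_iw}$ and $\overline{p_jw}$ both terminate at $w$ but $\overline{p_iw}$ starts strictly higher at $x(p_j)$, so it lies strictly above $\overline{p_jw}$ throughout this interval; since $p_j$ sees $w$, the segment $\overline{p_jw}$ is above the terrain here, and so is $\overline{p_iw}$. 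Hence $p_i$ sees $w$, the desired contradiction. The main obstacle is that the tempting route via the order claim with a witness $c$ satisfying $p_i$ sees $c$ and $x(p_j)<x(c)<x(w)$ fails: in the relevant configuration $p_i$ sees no point of $\T$ in the open shadow interval $(v_k,h)$, so no such witness exists, and the direct slope comparison enabled by the crossing at $s$ is what makes the argument go through.
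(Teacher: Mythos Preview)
Your proof is correct. The horizontal-order argument is sound (your crossing-in-the-wrong-interval reasoning for $x(p_i)<x(p_j)$ and your order-claim reasoning for $x(v_k)\le x(p_j)$ both go through), and your three-interval decomposition $(x(p_i),x(v_k))\cup(x(v_k),x(p_j))\cup(x(p_j),x(w))$ cleanly verifies that $\overline{p_iw}$ stays above the terrain.

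The paper takes a different and somewhat shorter route. Rather than treating an arbitrary $w$ seen by $p_j$, it picks the \emph{leftmost} point $q$ to the right of $\T[s]$ visible from $p_j$, observes that this $q$ necessarily lies on $r_j$ (it is where the terrain first re-emerges above $r_j$), and then argues that $\T(p_i,q)$ lies on or below the polygonal line $p_i\,s\,q$ (the first leg is on $r_i$, the second on $r_j$), which in turn lies below the segment $\overline{p_iq}$. Hence $p_i$ sees $q$, and Observation~\ref{obs:left-vis} finishes the job for all $w$ further right. So your closing remark that ``the order-claim route fails because $p_i$ sees nothing in $(v_k,h)$'' is not quite right: the witness $c$ one needs is precisely this $q$, and the paper shows $p_i$ does see it. What your approach buys is self-containment---you never invoke Observation~\ref{obs:left-vis} at the end---at the cost of a slightly longer case analysis; what the paper's approach buys is brevity, by reducing to the already-proved observation after establishing visibility of a single well-chosen point.
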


% Previous version of the corollary
%\begin{corollary} \label{cor:crossing}
%Let $p_i$ and $v_i$ be a viewpoint and a vertex of \T,
%respectively, such that $p_i$ sees $v_i$, but not the points of \T
%immediately to the right of $v_i$. Let $r_i$ be a ray with origin
%$p_i$ and vector $\overrightarrow{p_iv_i}$. Let $p_j,v_j$ and
%$r_j$ be defined analogously. Suppose that $x(v_i)<x(v_j)$, and
%that there exists some $h\in \T$ such that $\T(v_i,h)$ lies below
%$r_i$ and $\T(v_j,h)$ lies below $r_j$ (thus, $p_i$ and $p_j$ are
%not visible throughout $\T(v_i,h)$ and $\T(v_j,h)$, respectively).
%Suppose that $r_i$ and $r_j$ cross at a point $s$ such that
%$x(v_j)<x(s)<x(h)$. Let $s^{\T}$ be the point on \T with
%$x$-coordinate $x(s)$. Then, for any point $w\in \T$ to the right
%of $s^{\T}$, if $p_i$ does not see $w$, then $p_j$ does not see
%$w$ either (i.e. $p_i$ dominates $p_j$ at $s^{\T}$).
%\end{corollary}

\begin{proof}
We start by showing that $p_i$ is to the left of $p_j$. We first
notice that $r_j$ lies below $r_i$ in the region of the plane to
the left of the vertical line $x=x(s)$: Otherwise, the point $v_l$
would lie above the ray $r_i$, contradicting the fact that
$\T(v_k,h)$ lies below $r_i$. Additionally, we have
$x(v_k)<x(p_j)$, since otherwise the point $v_k$ would lie in $\T(p_j,v_l)$ and above the ray $r_j$, blocking the visibility between $p_j$ and $v_l$.
Therefore, we conclude $x(p_i)<x(v_k)<x(p_j)$ and we are in the
situation illustrated in Figure~\ref{fig:crossing-rays}.

If $p_j$ does not see any point $q\in \T$ to the right of
$\T[s]$, the result follows. Otherwise, let $q$ be the leftmost
point in $\T$ to the right of $\T[s]$ visible from $p_j$.
$\T(p_i,q)$ lies on or below the polygonal line $p_isq$, which,
except for its endpoints, lies below the segment $p_iq$. Hence,
$q$ is visible from $p_i$. We now apply
Observation~\ref{obs:left-vis} and conclude.
\end{proof}

For a fixed $x$-coordinate, each viewpoint can have at most one shadow ray as in Corollary~\ref{cor:crossing}:

\begin{observation}\label{obs:rays}
Given a point $h \in \T$ and a viewpoint $p_i$, there is at most one vertex $v_k$ such that $\T(v_k, h)$ is below $\rho(p_i,v_k)$.
\end{observation}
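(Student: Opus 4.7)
The plan is to prove this by contradiction: I would suppose two distinct vertices $v_k$ and $v_l$ both satisfy the stated condition and then show that $v_k$ blocks the line of sight from $p_i$ to $v_l$, contradicting the fact that $\rho(p_i,v_l)$ is a shadow ray (which requires $p_i$ to see $v_l$).

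Without loss of generality I would assume $x(v_k)<x(v_l)$, with both strictly less than $x(h)$, since otherwise one of $\T(v_k,h)$, $\T(v_l,h)$ is empty and the hypothesis is vacuous. Since $\rho(p_i,v_l)$ is a shadow ray, in particular $p_i$ sees $v_l$. On the other hand, $v_l$ lies on $\T(v_k,h)$, and the hypothesis for $v_k$ says that this whole portion of the terrain sits strictly below $\rho(p_i,v_k)$, so $v_l$ lies strictly below that ray.

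Next I would exploit that we are in the left-visibility setting, so $x(p_i)\le x(v_k)<x(v_l)$ and the ray $\rho(p_i,v_k)$ emanates to the right from $p_i$. Because the segment $\overline{p_iv_l}$ and the ray $\rho(p_i,v_k)$ share their left endpoint $p_i$ but the segment ends strictly below the ray at $x=x(v_l)$, the segment has strictly smaller slope. Evaluating at the intermediate abscissa $x(v_k)$ then places $\overline{p_iv_l}$ strictly below the terrain vertex $v_k$, so $v_k$ lies strictly above $\overline{p_iv_l}$. This blocks the line of sight from $p_i$ to $v_l$, the desired contradiction.

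I expect the only subtlety to be the degenerate case in which $p_i$, $v_k$, $v_l$ are collinear, so that $v_l$ lies exactly on $\rho(p_i,v_k)$ rather than strictly below it. Reading ``below'' in the statement as \emph{strictly} below (which is the only interpretation under which the observation holds) rules this out; otherwise a standard general-position assumption dispenses with it. So the main work is just the slope comparison, and the principal thing to be careful about is keeping the left/right conventions of the left-visibility map consistent throughout.
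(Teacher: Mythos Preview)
Your argument is correct. The paper states this as an observation without proof, so there is no proof in the paper to compare against; your contradiction argument via a slope comparison is exactly the natural way to justify it, and the details are right: with $x(p_i)<x(v_k)<x(v_l)$, having $v_l$ strictly below $\rho(p_i,v_k)$ forces $\overline{p_iv_l}$ to have smaller slope than $\overline{p_iv_k}$, hence to pass strictly below the terrain point $v_k$, contradicting that $p_i$ sees $v_l$.

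Two minor remarks. First, you can sharpen $x(p_i)\le x(v_k)$ to a strict inequality: if $p_i$ coincided with $v_k$ then $p_i$ would see the points of $\T$ immediately to the right of $v_k$, violating condition~(ii) in the definition of a shadow ray. Second, your handling of the collinear degeneracy is fine; in the paper's usage the terrain portion $\T(v_k,h)$ is genuinely in shadow and hence strictly below $\rho(p_i,v_k)$, which is the reading under which the observation is meant.
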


Therefore for simplicity we use $r_i$ to denote such ray for viewpoint $p_i$, assuming it exists.

\paragraph{Description of the algorithm.} The algorithm sweeps the terrain from left to right while
maintaining some information.  
Most notably, we maintain a set $L$ of rays 
corresponding to a (sub)set of viewpoints that are not visible at the
moment\footnote{Often in this section we use \emph{visible} to refer to visibility from the intersection of the sweep line and the terrain.} (possibly, $L=\emptyset$). More precisely, the set $L$ at $x=x(h)$ (for $h \in
\T$) is defined as

$L = \{ \rho(p_i, v_k) $ such that $T(v_k, h)$ lies below  $\rho(p_i, v_k)$ and $p_i$ has not been detected to be dominated at $x=x(h)\}$.

% L = {\rho(p_i, v_k) such that: T[v_k, h] lies below ray(p_i, v_k), and p_i is independent at x=h}

The general idea is: If, while sweeping through $h \in
\T$, we have $r_i=\rho(p_i, v_k)\in L$, then $\T(v_k,h)$ lies below
$r_i$. If the terrain crosses $r_i$ at some point
to the right of $h$, then it becomes visible from
$p_i$, so we are interested in detecting such intersection between $r_i$ and the terrain.
  
In order to decrease the size of $L$ and obtain a better running time, $L$ does not contain rays associated to viewpoints that, at some previous point of the sweep, have been detected to be dominated. In general, there are several distinct situations where one viewpoint dominates another; our algorithm only detects the ones described in Observation~\ref{obs:left-vis} and Corollary~\ref{cor:crossing}, but this is enough to achieve a running time of $O(n+m\log m)$.

%Therefore, in order to detect these situations, in our algorithm we will look for intersections between \T and rays in $L$.

During the sweep we also maintain some special viewpoints. If the
terrain is currently visible, we keep the leftmost visible
viewpoint, which we call \emph{primary} viewpoint and denote by
$p_a$. If the terrain is not visible, we maintain another
viewpoint, which we call \emph{secondary} viewpoint and denote by
$p_b$. The secondary viewpoint belongs to the set $\{ p_i\, |\,
r_i\in L\}$, and it is the viewpoint that is \emph{more likely} to
become visible around the portion of the terrain that we are
examining. More precisely, if we are sweeping through point $h \in \T$,
then $r_b$ is defined as the lowest ray in $L$ at $x=x(h)$. When the
terrain is not visible, we define $L'=L\backslash \{r_b\}$.
Otherwise, we set $L'=L$. See Figure~\ref{fig:second-view} for an illustration.

At any moment of the sweep, we know the
lowermost ray in $L'$. As we will see, the lowermost ray in $L$ (that is, the ray associated with the secondary viewpoint) can change $\Theta(n)$ times, while the lowermost ray in $L'$ can only change $O(m)$ times. Since handling these events takes logarithmic time, it turns out that maintaining the lowermost
ray in $L'$ instead of the lowermost ray $L$ is key for 
achieving a running time of $O(n+m\log m)$.

\begin{figure}[tb]
\centering
\includegraphics{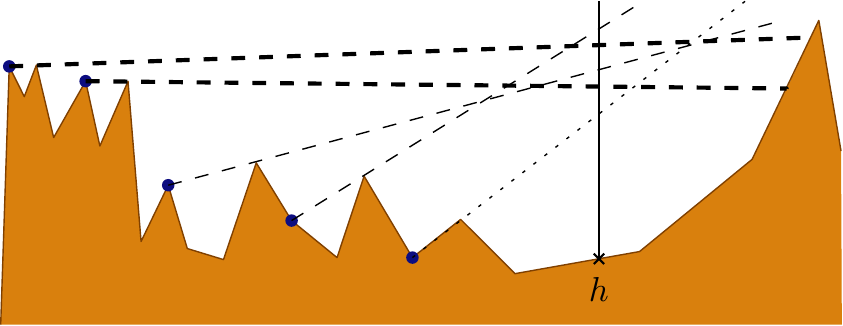}
\caption{When sweeping point $h$, the secondary viewpoint is the one whose ray is dotted. $L'$ contains only the thicker dashed rays (since, by Corollary~\ref{cor:crossing}, the viewpoints of the other ones are dominated).}
\label{fig:second-view}
\end{figure}

Finally, we maintain a boolean $\nu$ indicating whether the terrain
is currently visible or not.

The algorithm begins at the leftmost vertex, and starts sweeping the terrain as
explained below. For the sake of simplicity, in the following
description we assume that we know at any time which is the
lowermost ray in $L'$ and, if $\nu$ = \textsc{False}, which is the secondary
viewpoint. We will explain how to keep this information updated
later.

Initially, $L=L'=\emptyset$, $\nu$ = \textsc{False}, and $p_a=p_b=\bot$. Then the sweep begins at the leftmost vertex and stops at three types of events: (i) terrain
vertices; (ii) intersections between $r_b$ and the lowermost ray in $L'$ (in
which case the secondary viewpoint changes); (iii) intersections between the
lowermost ray in $L'$ and the second lowermost (in which case the lowermost ray
in $L'$ changes). Event (ii) is treated as follows: Suppose that we are about
to process edge $wv$ and we detect, say, that the secondary viewpoint changes
at $x=\alpha$, where $x(w)<\alpha<x(v)$, because $r_b$ intersects the lowermost
ray in $L'$ at $x=\alpha$.  Then we simply subdivide $wv$ into $w\T[\alpha]$
and $T[\alpha]v$ (that is, we ``add" a vertex to the terrain) and run two
iterations of the algorithm below, each with the appropriate secondary
viewpoint. Event (iii) is treated analogously. Therefore, in the explanation
below we assume that all iterations of the algorithm start when we encounter a
new vertex of the terrain.

We now explain an iteration of the algorithm. Let $w$ be the
vertex preceding $v$ in \T. We treat the interior of the edge $wv$
and the vertex $v$ separately.

\textbf{Detecting visibility changes in the interior of the edge
$wv$.} We distinguish several cases:
\begin{itemize} %[noitemsep,topsep=0pt,parsep=0pt,partopsep=0pt]
  \item[(i.1)] $\nu$ = \textsc{False} and $L=\emptyset$. We do nothing.
  \item[(i.2)] $\nu$ = \textsc{False} and $L\neq \emptyset$. We check whether the ray $r_b$ intersects the edge
  $wv$.
  %, and whether the secondary viewpoint changes in the interval $[x(w),x(v)]$. If nothing happens, we move to the
%  next vertex of the terrain.
%  In case some of these events happen, we determine which happens first.
%  If the change of the secondary viewpoint happens first, we make the necessary updates and go back to the beginning of this case.
  In the affirmative, we compute the point of intersection and we set $\nu$ = \textsc{True} at that point. 
The viewpoint that was secondary, $p_b$, becomes the
  primary viewpoint, and $r_b$ is removed from $L$.
  We continue as in (i.3) or (i.4), depending on $L$ being empty or not.
 \item[(i.3)] $\nu$ = \textsc{True} and $L=\emptyset$. We do nothing.
 \item[(i.4)] $\nu$ = \textsc{True} and $L\neq \emptyset$. We check whether the lowermost
 ray $r_j$ 
 of $L'$ at $[x(w),x(v)]$ intersects the edge $wv$. If it does, we remove $r_j$ from $L'$ and find the new lowermost
ray of $L'$. Additionally, if $p_j$ is to the left of $p_a$, we
set $p_a=p_j$. We continue as in (i.3) or (i.4), depending on $L$
being empty or not.
\end{itemize}

\textbf{Dealing with the vertex $v$.}
 Let us first suppose that no viewpoint lies on $v$. We distinguish the following
 cases:
 \begin{itemize}%[noitemsep,topsep=0pt,parsep=2pt,partopsep=0pt]
 \item[(ii.1)] $\nu$ = \textsc{False}. We do nothing.
 \item[(ii.2)] $\nu$ = \textsc{True} and $p_a$ continues being visible right after $v$. We do nothing.
 \item[(ii.3)] $\nu$ = \textsc{True} and $p_a$ stops being visible right after $v$. Viewpoint $p_a$ becomes the secondary viewpoint, and we add $\rho(p_a, v)$ to $L$. Additionally, we set
 $\nu$ = \textsc{False}.
 \end{itemize}

On the contrary, suppose that a viewpoint $p_i$ lies on $v$:
 \begin{itemize}%[noitemsep,topsep=0pt,parsep=0pt,partopsep=0pt]
 \item[(ii.4)] $\nu$ = \textsc{False}. We set $\nu$ = \textsc{True} and $p_a=p_i$. If $L\neq \emptyset$, then $r_b$ is added to $L'$ and
the lowermost ray of $L'$ becomes $r_b$.
 There is no longer a
 secondary viewpoint.
 \item[(ii.5)] $\nu$ = \textsc{True} and $p_a$ continues being visible right after $v$. We do
 nothing.
 \item[(ii.6)] $\nu$ = \textsc{True} and $p_a$ stops being visible right after $v$.  We add $\rho(p_a, v)$ to $L'$, and update the
lowermost ray of $L'$. We set $p_a=p_i$.
 \end{itemize}

At this point, it only remains to explain the way we maintain the lowermost ray in $L'$ and, if $\nu$ = \textsc{False}, the secondary
viewpoint. We start with the former. 

\textbf{Maintaining the lowermost ray in $L'$.}
Knowing the lowermost ray in $L'$ during the whole sweep is equivalent to maintaining the lower envelope of $L'$.
We use a modification of Bentley-Ottmann's algorithm for line-segment intersections~\cite{bo-arcgi-79}, run on the set of rays that at some point belong to $L'$.
The algorithm essentially computes the intersections between the rays in $L'$ as the terrain is swept.
The sweep line status structure allows to retrieve the lowest ray in $L'$ at any time.  

The sweep line and the event queue are implemented using the standard data structures (i.e. a binary search tree and a priority queue).\footnote{In our case the sweep line could be represented with a simpler structure, like a doubly-linked list, but this would not affect the overall running time.} Next we argue that the overall running time of the sweep is only $O(m \log m)$.

First note that, by Observation~\ref{obs:rays}, at any moment of the sweep, $L'$ contains $O(m)$ rays.
Moreover, by Corollary~\ref{cor:crossing}, every time the sweep line goes through the intersection of two rays, the one corresponding to the viewpoint more to the right becomes dominated by the other one, so that ray will not be in $L'$ from that moment on, and what is even more important, no ray from that viewpoint will.
Thus the total number of intersections considered by the algorithm is $O(m)$.

%Second, in our case, every time a new ray intersects the sweep line, it does it at the bottom, below all previous rays. 
%Similarly, if during the sweep a ray in the lower envelope of $L'$ ends (i.e. a ray has to be removed), it must be the lowermost one.
%Therefore all insertions and deletions among the rays intersecting the sweep line occur at the bottom end.

The other types of events are insertions and deletions of rays. 
In total, we make at most $m$ insertions to $L'$: Indeed, we only add a ray in cases (ii.4) or (ii.6), and we can charge the insertion to viewpoint $p_i$, which is traversed at that point by the sweep line. Analogously, the number of deletions is $O(m)$ as well.

%In (i.4): We remove $r_j$ from $L'$ and find the new lowermost ray
%of $L'$. We charge this to the viewpoint that ``disappears from
%the algorithm".

Each insertion or deletion operation in the event queue has cost $O(\log m)$, since the queue only contains intersection events about rays that are  consecutive along the sweep line, and there can be at most $m$ rays intersected by the sweep line at a given time.
Since the total number of events processed is $O(m)$, it follows that the total time spent on maintaining the lower envelope of $L'$ is $O(m \log m)$.

Note that, even though we have presented this sweep line algorithm separately,
it should be interleaved with the main sweep line algorithm described in the previous
paragraphs.

\textbf{Maintaining the secondary viewpoint.}
On top of the updates caused by cases (i.2), (ii.3) and (ii.4), which take $O(1)$ time, we do the following: Every time that there is a new secondary viewpoint or a new lowermost ray in $L'$, we check whether $r_b$ intersects this lowermost ray $r_j$. In the affirmative, we keep the $x$-coordinate of the intersection point and, every time that a new iteration starts, we check whether this $x$-coordinate lies between the $x$-coordinates of the two extremes of the edge. If these happens and at that point the secondary viewpoint and the lowermost ray in $L'$  have not changed, then $p_j$ becomes the new secondary viewpoint. Thus, $r_j$ is removed from $L'$, and the lowermost ray in $L'$ is updated. Notice that the ray
corresponding to the old secondary viewpoint is not added to $L'$ because it
corresponds to a viewpoint that is now dominated.

As mentioned earlier, these operations are performed every time that there is a new secondary viewpoint or a new lowermost ray in $L'$. If there is a new secondary viewpoint $p_b$ caused by event (ii.3), we can associate it to the vertex $v$ such that $r_b=\rho(p_b, v)$. If there is a new secondary viewpoint because $r_b$ intersects the lowermost ray $r_j$ in $L'$, we can associate it to the old secondary viewpoint, which becomes dominated and thus disappears from the algorithm. This shows that the secondary viewpoint changes $O(n+m)$ times throughout the whole algorithm. On the other hand, we already know that the lowermost ray in $L'$ changes at most $O(m)$ times. 
%A new secondary viewpoint $p_b$ can be associated to the vertex $v_k$ such that $r_b=\rho(p_b, v_k)$, while we already know that, during the algorithm, the lowermost ray in $L'$ changes at most $O(m)$ times. 
Thus, the operations described in the paragraph above are globally done in $O(n+m)$ time.

\paragraph{Correctness and running time.} 

The correctness of the method follows from the fact that all changes in the terrain between visible and invisible are detected. Observation~\ref{obs:left-vis} guarantees that it is enough
to keep track of only the leftmost visible viewpoint, which we use in (i.4) and (ii.5). Finally, Corollary~\ref{cor:crossing} shows that, whenever two rays in $L$ cross, one of them stops being relevant for the algorithm. We use this property in the definition of $L$.

Next we analyze the running time. As seen before, maintaining the lowermost ray in $L'$ at any time can be done in $O(m\log m)$ time. 
Notice that the number of insertions to $L$ can be as high as $\Theta(n)$, so if we maintained the lowermost ray of $L$ instead of that of $L'$, the running time would increase to $O(n\log m)$. 
This is the reason for keeping the secondary viewpoint separate from the remaining rays in $L$, and maintaining $L'$ instead of $L$. 

Other than that, we spend constant time per iteration. Recall that the number of iterations is bounded by the sum of:
(i) the number of vertices in $\T$,
(ii) the number of times that there is a new non-empty secondary viewpoint,
(iii) the number of times that there is a new non-empty lowermost ray of $L'$,
(iv) the number of times that we are in event (i.2) and $r_b$ intersects 
  $wv$,
(v) the number of times that we are in event (i.4) and the lowermost
 ray $r_j$ 
 of $L'$ intersects $wv$. We have already argued that the sum of (ii) and (iii) is $O(n+m)$. Furthermore, (v) can be seen as a particular case of (iii). Regarding (iv), this case leads to a change in the status of the terrain, from invisible to visible, which happens $O(n)$ times during the whole algorithm. Consequently, we have that the total number of iterations is $O(n+m)$. 

%Since every time that one of the two last events happens a viewpoint disappears from the algorithm, we have that the number of iterations is $O(n+m)$. 

%When the secondary viewpoint changes: All the operations required (including finding the new lowermost ray in $L'$, etc) are charged to the old $p_b$, which is gone for ever in the algorithm. 

To conclude, we observe that the
right-visibility map can be computed analogously. We finally merge
the two maps in $O(n)$ time and obtain the visibility map. Note that the algorithm can also be modified to
output, for each visible region, a set of viewpoints that cover that region. 

We obtain the following theorem:

\begin{theorem} \label{thm:1.5-map-alg}
Given a 1.5D terrain \T, the visibility map  $\vis ({\cal T},\G)$ can be constructed in $O(n+m \log m)$ time.
\end{theorem}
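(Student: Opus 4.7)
The plan is to establish correctness and running time of the sweep-line algorithm described above, then obtain the right-visibility map by a symmetric sweep and merge the two maps in linear time. To verify correctness, I would show the following invariant is maintained throughout the left-to-right sweep at position $x(h)$: the boolean $\nu$ correctly records whether $\T[h]$ is left-visible; when $\nu$ is \textsc{True}, $p_a$ is the leftmost viewpoint seeing $\T[h]$; and $L$ contains precisely the shadow rays $\rho(p_i,v_k)$ for viewpoints $p_i$ not yet known to be dominated for which $\T(v_k,h)$ currently lies below the ray. Observation~\ref{obs:rays} guarantees each viewpoint contributes at most one ray, and Observation~\ref{obs:left-vis} justifies that once $p_a$ becomes visible, we can safely discard information about viewpoints to its right that are also visible. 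Corollary~\ref{cor:crossing} justifies that whenever two rays in $L$ cross, the one whose viewpoint is further right is dominated from that point on, so $L$ may shrink rather than grow.

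Given this invariant, I would verify that the six cases (i.1)--(ii.6) correctly update $\nu$, $p_a$, and $L$ at all possible visibility transitions: any change in visibility in the interior of an edge $wv$ can only come from a shadow ray in $L$ meeting $wv$ (cases (i.2) and (i.4)), and any change at a vertex $v$ is handled directly by checking whether $p_a$ continues to see past $v$ or whether a viewpoint sits at $v$ (cases (ii.3)--(ii.6)). This shows every output transition is produced.

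For the running time, the accounting I would carry out is: there are $O(n)$ vertex-events and, by the argument already foreshadowed in the excerpt, the lowermost ray of $L'$ changes $O(m)$ times. The latter follows by combining Observation~\ref{obs:rays} (so $|L'|=O(m)$ at all times) with Corollary~\ref{cor:crossing} (so each crossing event in the lower envelope permanently eliminates one viewpoint from $L'$), and by charging each insertion into $L'$ to the viewpoint traversed in cases (ii.4) or (ii.6). Running a Bentley--Ottmann-style sweep on $L'$ therefore processes $O(m)$ events at $O(\log m)$ cost each. The secondary-viewpoint updates can change $O(n+m)$ times but each costs $O(1)$ because we only need to test, at each vertex-event, whether a previously stored intersection $x$-coordinate between $r_b$ and the current lowermost ray of $L'$ has been crossed. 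Summing vertex-events ($O(n)$), envelope-events ($O(m \log m)$), and secondary-viewpoint updates ($O(n+m)$) gives $O(n+m\log m)$.

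The main obstacle in the analysis is separating the secondary viewpoint from the rest of $L$: if one maintained the full lower envelope of $L$, then $\Theta(n)$ insertions into $L$ (one per vertex where some viewpoint stops being visible) could each cost $\Theta(\log m)$, yielding $O(n \log m)$ instead of $O(n + m \log m)$. I would therefore devote care to showing that, although the secondary ray $r_b$ itself may change $\Theta(n)$ times, these changes are detected in $O(1)$ time per vertex via a single stored intersection $x$-coordinate, and that $L'$ (which drives the logarithmic work) sees only $O(m)$ insertions and deletions overall. Once this is established, applying the symmetric algorithm for the right-visibility map and merging both maps along the sorted terrain vertices in $O(n)$ time completes the bound.
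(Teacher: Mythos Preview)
Your proposal is correct and follows essentially the same approach as the paper's own proof: the same invariant, the same use of Observation~\ref{obs:left-vis} and Corollary~\ref{cor:crossing}, the same Bentley--Ottmann accounting on $L'$ yielding $O(m\log m)$, and the same identification of the $L$ vs.\ $L'$ distinction as the crux for avoiding $O(n\log m)$. Your breakdown of the iteration count and the final merge step also mirror the paper's analysis.
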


\subsubsection{Computing the Colored Visibility Map}
\label{subsec:1.5-alg-col}

In this subsection, we assume that no three vertices of $\T$ are aligned. This assumption is made to avoid certain configurations of vertices and viewpoints that can generate visibility regions where a great number of viewpoints become visible, but that consist of a single point. Such situations complicate the presentation of the algorithm and do not have practical utility.

The computation of the colored visibility map is similar to that
of $\vis ({\cal T},\G)$, with the extra requirement of having to
maintain all visible viewpoints during the sweep. We first compute the left- and right- colored visibility maps, and then we merge them. We now explain the computation of the left-colored visibility map (thus,
\emph{visible} stands for \emph{left-visible}).

As in the algorithm for $\vis ({\cal T},\G)$, we sweep the terrain from left to right.
The sweep itself, however, becomes much simpler, since it must stop every time a viewpoint changes its visibility status.
The state of the sweep line is maintained with a balanced binary tree containing all currently visible viewpoints, sorted by $x$-coordinate.
%; so each operation in the list will take $O(\log m)$ time.

The events of the sweep, kept in a priority queue, can be of two types: $x$-coordinates of terrain vertices (\emph{vertex events}), and $x$-coordinates of points of the terrain where some viewpoint becomes visible (\emph{viewpoint events}). In particular, we do not maintain a set of shadow rays as in the algorithm for $\vis ({\cal T},\G)$; instead, every time that a viewpoint stops being visible, we compute the first point (if any) of the terrain where it becomes visible again, and we add an event at that point. The details are explained below.

By the assumption that no three vertices of $\T$ are aligned, when a vertex event and a viewpoint event have the same $x$-coordinate, then the viewpoint event is caused by a viewpoint lying precisely in that position (in other words, no viewpoint can become visible again at another vertex). In this case, where two events have the same $x$-coordinate, vertex events are
processed first. 
Viewpoint events are handled by updating
the list of visible viewpoints. 
In vertex events we compute all
viewpoints that become invisible after the vertex, and identify
where they reappear. 
The latter is computed by a ray-shooting query with the corresponding shadow ray of each disappearing viewpoint.
Observation~\ref{obs:left-vis} implies that all the viewpoints that become invisible at a given vertex are contiguous in the list of visible
viewpoints (i.e. if a viewpoint becomes invisible, then all viewpoints to its right also do),
% (i.e. they have increasing $x$-coordinates), 
hence the leftmost visible viewpoint that disappears at a given vertex can be located in $O(\log m)$ time.
Moreover, the rightmost visible viewpoint is always included, thus the interval of visible viewpoints that become invisible at a given vertex can be determined in $O(\log m)$ time. 

For each of these viewpoints $p_i$, we spend $O(\log m)$ time to remove $p_i$ from the binary tree containing the visible viewpoints. Additionally, we perform a ray-shooting query to detect the point where $p_i$ reappears. By preprocessing the terrain (seen as a simple polygon) in $O(n)$ time and space, ray-shooting queries are answered in $O(\log n)$ time~\cite{cegghss-rspugt-94}. Finally, we add a viewpoint event at the point where $p_i$ reappears, which takes $O(\log n)$ time. So we can process all viewpoints that become invisible in
$O(\log m+\log n)=O(\log n)$ time for each of them.

The time spent on each viewpoint that disappears at a given vertex is charged to the point where the viewpoint becomes visible again, which is the starting point of a new region of $\colvis ({\cal T},\G)$.
Note that it can happen that several viewpoints reappear at exactly the same point and thus the computations done for these viewpoints are all charged to the starting point of the same region of $\colvis ({\cal T},\G)$.
To upper-bound the total number of points where viewpoints reappear simultaneously, we use the following lemma:

\begin{lemma} \label{lem:simul-vis}
For every pair of viewpoints $p_i$ and $p_j$, there exists at most one point $q \in \T$ 
%\new{to the right of $p_i$ and $p_j$} 
such that $p_i$ and $p_j$ change from invisible to visible at $q$.
\end{lemma}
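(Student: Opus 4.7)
The plan is to argue by contradiction. Assume there are two distinct points $q_1, q_2 \in \T$ with $x(q_1) < x(q_2)$ at which both $p_i$ and $p_j$ switch from invisible to visible, and suppose without loss of generality that $x(p_i) < x(p_j)$; both viewpoints then lie at or to the left of $q_1$, so $x(p_i) < x(p_j) \le x(q_1) < x(q_2)$. The strategy is to analyze the two shadow rays that end at $q_2$ and derive a contradiction with the no-three-vertices-aligned assumption.

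My first step is to identify these two shadow rays. Since $p_i$ is invisible at terrain points immediately to the left of $q_2$ and visible at $q_2$, Observation~\ref{obs:rays} supplies a unique ``blocking'' vertex $s_i$ such that $\T(s_i,q_2)$ lies below $\rho(p_i,s_i)$ and the ray $\rho(p_i,s_i)$ re-meets the terrain exactly at $q_2$. Analogously there is a vertex $s_j$ with $\rho(p_j,s_j)$ passing through $q_2$. I would then verify that both $s_i$ and $s_j$ lie in the open interval $(q_1,q_2)$: if $s_i$ were at or to the left of $q_1$, then $q_1$ would either be pushed into the shadow of $s_i$ (contradicting that $p_i$ sees $q_1$) or forced onto the ray $\rho(p_i,s_i)$ itself, which either aligns three vertices or makes the shadow of $s_i$ end at $q_1$ rather than at $q_2$.

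With this in place, I would split into two cases. If $s_i=s_j$, the rays $\rho(p_i,s_i)$ and $\rho(p_j,s_j)$ share the two distinct points $s_i$ and $q_2$, so $p_i$, $p_j$, and $s_i$ are collinear -- three aligned vertices, contradiction. If $s_i\ne s_j$, consider the lines $\ell_i=\overline{p_iq_2}$ (which contains $s_i$) and $\ell_j=\overline{p_jq_2}$ (which contains $s_j$). Because $p_i$ sees $q_2$, the vertex $p_j$ lies on or below $\ell_i$ at $x(p_j)$. The ``on'' subcase again aligns three vertices among $\{p_i,p_j,s_i,q_2\}$. In the strict ``below'' subcase, $\ell_j$ has strictly larger slope than $\ell_i$ and both meet at $q_2$, so $\ell_j$ lies strictly below $\ell_i$ throughout $(x(p_j),x(q_2))$; evaluating at $x(s_i)$ then shows that $s_i$ sits strictly above the segment $\overline{p_jq_2}$, blocking the line of sight from $p_j$ to $q_2$ and contradicting that $p_j$ sees $q_2$.

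The main obstacle will be the preliminary step: making rigorous that the shadow ray $\rho(p_i,s_i)$ really does terminate at $q_2$, rather than at some earlier terrain re-intersection or via a more intricate switch of obstructing vertices near $q_2$. Once that geometric characterization is in hand, the remaining case analysis is short, with every degenerate collinearity absorbed by the no-three-aligned hypothesis.
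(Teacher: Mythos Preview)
Your argument is correct, but it follows a different path from the paper's. The paper does not set up a contradiction between two hypothetical points $q_1<q_2$. Instead, for a \emph{single} point $q$ where both viewpoints reappear, it shows that the blocking vertex $v_k$ on $\rho(p_i,v_k)$ necessarily satisfies $x(v_k)<x(p_j)$ (otherwise $v_k$ would block $p_j$ from $q$), so that $\T(v_k,p_j)$ lies below $\rho(p_i,v_k)$. Observation~\ref{obs:rays}, applied with $h=p_j$, then gives uniqueness of $v_k$ and hence of $q$ immediately.

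Your route is in a sense the contrapositive of the same geometry: you use the existence of $q_1$ to force $s_i$ \emph{to the right} of $p_j$, and then run the very same ``$s_i$ blocks $p_j$ from $q_2$'' argument that the paper uses (in the other direction) to place $v_k$ to the left of $p_j$. So the core geometric step is identical; what differs is the packaging. The paper's version is a bit shorter because Observation~\ref{obs:rays} does the bookkeeping for free, whereas you have to locate $s_i$ in $(q_1,q_2)$ by hand. Conversely, your version is self-contained and does not need to quote Observation~\ref{obs:rays}. Two minor remarks: your case split on $s_i=s_j$ versus $s_i\ne s_j$ is unnecessary (your Case~2 argument never uses $s_j$ and already covers Case~1), and the ``main obstacle'' you flag---that $\rho(p_i,s_i)$ re-hits the terrain exactly at $q_2$---is just the standard characterization of a reappearance point and is taken for granted in the paper.
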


\begin{proof}
%For the sake of simplicity, in this proof we assume that no three vertices of $\T$ are aligned, but the result also holds if this degenerate situation occurs.
%\maria{The proof has changed a bit so that now it also holds if three vertices of $\T$ are aligned.}
Let us suppose that $p_i$ is to the left of $p_j$. Let $q$ be a point in $\T$ such that $p_i$ and $p_j$ change from invisible to visible at $q$. 
Clearly, $x(p_i)< x(q)$. Therefore, there 
exists a vertex $v_k\in \T$ such that $\rho(p_i,v_k)$ is a shadow ray intersecting $\T$ at $q$.
We notice that $x(v_k)\leq x(p_j)$: Otherwise, $p_j$ would be below the segment $p_iv_k$ and $v_k$ would block the visibility between $p_j$ and $q$ (see Figure~\ref{fig:reappearing-sim}, left). Since we are assuming that no three vertices of $\T$ are aligned, we have that $p_j\neq v_k$ and $p_j\neq q$, so $x(v_k)<x(p_j)< x(q)$. We conclude that $\T(v_k, p_j)$ is below $\rho(p_i,v_k)$ (see Figure~\ref{fig:reappearing-sim}, right).

If there existed another point $q'\in \T$ such that $p_i$ and $p_j$ change from invisible to visible at $q'$, then there would exist another vertex $v_l\in \T$ ($l\neq k$) such that $\T(v_l, p_j)$ is below $\rho(p_i,v_l)$.
This would imply that at $p_j$ there would be two shadow rays corresponding to $p_i$, contradicting Observation~\ref{obs:rays}.
\end{proof}

\begin{figure}[tb]
\centering
\includegraphics{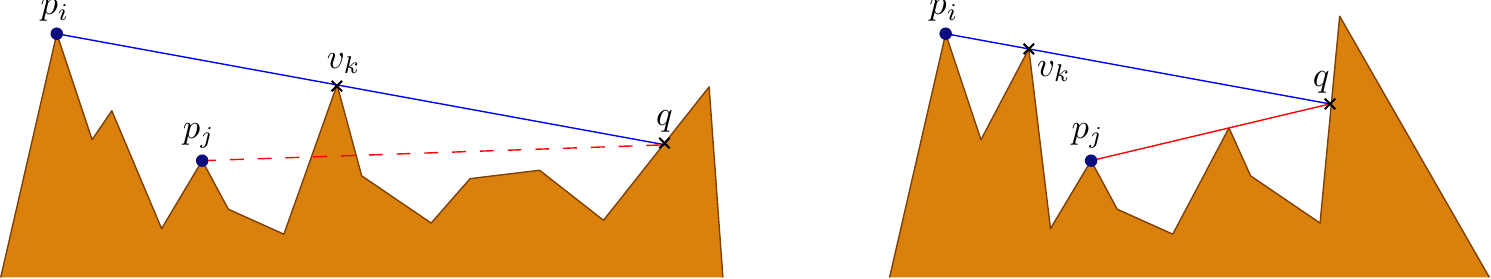}
\caption{Proof of Lemma~\ref{lem:simul-vis}. Left: If $x(v_k)> x(p_j)$, then $p_j$ does not see $q$. Right: $\T(v_k, p_j)$ is below $\rho(p_i,v_k)$.}
\label{fig:reappearing-sim}
\end{figure}

In consequence, a pair of
viewpoints cannot reappear simultaneously more than once. 
Thus the total number of points of $\T$ where more than one viewpoint reappear simultaneously is $O(m^2)$. This leads to the following
result. 

\begin{theorem} \label{thm:1.5-colvis-alg}
Given a 1.5D terrain \T, the colored visibility map $\colvis ({\cal T},\G)$ can be
constructed in $O(n + (m^2 + k_c )\log n)$ time, where $k_c$ is the complexity of $\colvis(\T,\S)$.\footnote{With a
finer analysis, the $O(m^2)$ term can be reduced to $O(m\sqrt{\eta})$, for
$\eta$ the number of points of $\T$ where more than one
viewpoint become visible ($\eta\leq \binom{m}{2}$).}
\end{theorem}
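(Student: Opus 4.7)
The plan is to verify the correctness of the sweep and then to charge the total cost in the way already foreshadowed by the algorithm description, using Lemma~\ref{lem:simul-vis} as the main combinatorial lever.

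For correctness, I would argue that the algorithm maintains throughout the sweep the exact set of viewpoints currently left-visible. This invariant is preserved at every event: viewpoint events insert the newly appearing viewpoints into the balanced binary tree, and vertex events remove those viewpoints that stop being visible at the vertex. By Observation~\ref{obs:left-vis}, the viewpoints that become invisible at a vertex form a contiguous suffix of the sorted list of visible viewpoints, so the leftmost one can be located by a binary search in $O(\log m)$ time and the rest are then scanned in order. For each of them, a single ray-shooting query computes the next point at which it reappears, and a viewpoint event is scheduled there; this guarantees that no reappearance is missed. The genericity assumption that no three vertices of \T are aligned, together with the convention of processing vertex events before viewpoint events at the same $x$-coordinate, rules out degenerate cases.

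For the running time, first note that the terrain, viewed as a simple polygon, can be preprocessed in $O(n)$ time and space so that ray-shooting queries are answered in $O(\log n)$ time~\cite{cegghss-rspugt-94}. The $n$ vertex events are given in sorted order along $\T$ and need not be placed in the priority queue, so their base handling cost is $O(n)$. All remaining work is per viewpoint-disappearance: each such event costs $O(\log n)$ (a tree deletion, a ray-shooting query, and a priority-queue insertion). Let $B$ denote the total number of viewpoint-disappearance events; the total cost is then $O(n + B\log n)$, and it remains to show $B = O(m^2 + k_c)$.

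The main step is the charging argument. Each disappearance of a viewpoint is matched with its reappearance at some later point of $\T$. Let $c_q$ denote the number of viewpoints that reappear simultaneously at a point $q \in \T$, so that $B = \sum_q c_q$. By Lemma~\ref{lem:simul-vis}, each unordered pair of viewpoints contributes to at most one such coincidence, hence $\sum_q \binom{c_q}{2} \leq \binom{m}{2}$. Using $c_q \leq 2\binom{c_q}{2}$ whenever $c_q \geq 2$, we obtain $\sum_{c_q \geq 2} c_q = O(m^2)$. The remaining reappearance points, those with $c_q = 1$, are all distinct starting points of regions of $\colvis(\T,\S)$, so there are $O(k_c)$ of them. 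Therefore $B = O(m^2 + k_c)$ and the total running time is $O(n + (m^2 + k_c)\log n)$, as claimed. The main obstacle was precisely the possibility of many viewpoints reappearing at the same point, which could make $B$ exceed $k_c$; Lemma~\ref{lem:simul-vis} is exactly what limits this blow-up to the additive $O(m^2)$ term.
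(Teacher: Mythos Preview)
Your proof is correct and follows essentially the same approach as the paper: sweep, charge each disappearance to the reappearance point, and use Lemma~\ref{lem:simul-vis} to control the overcounting at points where several viewpoints reappear together. Your charging is in fact a bit more explicit than the paper's---you spell out the inequality $\sum_q \binom{c_q}{2}\le\binom{m}{2}$ and derive $\sum_{c_q\ge 2} c_q = O(m^2)$ directly, whereas the paper only states the weaker consequence that there are $O(m^2)$ simultaneous-reappearance points. Two negligible loose ends: the at-most-$m$ disappearances that are never followed by a reappearance (absorbed by the $m^2$ term), and the $O(\log m)$ binary search at vertex events where no viewpoint disappears (absorbed by $k_c\log n$ since $k_c\ge n$).
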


Note that the $O(m^2)$ term in the running time disappears if one assumes that at no point of the terrain more than $O(1)$ viewpoints change between visible and invisible.
Moreover, notice that the algorithm can be modified to output $\colvis (\T, \S)$ with the following extra information: for the first component of the map, the algorithm
returns the set of viewpoints that see that component. For the
subsequent components, the algorithm outputs all changes in the
set of visible viewpoints with respect to the component right before. The running time remains the same and becomes optimal up to a $O(\log n)$ factor.

\subsubsection{Computing the Voronoi Visibility Map}

\paragraph{A Simple Divide and Conquer Approach.}  A simple way to construct the Voronoi visibility map $\vorvis(\T,\S)$  is using divide and conquer. First, divide set $\S$ into
two sets $\S_1$ and $\S_2$, each with $\frac{m}{2}$ viewpoints.
Each set is recursively divided into two until each subset has
only one viewpoint. The Voronoi diagram of one viewpoint is its
visibility polygon, which can be computed in $O(n)$
time~\cite{js-clvpa-87}. Each of these diagrams can be transformed
into a list of intervals such that each interval defines a portion of the
terrain that is assigned to a particular viewpoint (or none).
Therefore, the merge of two smaller diagrams can be done by
comparing both intervals where there are parts of the terrain
visible from the two viewpoints, and choosing the closest one. In
other words, that portion of the terrain is intersected with
the perpendicular bisector between both viewpoints. In general,
the merge step of two diagrams at the $i$th level (which have $m/2^i$ viewpoints each) takes $O((m/2^i)n)$ time, which adds up to $O(mn)$ time over all pairs-to-be-merged at the same level.
Thus the whole procedure takes $O(mn \log m)$ time.

%In short:
%A simple way to construct the Voronoi visibility map $\vorvis(\T,\S)$ consists in dividing the set of viewpoints into two subsets,
%computing the Voronoi visibility map of the two subsets
%recursively, and merging the two maps. The whole procedure takes
%$O(mn \log m)$ time.

\paragraph{An Output-Sensitive Algorithm.} \label{sec:alg1.5Vor} Even though $\vorvis(\T,\S)$ can have $\Theta(mn)$ complexity, it
seems unlikely that such high complexity arises often in practical
applications. In the following we present an alternative algorithm
that essentially extracts the Voronoi visibility map from the
colored visibility map. Its running time depends on the complexity
of the two structures, and avoids the fixed $O(mn)$ term of the
previous method. Since we use the algorithm for $\colvis(\T,\S)$, here we also assume that no three vertices of $\T$ are aligned.

Our algorithm starts by computing $\colvis(\T,\S)$ using the algorithm from Section~\ref{subsec:1.5-alg-col}. We then sweep the terrain from left to right. During this
sweep, we maintain three data structures: 
\begin{enumerate}
\item  A doubly-linked list with the vertices of $\colvis(\T,\S)$, sorted from left to right.
\item A list $\S'$ with the currently visible viewpoints.
\item For each currently visible point, we keep track of when it became visible for the last time. More precisely, for each  $p_i\in \S'$ we keep a point $a_i$ that is the starting point  of the last visible region of $p_i$ encountered so far.
\end{enumerate}

Recall that we use $\T[a,c]$, for $a,c$ on $\T$ and $x(a) < x(c)$, to
denote the closed portion of the terrain between $a$ and $c$. The
algorithm produces $\vorvis(\T,\S)$ as a list of interval-viewpoint pairs $([a,c],p_i)$, such that $p_i$ is the closest visible
viewpoint to all points in $\T[a,c]$. If $\T[a,c]$ is not visible
from any viewpoint, $p_i$ is set to $\bot$.

Our algorithm uses the following two functions, whose
implementation is described later.

\textsc{IsAlwaysCloser}$([a,c],p_1,p_2)$ determines whether $p_1$
is always closer than $p_2$ in $\T[a,c]$, assuming both viewpoints
are visible throughout $\T[a,c]$.

\textsc{FirstRegionChange}$([a,c],p_1,\S')$ assumes that $p_1$ is
visible throughout $\T[a,c]$ and is the closest visible viewpoint
at $a$; it returns the leftmost point in $\T[a,c]$ where $p_1$
stops being the closest visible viewpoint from $\S'$ (or the end
of the interval, if that never happens).

We process \T in a number of iterations. Each iteration
starts at the leftmost point $u$ of a new Voronoi region, with $\S'$ containing
the viewpoints that are visible from $u$.

If $\S'=\emptyset$, then the region starting at $u$ and ending at the start
point $v$ of the next region in $\colvis(\T,\S)$ is not visible from any
viewpoint. We report the region $[u,v]$ with $\bot$, and move forward (towards
the right) until $v$, where a new Voronoi region, and thus a new iteration,
starts.

If $\S' \neq\emptyset$, we compute the closest visible viewpoint in $O(m)$
time; if there is more than one, we move infinitesimally to the right of $u$,
and compute the closest visible viewpoint there. Without loss of generality, we
assume that the closest visible viewpoint is $p_1$. For all viewpoints $p_i\in
\S'$, we set $a_i := u$. We now start traversing the terrain, from $u$ towards
the right. At a point $q$, we might find several events from $\colvis(\T,\S)$:
%\vspace{-10pt} % Spacing is too large

\begin{enumerate}
\item A viewpoint $p_j$ becomes visible.
We update $\S'$, set $a_j:=q$, and continue the sweep.
  % \vspace{-6pt} % Spacing between subitems is too large

  % \item A sensor $s_i$ distinct from $s_1$ becomes invisible and $q$ has been marked as useless for $s_i$.
  %   We continue traversing the edge and remove the information in $I_i$.
\item A viewpoint $p_j \neq p_1$ becomes invisible.
  We update $\S'$ and proceed depending on two subcases:
  % \vspace{-10pt} % Spacing between subitems is too large

  \begin{enumerate}
    % \item For all points  in $\T[a_j,q]$, $p_1$ is closer than $p_j$.
  \item \textsc{IsAlwaysCloser}$([a_j,q],p_1,p_j)$ = \textsc{True}.
    Continue traversing the terrain.
    % \vspace{-3pt} % Spacing between subitems is too large

  \item \textsc{IsAlwaysCloser}$([a_j,q],p_1,p_j)$ = \textsc{False}.  There is
    a point in $\T[a_j,q]$ at which $p_j$ is closer than $p_1$, so at least one
    Voronoi region starts between $u$ and $q$. We find the leftmost region
    change $v$ by calling \textsc{FirstRegionChange}$([u,q],p_1,\S')$,
%(with     $c_i:=q$ for all $p_i \in \S'$)
 and report $[u,v]$ as a Voronoi region
    with $p_1$ as closest point. We now have to start a new iteration of the algorithm from $v$. In order to do that, first we have to restore $\S'$ to the set of visible viewpoints at $v$. To that end, we backtrack our sweep, i.e. we traverse  the sequence of $\colvis(\T,\S)$ events from right to left (updating $\S'$ as we encounter events) until we reach $v$.
  \end{enumerate}
  % \vspace{-12pt} % Spacing between subitems is too large

\item Viewpoint $p_1$ becomes invisible. We update $\S'$, and compute
  \textsc{IsAlwaysCloser} $([a_i,q],p_1,p_i)$, for all $p_i \in \S'$. If the
  answer is \textsc{True} for \emph{all} viewpoints in $\S'$, we report the
  region $[u,q]$ with $p_1$ as closest viewpoint, and start a new Voronoi
  region and a new iteration at $q$.
% If the answer is \textsc{False} for at least one viewpoint, then
Otherwise, there is at least one Voronoi region that starts between $u$
  and $q$. We handle this analogously to case 2(b).
\item Vertex $q$ is the last (rightmost) vertex of \T. We proceed as in case 3, except that no new Voronoi region starts after $q$.
\end{enumerate}

After processing all events, we have successfully computed $\vorvis(\T,\S)$.

Since we backtrack our sweep in step 2(b), it may be the case that we
(unnecessarily) visit events from $\colvis(\T,\S)$ multiple times.
We can avoid this by augmenting this step as follows. Consider
step 2(a). We notice that there cannot be a Voronoi region of $p_j$
between $a_j$ and $q$ (since at least $p_1$ is closer and
visible). So we can remove the events of $p_j$ becoming visible at
$a_j$ and invisible at $q$ from the list of $\colvis ({\T},\S)$ events.
We remove the event at $q$ in step 2(a) itself.
The event at $a_j$ is removed if we encounter it while
backtracking in step 2(b): at each event of type 1, i.e. a viewpoint
$p_j$ becoming visible, we check if $p_j$ is in $\S'$. If not, we
must have removed the event at $q$ (where $p_j$ becomes invisible) from
$\colvis(\T,\S)$. Thus we can also remove the event at $a_j$.

Next, we analyze the running time of the above algorithm.
The running times needed to implement \textsc{IsAlwaysCloser} and  \textsc{FirstRegionChange} will be analyzed later. For the time being, we refer to them in an abstract way.
 Let $A(n,m)$ be the time
required for a (single) \textsc{IsAlwaysCloser} query, and let $F'(n,m,v) =
F(n,m) + v$ be the time required for a \textsc{FirstRegionChange} query, where $v$ is the number of terrain vertices between $u$ and the point returned (later it will become clear why $F'$ depends also on $v$). 
We now show that:

\begin{lemma}
  \label{lem:time_spent_on_function_calls}
  The total time that the Voronoi visibility map algorithm spends on
  \textsc{IsAlwaysCloser} and \textsc{FirstRegionChange} queries is $O((k_c + m^2+
  mk_v)A(n,m) + k_vF(n,m) + n)$.
\end{lemma}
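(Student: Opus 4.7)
My plan is to bound, separately, the number of calls to \textsc{IsAlwaysCloser} and to \textsc{FirstRegionChange}, and then multiply each by the per-call cost given in the hypothesis. Each call will be charged either to a region of $\vorvis(\T,\S)$, to an event of $\colvis(\T,\S)$, or to a simultaneous reappearance of viewpoints guaranteed by Lemma~\ref{lem:simul-vis}.

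\textbf{Bounding \textsc{FirstRegionChange}.} I would first note that such a call happens only in steps 2(b), 3, or 4, and each such call either closes the currently reported Voronoi region and initiates a new iteration, or it terminates the sweep. Since $\vorvis(\T,\S)$ has $k_v$ regions, the number of calls is therefore $O(k_v)$, yielding an $O(k_v F(n,m))$ contribution from the fixed part of the cost. For the linear part, controlled by the parameter~$v$, I would use an amortization over $\colvis$-events: thanks to the pruning rule stated after step 4 (which excises the $q$-event inside step~2(a) and the matching $a_j$-event during the backtrack), every $\colvis$-event and every terrain vertex is traversed only $O(1)$ times across the whole algorithm, and the unique backtrack segment per iteration can be charged to the freshly reported region that the iteration produces. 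Consequently $\sum v = O(n)$, so the total time spent inside \textsc{FirstRegionChange} is $O(k_v F(n,m) + n)$.

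\textbf{Bounding \textsc{IsAlwaysCloser}.} I would split these calls into three groups. The first group contains the single call made every time a case-2 event is visited, namely a $\colvis$-event at which some $p_j\neq p_1$ becomes invisible; by the same pruning argument as above, each such event is visited only $O(1)$ times overall, so the first group contributes $O(k_c)$ calls. The second group arises in cases 3 and 4, where up to $|\S'|\le m$ calls are issued at the single triggering event per iteration (either $p_1$ becoming invisible or the terrain ending); since such triggering events are in bijection with iterations and there are $O(k_v)$ iterations, this group contributes $O(mk_v)$ calls. The third group accounts for the degenerate possibility that several viewpoints reappear at the very same point of~$\T$, so that multiple case-1 events of $\colvis(\T,\S)$ fire at a single $x$-coordinate and each trigger their own bookkeeping call; by Lemma~\ref{lem:simul-vis} every pair of viewpoints can share at most one such coincident reappearance, so this group contributes $O(m^2)$ calls. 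Summing the three groups yields $O(k_c + m^2 + mk_v)$ calls, each costing $A(n,m)$, which matches the claimed bound.

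\textbf{Main obstacle.} The most delicate step will be the amortization that controls both the $\sum v$ term inside \textsc{FirstRegionChange} and the ``$O(k_c)$'' bound on the first group of \textsc{IsAlwaysCloser} calls. Both depend on a careful inspection of how the pruning of $\colvis$-events in step 2(a) interacts with the backtrack that restores $\S'$ in step 2(b): I will need to show that no $\colvis$-event survives more than a constant number of visits before it is either used to bound a Voronoi region or removed by pruning, so that iterations which start inside a previously swept stretch of the terrain do not inflate the event count.
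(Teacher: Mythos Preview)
Your final bound is right, but the attribution of the \textsc{IsAlwaysCloser} calls is off in two places, and your treatment of the $\sum v$ term is more complicated than necessary.

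First, your ``third group'' does not exist: events of type~1 (a viewpoint $p_j$ becomes visible) do \emph{not} call \textsc{IsAlwaysCloser} at all --- the algorithm merely updates $\S'$, records $a_j$, and continues. So no $O(m^2)$ calls arise there. The $m^2$ term actually belongs inside your first group. The list of $\colvis$ events over which the algorithm iterates is not the $k_c$ region boundaries but the individual ``$p_j$ becomes visible/invisible'' changes; by the argument around Lemma~\ref{lem:simul-vis} the total number of such changes is $O(k_c+m^2)$, not $O(k_c)$. Hence the number of type-2 events, each making one \textsc{IsAlwaysCloser} call, is already $O(k_c+m^2)$: the type-2(a) ones are bounded because each deletes an event from the list, and the type-2(b) ones are bounded by $k_v$ since each reports a Voronoi region. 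Your second group (types~3 and~4: at most $m$ calls per reported region, hence $O(mk_v)$) is fine.

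Second, for $\sum v$: the pruning of $\colvis$ events is irrelevant here, because $v$ counts \emph{terrain} vertices in $[u,u']$, not $\colvis$ events. The paper's argument is direct: each such interval is exactly one reported Voronoi region, the Voronoi regions tile the terrain, so every terrain vertex lies in at most two of them, and summing gives $\sum v = O(n)$. The ``main obstacle'' you identify --- the interaction between pruning in step~2(a) and backtracking in step~2(b) --- is not needed for this lemma at all; that amortization is used only in the next lemma (Lemma~\ref{lem:1.5D-vor-alg}) to bound the total number of $\colvis$-vertex visits by the outer sweep.
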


\begin{proof}
  We call \textsc{IsAlwaysCloser} or \textsc{FirstRegionChange} only in events
  of type 2 and 3. At an event of type 2(a), we remove a vertex from the
  colored visibility map, so there are at most $k_c+ m^2$ such events. At events of
  types 2(b) and 3, we report a new Voronoi region. Since we report each region
  at most once, the number of events in these cases is at most $k_v$.

  It now follows that the number of calls to \textsc{IsAlwaysCloser} and
  \textsc{FirstRegionChange} is at most $O(k_c + m^2+ mk_v)$ and $O(k_v)$,
  respectively.

  A \textsc{FirstRegionChange} query takes $F'(n,m,v) = F(n,m) + v$ time, where
  $v$ is the number of terrain vertices in the interval between $u$ and the point returned. Each
  such interval corresponds with a Voronoi region, and each terrain vertex
  occurs in at most two Voronoi regions, so all values $v$ sum up to $O(n)$.
\end{proof}

\begin{lemma}
  \label{lem:1.5D-vor-alg}
  Given a 1.5D terrain \T and its colored visibility map $\colvis(\T,\S)$,
  the Voronoi visibility map $\vorvis({\T},\S)$ can be constructed in $O((k_c + m^2+ mk_v)A(n,m) +
  k_vF(n,m))$ time.
\end{lemma}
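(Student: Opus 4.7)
The plan is to separate the argument into a correctness part and a running-time part, with the function-call cost already handled by Lemma~\ref{lem:time_spent_on_function_calls}; the bulk of the work is to show that all remaining bookkeeping (most importantly, the backtracking in step~2(b)) is absorbed into the claimed bound.

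For correctness, I would proceed by induction on the iterations. The invariant to maintain at the start of an iteration is that $u$ is the left endpoint of a not-yet-reported Voronoi region, that $\S'$ is exactly the set of viewpoints visible at~$u$, that $a_i$ is the starting point of the currently visible interval of each $p_i\in\S'$, and that $p_1$ is the closest visible viewpoint at $u$. The four event cases are then checked individually: case~1 merely enlarges $\S'$, and since $p_j$ has just become visible at $q$, it cannot already be closer than $p_1$ at $q$; case~2(a) is justified by \textsc{IsAlwaysCloser} returning \textsc{True} on the full interval $[a_j,q]$ over which $p_j$ and $p_1$ are simultaneously visible (so no Voronoi region of $p_j$ can lie in $[u,q]$); case~2(b) locates the first actual boundary with \textsc{FirstRegionChange}; and cases~3 and~4 apply the same test against every surviving $p_i\in\S'$.

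For the running time, Lemma~\ref{lem:time_spent_on_function_calls} already charges all calls to \textsc{IsAlwaysCloser} and \textsc{FirstRegionChange} as $O((k_c+m^2+mk_v)A(n,m)+k_vF(n,m)+n)$; since the maps have complexity $\Omega(n)$, in particular $k_c=\Omega(n)$, the additive $O(n)$ is absorbed. What remains is the bookkeeping: updating $\S'$ and $a_i$ at each event (using a balanced BST, $O(\log m)$ per update) and the initial $O(m)$ scan to identify $p_1$ at the start of each of the $O(k_v)$ iterations, contributing $O(mk_v)$ which is subsumed by the $mk_v\cdot A(n,m)$ term.

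The main obstacle is bounding the total number of event traversals once backtracking in step~2(b) is taken into account; without care, an event could be revisited many times. The key is the augmentation that deletes events from $\colvis(\T,\S)$ as soon as they are shown irrelevant: in step~2(a) the pair of events marking $p_j$ becoming visible at $a_j$ and invisible at $q$ is removed, and the matching $a_j$-event is removed on-the-fly during backtracking in step~2(b) by the described check ($p_j\notin\S'$ at a type-1 event means its partner was already dropped). Consequently every event of $\colvis(\T,\S)$ is touched $O(1)$ times in total across forward sweeping and all backtracking phases, giving $O(k_c+m^2)$ event touches overall, which is dominated by the function-call bound. Combining everything with Lemma~\ref{lem:time_spent_on_function_calls} yields the stated $O((k_c+m^2+mk_v)A(n,m)+k_vF(n,m))$ running time.
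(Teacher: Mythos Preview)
There is a genuine gap in your running-time argument. Your central claim—that after the augmentation ``every event of $\colvis(\T,\S)$ is touched $O(1)$ times in total''—is false. During backtracking in step~2(b), a type-1 event at $a_j$ is deleted only when $p_j\notin\S'$ at that moment; if $p_j$ is still in $\S'$ when the backward sweep reaches $a_j$, you merely remove $p_j$ from $\S'$ to restore the visible set and leave the event in place. That event will then be visited again in the next forward sweep, and if the next backtrack also lands to the left of $a_j$, yet again. A single event can thus be touched once for every Voronoi region whose left endpoint lies to its left, i.e.\ up to $\Theta(k_v)$ times. The paper's proof handles this with a different charging scheme: in each backward sweep from $q$, every viewpoint in $\S'$ at $q$ contributes at most one non-deleted type-1 event (its most recent visibility start), which is charged to the Voronoi region just reported; over all $k_v$ backward sweeps this gives $O(mk_v)$ non-deleted touches, plus $O(k_c+m^2)$ deletions, for a total of $O(k_c+m^2+mk_v)$ vertex visits. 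Your stated conclusion survives only because the $mk_v$ term already appears in the target bound for other reasons, but your argument does not establish it.

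A smaller issue: in your correctness sketch for case~1 you assert that a viewpoint $p_j$ that has just become visible at $q$ ``cannot already be closer than $p_1$ at $q$''. This is false—$p_j$ can be arbitrarily close to $q$; becoming visible says nothing about distance. The algorithm is nonetheless correct, but for a different reason: any such region change is caught later, at the type-2, 3, or 4 event where either $p_j$ or $p_1$ disappears (or the terrain ends), via the \textsc{IsAlwaysCloser} tests there.
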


\begin{proof}
  The correctness of the algorithm is clear from its description. So what
  remains is to analyze its running time.

  Each time we visit a vertex of (our copy of) $\colvis(\T,\S)$ we do a
  constant amount of work, and possibly do some \textsc{IsAlwaysCloser} and/or   \textsc{FirstRegionChange} queries. By
  Lemma~\ref{lem:time_spent_on_function_calls}, the total time spent on these
  queries is $O((k_c + m^2+ mk_v)A(n,m) + k_vF(n,m) +n)$. We now show that the total
  number of vertices of $\colvis(\T,\S)$ visited is at most $O(k_c + m^2+ mk_v)$. The lemma then
  follows.

  %We start by counting the number of vertices of our copy of $\colvis(\T,\S)$
  %that we visit while sweeping backwards (that is, from right to left). During
  %a single backward sweep, the entire terrain interval in between the current position
  %of the sweep line and point $q$ (the point where we started sweeping
  %backwards), is visible from all points in $\S'$. So, in such a
  %sweep we encounter at most one event for each viewpoint in $\S'$. Since we
  %sweep backwards at most $k_v$ times, namely only when we found a Voronoi
  %region, the number of such events is at most $O(mk_v)$. When we encounter an
  %event for a viewpoint not in $\S'$, we immediately delete it from (our copy
  %of) $\colvis(\T,\S)$\footnote{Note that we only encounter points where a
    %visibility region starts (i.e.~events of type 1.), since we already remove
    %their corresponding end points in step 2a}. Hence, this happens at most
  %$k_c$ times in total. Thus, the number of vertices that we visit while
  %sweeping backwards is $O(k_c + mk_v)$.

We start by counting the number of vertices of our copy of $\colvis(\T,\S)$
  that we visit while sweeping backwards (that is, from right to left). During
  a single backward sweep from a point $q$, we might encounter events that correspond to viewpoints belonging to $\S'$ at point $q$ becoming visible (that is, becoming visible when sweeping from left to right). For each viewpoint that belongs to $\S'$ at $q$, we charge the first event of this type to the Voronoi region starting at $q$. Since we
  sweep backwards at most $k_v$ times, namely only when we found a new Voronoi
  region, the number of such events is at most $O(mk_v)$. If we encounter an
  event for a viewpoint not in $\S'$ at $q$, or the second (or third, etc.) event for a viewpoint in $\S'$ at $q$, we immediately delete it from (our copy
  of) $\colvis(\T,\S)$.\footnote{Note that we only encounter points where a
    visibility region starts (i.e.~events of type 1), since we already removed
    their corresponding end points in step 2(a).} Hence, this happens at most
  $k_c+ m^2$ times in total. Thus, the number of vertices that we visit while
  sweeping backwards is $O(k_c+ m^2 + mk_v)$.

  Our sweepline moves continuously, starting at the first vertex of the colored
  visibility map, and ending at the last vertex. Thus, the number of times we
  visit a vertex while sweeping forwards (left to right) is one plus the number
  of times we visit the vertex while sweeping backwards. It follows that the
  total number of times we visit a vertex while sweeping forwards is also at
  most $O(k_c + m^2+ mk_v)$.
\end{proof}

\paragraph{Implementing \textsc{IsAlwaysCloser} and
  \textsc{FirstRegionChange}.} It is fairly straightforward to implement
\textsc{IsAlwaysCloser} to run in $O(\log n)$ time: we simply use a
ray-shooting query, where the ray is the bisector of the two viewpoints
involved. However, as we show next, we can even answer this question in
constant time.

\begin{lemma} \label{lem:bisInt}
Consider two points $r$ and $t$ such that all of $\T[r,t]$ is
visible from two viewpoints $p_1$ and $p_2$. We can decide whether
there exists some point in $\T[r,t]$ that is closer to $p_2$ than
to $p_1$ in $O(1)$ time.
\end{lemma}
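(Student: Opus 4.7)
The plan is to reduce the decision to a constant number of squared-distance comparisons. Observe that the set of points strictly closer to $p_2$ than to $p_1$ is the open half-plane $H_2 = \{q : \|q-p_2\|^2 < \|q-p_1\|^2\}$ bounded by the perpendicular bisector $\ell$ of $\overline{p_1p_2}$, and $\|q-p_i\|^2$ is a quadratic whose difference is a linear functional of $q$. So from the four values $\|r-p_1\|^2, \|r-p_2\|^2, \|t-p_1\|^2, \|t-p_2\|^2$, I can determine in $O(1)$ on which side of $\ell$ each endpoint lies.

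First I would check the endpoints: if either $r$ or $t$ lies in $H_2$, return \textsc{True}. Otherwise both lie in $H_1$, and the central claim of the lemma is that under the visibility hypothesis the whole polyline $\T[r,t]$ must then lie in $H_1$, so I return \textsc{False}. The two distance tests of Step~1 therefore already suffice, giving $O(1)$ time; what remains is to justify the claim.

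To prove the claim, I would argue by contradiction. Suppose some $q^* \in \T[r,t]$ satisfies $\|q^*-p_2\| < \|q^*-p_1\|$ while $\|r-p_1\|\le\|r-p_2\|$ and $\|t-p_1\|\le\|t-p_2\|$. Since $p_1$ and $p_2$ both see every point of $\T[r,t]$, all six segments $\overline{p_1r}, \overline{p_1q^*}, \overline{p_1t}, \overline{p_2r}, \overline{p_2q^*}, \overline{p_2t}$ lie on or above the terrain. Combined with the $x$-monotonicity of $\T[r,t]$, this tightly constrains where $q^*$ can sit. I would split the analysis by the horizontal positions of $p_1, p_2$ relative to $[x(r), x(t)]$: when both viewpoints lie on the same side of this interval, the one whose $x$-coordinate is closer to the interval dominates distances throughout $\T[r,t]$ by $x$-monotonicity, and the claim follows directly. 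When $p_1$ and $p_2$ lie on opposite sides, the visibility segments from $p_1$ and $p_2$ sandwich $\T[r,t]$ inside a region whose intersection with the bisector $\ell$ is controlled; together with the order claim (Observation~\ref{obs:left-vis}) applied to a crossing point of $\ell$ with $\T[r,t]$ near $q^*$, this contradicts one of the endpoint inequalities.

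The main obstacle is precisely this geometric claim, and in particular the case where $p_1$ and $p_2$ are on opposite horizontal sides of $\T[r,t]$: it is tempting but wrong to conclude directly from the bisector being a line that a linear functional on a non-monotone polyline is extremized at the endpoints. The visibility hypothesis is doing real work here, ruling out the otherwise-plausible ``deep valley'' or ``tall peak'' configurations of $\T[r,t]$ that would let an interior vertex jump across $\ell$. Once this obstruction is cleanly handled, the algorithm collapses to the two endpoint comparisons and the $O(1)$ bound is immediate.
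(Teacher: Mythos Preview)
Your central claim—that if both endpoints $r$ and $t$ lie in the closed half-plane $H_1$ then all of $\T[r,t]$ does—is false, and with it your two-comparison algorithm. The missing case is exactly the one your case split omits: $p_2$ lying \emph{inside} the horizontal interval $[x(r),x(t)]$. Since viewpoints sit on terrain vertices, in that case $p_2\in\T[r,t]$, and $p_2$ is trivially closer to itself than to $p_1$. A concrete instance: take the three-vertex terrain $(0,10),(2,-100),(3,0)$ with $p_1=(0,10)$, $p_2=(2,-100)$, $r=(0.1,4.5)$ on the first edge, and $t=(3,0)$. Both $p_1$ and $p_2$ see all of $\T[r,t]$; one checks $\|r-p_1\|\approx 5.5<\|r-p_2\|\approx 104.5$ and $\|t-p_1\|\approx 10.4<\|t-p_2\|\approx 100$, so your test returns \textsc{False}, yet $p_2$ itself lies in $\T[r,t]$ at distance $0$ from $p_2$. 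Your case analysis (``same side'' vs.\ ``opposite sides'') never considers a viewpoint inside the interval, and the order-claim argument you sketch for the opposite-sides case cannot rescue it.

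The paper's characterization adds precisely one more $O(1)$ test: return \textsc{True} also when $r,t\in H_1$ but $x(r)<x(p_2)<x(t)$. The delicate part of the proof is then the converse—showing that when both endpoints are in $H_1$ and $p_2$ is \emph{outside} $[x(r),x(t)]$, no interior point of $\T[r,t]$ can cross into $H_2$. That argument splits on whether $p_2$ lies above or below the bisector $b_{1,2}$ and uses the visibility of $r$ (or $t$) from $p_2$ together with the relative position of $p_1$ to derive a contradiction; it is more subtle than an appeal to $x$-monotonicity or the order claim.
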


\begin{proof}
  We claim that in case that neither $r$ or $t$ lies on the bisector $b_{1,2}$
  of $p_1$ and $p_2$ then there is a point $q \in \T[r,t]$ that is closer to
  $p_2$ than to $p_1$ if and only if one of the following conditions hold:
  \medskip
  \begin{itemize}[nosep]
  \item[(i)] $r$ or $t$ is closer to $p_2$ than to $p_1$;
  \item[(ii)] $r$ and $t$ are closer to $p_1$ than to $p_2$, and $x(r)<x(p_2)<x(t)$.
  \end{itemize}
  \medskip
  We can check these two conditions in constant time. In case $r$ lies on the
  bisector we use the above characterization for a point $r'$ slightly to the
  right of $r$. In case $t$ lies on the bisector we do the same for a point
  $t'$ slightly to the left of $t$.

  % In order to know whether some point in $\T[r,t]$ is closer to
  % $p_2$ than to $p_1$, we just need to do a constant number of
  % verifications.

  % In particular, if neither $r$ or $t$ lies on the
  % bisector of $p_1$ and $p_2$, the answer is affirmative if and only
  % if one of the following cases hold:
  % \begin{itemize}
  % \item[(i)] $r$ or $t$ is closer to $p_2$ than to $p_1$;
  % \item[(ii)] $r$ and $t$ are closer to $p_1$ than to $p_2$, and $x(r)<x(p_2)<x(t)$.
  % \end{itemize}
  % If $r$ (respectively, $t$) lies on the bisector of $p_1$ and
  % $p_2$, let $r'$ (respectively, $t'$) be a point of $\T[r,t]$ to
  % the right of $r$ (respectively, left of $t$) and very close to it.
  % We apply the characterization above replacing $r$ by $r'$
  % (respectively, $t$ by $t'$).

  Let us now prove that this characterization is correct. Notice that it
  suffices to prove that the characterization is correct for the
  case where neither $r$ or $t$ lies on the bisector $b_{1,2}$.

  Clearly, if condition (i) is satisfied, there exists some point in
  $\T[r,t]$ closer to $p_2$ than to $p_1$. If condition (ii) is
  satisfied, then $p_2$ belongs to $\T[r,t]$, and obviously the
  point of ${\cal T}$ where $p_2$ lies is closer to $p_2$ than to
  $p_1$.

  It remains to prove that, if there exists some point in $\T[r,t]$
  closer to $p_2$ than to $p_1$, then either (i) or (ii) holds.
  Assume, for the sake of contradiction, that this is not true. Then
  there exists some point $q$ in $\T[r,t]$ closer to $p_2$ than to
  $p_1$, $r$ and $t$ are closer to $p_1$ than to $p_2$, and either
  $x(p_2)< x(r)$ or $x(p_2)> x(t)$. % Since $r$ and $t$ are
  % closer to $p_1$ than to $p_2$, we actually have that either
  % $x(p_2)< x(r)$ or $x(p_2)> x(t)$.
Since $r$ and $t$ are closer to $p_1$ than to $p_2$, the bisector $b_{1,2}$ is a line that properly
  intersects\footnote{By a \emph{proper}
    intersection of $b_{1,2}$ and the terrain we mean an intersection
    such that right before the intersection point the terrain is closer to
    one of the viewpoints, and right after it is closer to the other
    viewpoint.} $\T[r,t]$ an even number of times (at least two), so in
  particular $b_{1,2}$ is not vertical.

  Let us first suppose that $p_2$ is below $b_{1,2}$. Then $r$ and
  $t$ are above $b_{1,2}$.

  If $x(p_2)< x(r)$, then we have the
  situation in Figure~\ref{fig:lemmachangevoronoi} (left). Since $q$ is
  visible from $p_2$, it lies
  above the line through $p_2$ and $r$. Since it is closer to $p_2$ than to $p_1$, it lies
  below $b_{1,2}$. But these two half-planes do not intersect to the right of
  $r$. This yields a contradiction. The case $x(p_2) > x(t)$ is symmetrical.

  \begin{figure}[tb]
    \centering
    \includegraphics{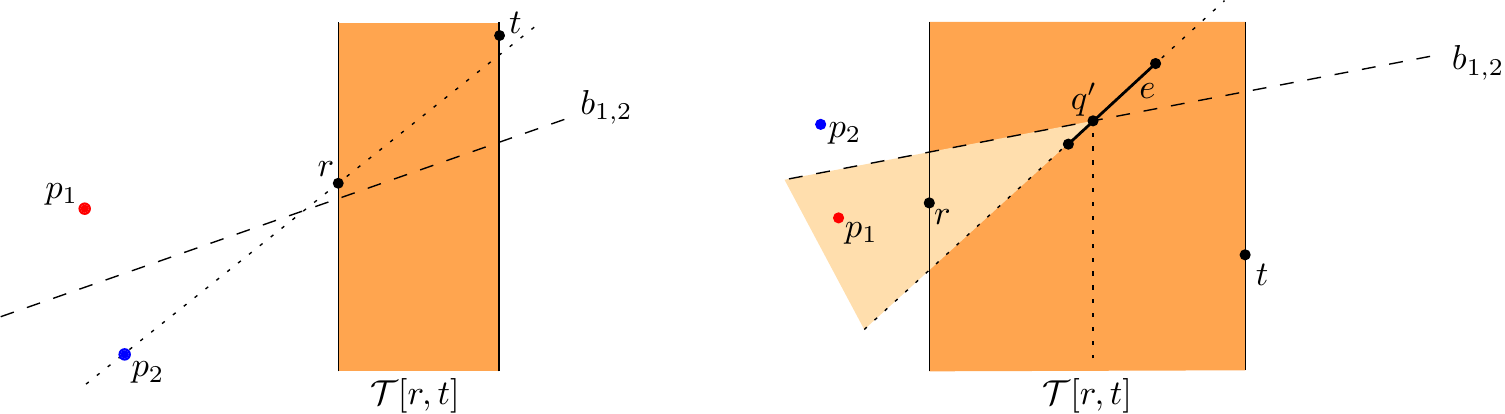}
    \caption{The two cases in the proof of Lemma~\ref{lem:bisInt}. Left: $p_2$ is below $b_{1,2}$ and $x(p_2)< x(r)$. The shaded area indicates the region where $\T[r,t]$, and thus point $q$, lies. Right: $p_2$ is above $b_{1,2}$.
      The lighter shaded area indicates the region where $p_1$ lies, while the darker shade denotes the region where $\T[r,t]$ lies.}
    \label{fig:lemmachangevoronoi}
  \end{figure}

	Next we suppose that $p_2$ is above $b_{1,2}$. 
  Then $r$ and $t$ are below $b_{1,2}$. 
  Let $e$ be the first edge encountered when traversing the terrain from $r$ to $t$ such that a portion of $e$ of positive length is closer to $p_2$ than to $p_1$, and let $q'$ be the first point at which that happens (i.e., the point at which $b_{1,2}$ intersects \T). See Figure~\ref{fig:lemmachangevoronoi} (right).
  Since $p_1$ sees $e$,
  $p_1$ lies above the line through $e$.
  On the other hand, $p_1$ lies below
  $b_{1,2}$.
  In particular, $p_1$ lies to the left of $q'$, while
  $t$ lies to its right. Therefore the segment $p_1t$ intersects
  the vertical ray with origin at $q$ and going downwards, so $t$
  is not visible from $p_1$. This yields a contradiction.
\end{proof}

We can easily implement \textsc{FirstRegionChange}$([u,q],p_1,\S')$ to run in
$O(m\log n)$ time. For every $p_i \in \S'$, we use a ray-shooting query to
compute the leftmost point (if any) on $\T[a_i,q]$ that is closer to $p_i$ than
to $p_1$. We then simply select the leftmost point $u'$ among the
results. Next, we show that we can improve this to $O(m+\log n \log m + n')$
time, where $n'$ is the number of vertices of \T between $u$ and the point
where there is the first change of Voronoi region.

Recall that a change in the Voronoi region can take place due to two situations: a viewpoint closer than $p_1$ reappearing, or the terrain crossing the bisector between $p_1$ and another visible viewpoint. We treat these two situations separately in two phases.
In the first phase we filter the viewpoints in $\S'$ in order to leave
precisely the ones that can create a region change only because of a bisector
crossing $\T[u,q]$. During the filtering, we also find the leftmost point
  (if any) where a viewpoint closer than $p_1$ reappears. This is a candidate point to produce a change in the Voronoi region.
In a second phase we compute the leftmost point (if any) where a bisector between $p_1$ and one of the remaining viewpoints crosses $\T[u,q]$.
The computed point is the second candidate point to produce a change in the Voronoi region.
%, with a few exceptions detailed later, the point where the first region change occurs.
Next, we present these two phases in detail.

For the remaining of this section, we sometimes denote $\T[a,c]$ by simply $[a,c]$. Furthermore, for each $p_i \in \S'$, the ``interval" $[a_i, q]$ (that is, the portion of the terrain $\T[a_i,q]$) is denoted by $I_i$. 
We assume that no viewpoint in $\S'$ has a vertical bisector with $p_1$ (this special, but simpler, case will be discussed later).

\paragraph{Filtering phase.}

In a first step, we remove from $\S'$ all the viewpoints $p_i\in
\S'$, such that no point in $I_i$ is closer to $p_i$ than to
$p_1$. Using Lemma~\ref{lem:bisInt}, this can be done in $O(m)$
time.

In a second step, we check for every $p_i\in \S'$ whether $a_i$ is
closer to $p_i$ than to $p_1$. From all the viewpoints such that
the answer is affirmative, we only keep the viewpoint $p_k$ such
that $a_k$ is leftmost; all the other viewpoints for which the
answer is affirmative are removed from $\S'$. Viewpoint $p_k$ is
also removed from $\S'$. 

Note that the region of $p_1$ cannot extend further to the right than $a_k$. Therefore, $a_k$ is a candidate point to produce a change in the Voronoi region. 
%However, it is possible that there is a region change before that due to one of the other viewpoints in $\S'$\new{, which we detect in the following lines.}
However, it can happen that there is no $a_k$ because no viewpoint satisfies the required condition; in this case, $a_k$ in the following explanation must be replaced by $q$.

Next, all
viewpoints such that $x(a_i)>x(a_k)$ are removed from $\S'$. All
viewpoints such that no point in $\T[a_i,a_k]$ is closer to $p_i$ than to
$p_1$ are also removed from $\S'$. We denote by $\mathcal{R}$ the set of remaining points in $\S'$ (possibly $\mathcal{R}=\emptyset$, in which case we are done).

%Next we consider all viewpoints $p_i\in \S'$ such that the
%bisector between $p_i$ and $p_1$ is vertical. These bisectors have
%equation $x=x_i$. We remove all these viewpoints from $\S'$. Let
%$p_j$ be the one such that $x_j$ is minimum. If $x_j<x(a_k)$, we
%replace $p_k$ by $p_j$ in $\S''$.

Any remaining viewpoint $p_i \in \mathcal{R}$ has the following properties:
\begin{enumerate}
\item $p_1$ is closer to $a_i$ than $p_i$. \vspace{-6pt}
\item $p_i$ is closer to $p_1$ in at least one point in $\T[u,a_k]$. \vspace{-6pt}
\item $x(a_i) \leq x(a_k)$.
\end{enumerate}

Property 1 implies that if there is a region change between $u$ and $a_k$, it must be due to the terrain crossing a bisector (i.e. it cannot be due to a viewpoint reappearing).
Therefore in the following phase we find the leftmost point where $\T[u,a_k]$ crosses a bisector between $p_1$ and a viewpoint in $\mathcal{R}$.
%to find the first region change it is enough to find the leftmost point where $\T[u,q]$ crosses a bisector between $p_1$ and a viewpoint in $\S'$.

\paragraph{Computing the first terrain-bisector crossing.}
We show next that the first terrain-bisector crossing can be computed efficiently by using a prune and search algorithm.

\begin{lemma} \label{lem:PruneAndSearch}
  The leftmost point where $\T[u,a_k]$ crosses a bisector between $p_1$ and a
  viewpoint in $\mathcal{R}$ can be found in $O(m+\log n \log m +n')$ time,
  where $n'$ is the number of vertices of \T between $u$ and the crossing point
  found.
\end{lemma}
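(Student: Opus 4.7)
The plan is to use a Megiddo-style prune-and-search that performs $O(\log m)$ rounds, each relying on a single $O(\log n)$ ray-shooting query and linear work in the number of surviving bisectors, followed by an explicit traversal of $\T$ to locate the exact crossing.

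At each round I maintain a set $\mathcal{R}' \subseteq \mathcal{R}$ of candidate viewpoints together with a sub-interval $[u', a_k'] \subseteq [u, a_k]$ guaranteed to contain the leftmost terrain-bisector crossing (if any) among the bisectors $b_{1,i}$ with $p_i \in \mathcal{R}'$. First I pair up the surviving bisectors arbitrarily and, in constant time per pair, compute the intersection $s_{ij}$ of $b_{1,i}$ and $b_{1,j}$; pairs whose intersection lies outside $[u', a_k']$ are discarded directly, since on the whole sub-interval the two bisectors keep a fixed relative order and a single sign-test at an endpoint identifies one that is dominated. Next I compute the median $x_{\mathrm{med}}$ of the remaining intersection $x$-coordinates in $O(|\mathcal{R}'|)$ time and perform one ray-shooting query to obtain the terrain point $t^{*} = T[x_{\mathrm{med}}]$ in $O(\log n)$ time. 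Then, for every $p_i \in \mathcal{R}'$, I test in $O(1)$ whether $t^{*}$ is closer to $p_1$ or to $p_i$; if closer to some $p_i$ the answer lies in $[u', t^{*}]$, and otherwise in $[t^{*}, a_k']$. With the answer's side fixed, for every pair whose intersection $x_{ij}$ falls on the discarded side of $x_{\mathrm{med}}$ the two bisectors retain a fixed relative order throughout the surviving interval, so as above I eliminate the dominated one. Roughly half of the pairs are of this kind, so at least a quarter of $\mathcal{R}'$ is pruned per round and the process terminates after $O(\log m)$ rounds. Summing, the per-round work is $O(|\mathcal{R}'| + \log n)$, for a total of $O(m + \log m \log n)$.

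Once $\mathcal{R}'$ has constant size I walk along $\T$ from $u'$, edge by edge, testing each traversed edge against the surviving bisectors, and return the first crossing found; this costs $O(n')$ where $n'$ is the number of terrain vertices between $u$ and the crossing. Hence the total running time is $O(m + \log m \log n + n')$ as required. The principal obstacle is making the pruning step geometrically rigorous: I must fix a consistent orientation of each bisector so that the ``$p_1$-side'' can be spoken of uniformly, and verify that on a sub-interval where $b_{1,i}$ lies entirely on the $p_1$-side of $b_{1,j}$ any crossing of $\T$ with $b_{1,j}$ is accompanied by a crossing of $\T$ with $b_{1,i}$ that is no later. Properties~1 and~2 of $\mathcal{R}$, namely that $\T$ starts on the $p_1$-side of every surviving bisector at $u$ and that each surviving bisector actually crosses $\T$ inside $[u, a_k]$, are exactly what is needed for the classical upper-envelope-of-lines argument of Megiddo to apply; the vertical-bisector case excluded earlier in the section reduces to a one-dimensional selection problem and is handled separately.
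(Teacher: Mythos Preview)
Your high-level outline---Megiddo-style prune-and-search over the bisectors, $O(\log m)$ rounds with one $O(\log n)$ terrain lookup per round, then a final linear scan---is exactly the paper's approach, and the running-time accounting is correct. Two of the internal steps, however, do not work as you describe them.

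First, the decision step at the median is too weak. You test only whether the single terrain point $t^* = \T[x_{\mathrm{med}}]$ lies on the $p_1$-side of every surviving bisector and, if so, restrict the search to $[t^*, a_k']$. But the terrain is a polyline: it can cross $b_{1,i}$ and cross back before reaching $x_{\mathrm{med}}$, so $t^*$ can sit on the $p_1$-side of every bisector while the leftmost crossing is still in $[u', t^*]$. The paper's test asks instead, for each $p_i$, whether \emph{some} point of $\T[a_i,\T[\mu]]$ is closer to $p_i$ than to $p_1$; Lemma~\ref{lem:bisInt} shows this is still $O(1)$ per viewpoint.

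Second, your pruning criterion fails for pairs whose bisectors have $p_1$ on opposite sides. If $p_1$ lies below $b_{1,i}$ but above $b_{1,j}$, then on any sub-interval where the two lines keep a fixed order the terrain starts \emph{between} them (below $b_{1,i}$ and above $b_{1,j}$) and can reach either one first; neither dominates the other, no matter how you orient them. The paper therefore first splits $\mathcal{R}$ into $\mathcal{R}_1$ (bisectors with $p_1$ below) and $\mathcal{R}_2$ (with $p_1$ above) and pairs only within each class. Even inside a class your ``single sign-test at an endpoint'' is not enough, because each $b_{1,i}$ is relevant only on its own interval $[a_i,a_k]$; the paper needs a short case analysis (three configurations, depending on the position of $a_h$ relative to the two bisectors) to decide which member of a pair to discard.
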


\begin{proof}
We use a prune and search algorithm inspired by Megiddo's technique to solve linear programming in $\mathbb{R}^2$ in linear time~\cite{Megiddo}. At every iteration of the algorithm, roughly one fourth of the viewpoints in $\mathcal{R}$ are dropped because it is inferred that they do not cause the leftmost intersection between a bisector and $\T[u,a_k]$. We next give the details of the method.

Recall that we assume no bisector is vertical.
We divide the remaining viewpoints in $\mathcal{R}$ into two groups.
We add to set $\mathcal{R}_1$ all viewpoints $p_i\in \mathcal{R}$ such that $p_1$
is below $b_{1,i}$. Then we add to set $\mathcal{R}_2$ all viewpoints
$p_i\in \mathcal{R}$ such that $p_1$ is above $b_{1,i}$.\footnote{Note that this is equivalent to saying that $\mathcal{R}_1$ contains all viewpoints that are higher than $p_1$, and $\mathcal{R}_2$ contains all viewpoints that are lower than $p_1$.}
 We define $u''$
as the leftmost point in $\T[u,a_k]$ such that at $u''$
there is a change of region in the Voronoi visibility map, and the
viewpoint responsible for the change is in $\mathcal{R}$.
While there is more than one viewpoint in $\mathcal{R}_1$ or $\mathcal{R}_2$, we
repeat the following procedure.

We take the viewpoints in $\mathcal{R}_1$ and put them in pairs
arbitrarily. We do the same with the viewpoints in $\mathcal{R}_2$. We
consider all pairs of viewpoints
$p_i,p_h\in \mathcal{R}_1$ or $p_i,p_h\in \mathcal{R}_2$ such that the
$x$-coordinate of the intersection point of $b_{1,i}$ and
$b_{1,h}$ lies in the interval $[x(a_{i,h}),x(c_{i,h})]$, where
$[a_{i,h},c_{i,h}]=I_i\cap I_h$. We compute in $O(m)$ time the median
$\mu$ of the $x$-coordinates of these intersection points. 
In $O(\log n)$ time, we compute the corresponding point on the terrain, $\T[\mu]$. Using Lemma~\ref{lem:bisInt}, we decide
in $O(m)$ time whether $x(u'')\leq \mu$ or $x(u'')> \mu$: We have
that $x(u'')\leq \mu$ if and only if at least one of the
viewpoints $p_i\in \mathcal{R}$ satisfies that there exists
some point in $I_i\cap [u,\T[\mu]]$ closer to $p_i$ than to $p_1$.

We next show that in both cases ($x(u'')\leq \mu$ and $x(u'')> \mu$) we can discard enough viewpoints from $\mathcal{R}$.

Let us first suppose that $x(u'')\leq \mu$. We start by selecting all pairs $p_i,p_h\in
\mathcal{R}_1$ such that $b_{1,i}$ and $b_{1,h}$ do not
intersect, or the $x$-coordinate of the intersection point of
$b_{1,i}$ and $b_{1,h}$ does not lie in
$[x(a_{i,h}),x(c_{i,h})]\cap [x(u),\mu]$. We will prove that we can remove $p_i$ or $p_h$ (or both) from
$\mathcal{R}_1$. 
 If for one of the two viewpoints, say $p_i$, it holds
that no point in $I_i\cap [u,\T[\mu]]$ is closer to $p_i$ than to
$p_1$, then we remove that viewpoint from $\mathcal{R}_1$ (since we know that $x(u'')\leq \mu$). If this holds
for both viewpoints, we remove both of them from $\mathcal{R}_1$.
Otherwise, we assume without loss of generality that $x(a_i)\leq
x(a_h)$. Recall that, for all
$p_j\in \mathcal{R}_1$, $a_j$ is closer to $p_1$ than to $p_j$. By the
definition of $\mathcal{R}_1$, this implies that $a_j$ is below $b_{1,j}$.

Now there are three possible situations, depending on the positions of $a_h$ and $b_{1,h} \cap \{x=x(a_h)\}$ with respect to $b_{1,i}$
(see Figure~\ref{fig:lemPrunSear-2}).
In the first situation, depicted in Figure~\ref{fig:lemPrunSear-2}
(left), $b_{1,h}$ is below $b_{1,i}$ in the interval
$[x(a_{i,h}),x(c_{i,h})]\cap [x(u),\mu]$. We first check whether
there exists some point in the portion of the terrain between
$a_i$ and $a_h$ closer to $p_i$ than to $p_1$. In the affirmative,
we can remove $p_h$ from $\mathcal{R}_1$, since there is a change of
Voronoi region before $p_h$ becomes visible. Otherwise, the
portion of the terrain between $a_h$ and $\T[\mu]$ crosses both
$b_{1,i}$ and $b_{1,h}$, but notice that it crosses $b_{1,h}$
first. So we can remove $p_i$ from $\mathcal{R}_1$. In the second
situation, illustrated in Figure~\ref{fig:lemPrunSear-2} (center),
$b_{1,h}$ is above $b_{1,i}$ in the interval
$[x(a_{i,h}),x(c_{i,h})]\cap [x(u),\mu]$, and $a_h$ is above
$b_{1,i}$. In this case, the terrain crosses $b_{1,i}$ before
$p_h$ becomes visible, so we remove $p_h$ from $\mathcal{R}_1$. In the
third situation, depicted in Figure~\ref{fig:lemPrunSear-2}
(right), $b_{1,h}$ is above $b_{1,i}$ in the interval
$[x(a_{i,h}),x(c_{i,h})]\cap [x(u),\mu]$, and $a_h$ is below
$b_{1,i}$. We also remove $p_h$ from $\mathcal{R}_1$, since even though the
terrain crosses $b_{1,h}$ while $p_h$ is visible, it has crossed
 $b_{1,i}$ while $p_i$ is visible before.

\begin{figure}[tb]
    \centering
    \includegraphics{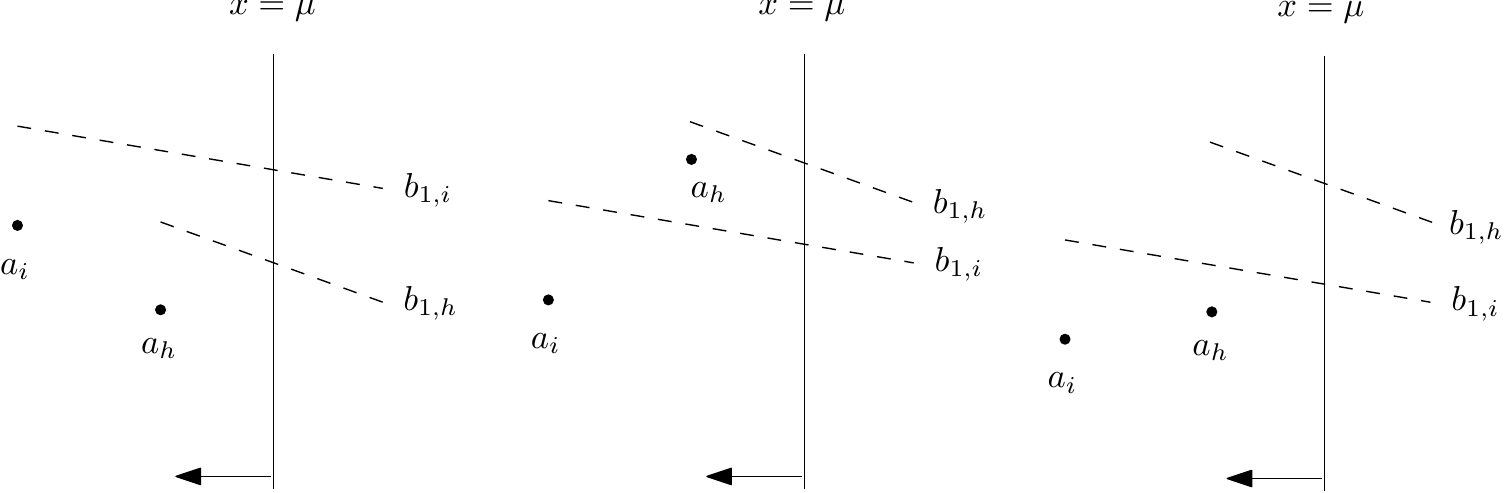}
    \caption{Three possible situations for the case $x(u'')\leq \mu$ and $p_i,p_h\in
\mathcal{R}_1$.} \label{fig:lemPrunSear-2}
\end{figure}

We do an analogous procedure with all pairs $p_i,p_h\in \mathcal{R}_2$
such that $b_{1,i}$ and $b_{1,h}$ do not intersect, or the
$x$-coordinate of the intersection point of $b_{1,i}$ and
$b_{1,h}$ does not lie in $[x(a_{i,h}),x(c_{i,h})]\cap
[x(u),\mu]$. We remove from $\mathcal{R}_2$ at least one of the
viewpoints of such a pair. 

By definition of $\mu$, we
have performed the above pruning operations to at least half of the pairs of
viewpoints in $\mathcal{R}_1$ and $\mathcal{R}_2$. Therefore, at least one fourth of the remaining viewpoints in $\mathcal{R}$ are discarded.

It remains to consider the situation where $x(u'')> \mu$. Here we apply a similar strategy: Let $p_i,p_h\in
\mathcal{R}_1$ be a pair such that $b_{1,i}$ and $b_{1,h}$ do not
intersect, or the $x$-coordinate of the intersection point of
$b_{1,i}$ and $b_{1,h}$ does not lie in
$[x(a_{i,h}),x(c_{i,h})]\cap [\mu,x(a_k)]$.

%If for one of the two
%viewpoints, say $p_i$, it holds that no point in $I_i\cap [w,q]$
%is closer to $p_i$ than to $p_1$, then we remove that viewpoint
%from $\mathcal{R}_1$. If this holds for both viewpoints, we remove both of
%them from $\mathcal{R}_1$. Otherwise, 
We consider essentially two cases.
In the first case, $x(a_i)\leq \mu$ and $x(a_h)\leq \mu$. We
assume without loss of generality that $b_{1,i}$ is below
$b_{1,h}$ in the interval $[x(a_{i,h}),x(c_{i,h})]\cap
[\mu,x(a_k)]$. Then we are in the situation illustrated in
Figure~\ref{fig:lemPrunSear-3}; in particular, $\T[\mu]$ lies below $b_{1,i}$ and $b_{1,h}$, since we know that $x(u'')> \mu$. The portion of the terrain between
$\T[\mu]$ and $a_k$ cross both $b_{1,i}$ and $b_{1,h}$, but it crosses
$b_{1,i}$ first. Thus, we remove $p_h$ from $\mathcal{R}_1$. 

\begin{figure}[!htb]
    \centering
    \includegraphics{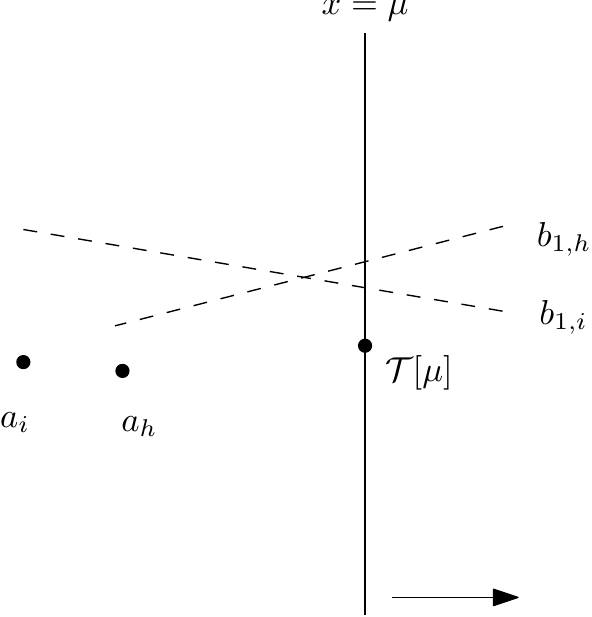}
    \caption{Case $x(u'')> \mu$ and $p_i,p_h\in
\mathcal{R}_1$.} \label{fig:lemPrunSear-3}
\end{figure}

In the second
case, $x(a_i)> \mu$ or $x(a_h)> \mu$. We assume without loss of
generality that $x(a_i)\leq x(a_h)$. This case can be further
subdivided into three cases, which essentially correspond to the
three cases depicted in Figure~\ref{fig:lemPrunSear-2}. These cases
are handled as explained above, and thus either $p_i$ or $p_h$ are
removed from $\mathcal{R}_1$.

By analogous arguments, we remove some viewpoints from $\mathcal{R}_2$ as
well. In total, we remove at least roughly one fourth of the
viewpoints in $\mathcal{R}_1 \cup \mathcal{R}_2$.

We repeat this procedure until there is one (or zero) viewpoint in
$\mathcal{R}_1$, and one (or zero) viewpoint in $\mathcal{R}_2$. The total running
time $T(n,m)$ of this part satisfies $T(n,m)\leq Cm+\log
n+T(n,3m/4)$ (for some constant $C$), which is $O(m+\log n \log
m)$.

We have now at most two viewpoints (one in $\mathcal{R}_1$, and one in $\mathcal{R}_2$) such that one of them is the viewpoint responsible for the leftmost change in the Voronoi
visibility map. In order to find it, we start traversing the
terrain from $u$ until we find the point $u'$ where the Voronoi
region starts. This takes time proportional to the size of the
terrain that we traverse.
\end{proof}

Therefore the first region change, assuming there are no vertical bisectors, occurs either at $a_k$ or at the point returned by the prune and search algorithm.

\paragraph{Vertical bisectors.}
It remains to explain how to deal with viewpoints in $\S'$ that have a vertical bisector with $p_1$.
%Again, we first discard all the viewpoints $p_i$ such that no point in $\T[I_i]$ is closer to $p_i$ than to $p_1$. Let $p_v$ be one of the remaining viewpoints. We define $x_v=\max\{x(a_v),x_{1,v}\}$, where $x_{1,v}$ is such that the bisector $b_{1,v}$ has equation $x=x_{1,v}$. Notice that $x_v$ is the $x$-coordinate of the leftmost point in $T[u,q]$ that is visible from $p_v$ and closer to $p_v$ than to $p_1$.
First note that if there are several such viewpoints, then only the leftmost one is relevant.
This is because all viewpoints that have a vertical bisector with $p_1$ (and are relevant at this point of the algorithm) lie to the right of $p_1$ and have the same $y$-coordinate as $p_1$.
Thus the leftmost of them has its bisector crossing \T first and is the one that sees more of $\T[u,q]$.
Suppose now that there are viewpoints in $\S'$ with vertical bisectors with $p_1$, and let $p_v$ be the leftmost one.
Then the $x$-coordinate of the first region change is the minimum between the one obtained by the previous algorithm and $\max \{x(a_v), x(b_{1,v}) \}$,  where $x(b_{1,v})$ is such that the bisector $b_{1,v}$ has equation $x=x(b_{1,v})$.

In summary, we have proved the following result.

\begin{lemma}
  \label{lem:FirstbisInt}
  Let $[u,q]$ be an interval such that $p_1 \in \S$ is visible in
  all $\T[u,q]$ and is the closest visible viewpoint at $u$. Let
  $\S'$ be a set of viewpoints such that for each $p_i\in \S'$,
  $\T[a_i,q]$ is visible from $p_i$, for some $a_i$ such that $x(u)
  \leq x(a_i)$. Then in $O(m+\log n \log m +n')$ time we can find
  the leftmost point $u'\in \T[u,q]$ such that at $u'$ there is a
  change of region in $\vorvis({\T},\S')$, for  $n'$  the number of
  vertices in $\T[u,u']$.
\end{lemma}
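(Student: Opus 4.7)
The plan is to observe that the leftmost change of Voronoi region in $\T[u,q]$ must be caused either by some $p_i\in \S'$ becoming visible while strictly closer to the terrain than $p_1$ (a \emph{reappearance event} at $a_i$), or by the terrain crossing a bisector $b_{1,i}$ while $p_i$ is already visible (a \emph{bisector-crossing event}). I would compute the leftmost candidate of each kind separately and return the minimum.

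For reappearances, I would run a single filtering pass in $O(m)$ time using the constant-time \textsc{IsAlwaysCloser} test from Lemma~\ref{lem:bisInt}. First, discard every $p_i$ for which no point of $\T[a_i,q]$ is closer to $p_i$ than to $p_1$. Among the remaining viewpoints whose starting point $a_i$ is already closer to $p_i$ than to $p_1$, keep only the one with the leftmost $a_i$, call it $a_k$; if no such viewpoint exists set $a_k=q$. The point $a_k$ is the earliest possible reappearance event. I would then further prune to a set $\mathcal{R}$ containing only those surviving $p_i$ with $x(a_i)\leq x(a_k)$ for which some point of $\T[a_i,a_k]$ is closer to $p_i$ than to $p_1$; by construction a region change strictly before $a_k$ must come from a viewpoint in $\mathcal{R}$ and, since each such $a_i$ is still closer to $p_1$, it must come from a bisector crossing inside $\T[u,a_k]$.

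For bisector crossings, I would invoke the prune-and-search procedure of Lemma~\ref{lem:PruneAndSearch} on $\mathcal{R}$, which returns in $O(m+\log n\log m+n')$ time the leftmost point of $\T[u,a_k]$ where a bisector $b_{1,i}$ with $p_i\in \mathcal{R}$ crosses the terrain. The leftmost non-vertical candidate is the minimum of this point and $a_k$. Vertical bisectors would be handled as a separate $O(1)$ case: only the leftmost $p_v\in \S'$ having a vertical bisector with $p_1$ can matter, because all such viewpoints sit on a common horizontal line to the right of $p_1$; this contributes one additional candidate at $\max\{x(a_v),x(b_{1,v})\}$. Taking the minimum of all three candidates yields $u'$, and summing $O(m)$ filtering, $O(m+\log n\log m+n')$ prune-and-search, and $O(1)$ vertical-bisector handling gives the claimed bound.

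The main obstacle I expect is making sure the filtering phase really shrinks the search to $\mathcal{R}$ without overlooking an earlier region change: the argument must justify that a reappearance strictly before $a_k$ is impossible (by the leftmost choice of $a_k$) and that any bisector crossing caused by a viewpoint discarded in filtering is either dominated by $a_k$ or by a crossing involving some $p_i\in \mathcal{R}$. The remaining subtlety is that $n'$ is measured relative to the output $u'$ rather than the full interval $[u,q]$, so the final ``walk'' that identifies $u'$ among the one or two surviving candidates of Lemma~\ref{lem:PruneAndSearch} must charge its traversal cost to vertices strictly left of $u'$; this is exactly what the statement of Lemma~\ref{lem:PruneAndSearch} already allows, so once the filtering is argued correct the time bound follows by direct summation.
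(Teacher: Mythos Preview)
Your proposal is correct and follows essentially the same approach as the paper: an $O(m)$ filtering pass using \textsc{IsAlwaysCloser} to isolate the leftmost reappearance candidate $a_k$ and the residual set $\mathcal{R}$, then an invocation of the prune-and-search procedure of Lemma~\ref{lem:PruneAndSearch} on $\mathcal{R}$, with vertical bisectors handled separately via the leftmost such viewpoint. The paper organizes the argument into exactly these three pieces and combines the candidates in the same way.
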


We can compute the colored visibility map in $O(n + (m^2 + k_c )\log n)$ time
(Theorem~\ref{thm:1.5-colvis-alg}). By plugging in the results from
Lemma~\ref{lem:bisInt} and Lemma~\ref{lem:FirstbisInt} into
Lemma~\ref{lem:1.5D-vor-alg} we then obtain:

\begin{theorem}
  \label{thm:1.5D-vor-alg}
  Given a 1.5D terrain \T, the Voronoi visibility map $\vorvis({\T},\S)$ can be
  constructed in $O(n + (m^2 + k_c) \log n + k_v (m+\log n \log m))$ time.
\end{theorem}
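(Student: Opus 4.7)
The plan is to obtain the stated running time by composing the three ingredients already established in this section: Theorem \ref{thm:1.5-colvis-alg} for building $\colvis(\T,\S)$, Lemma \ref{lem:1.5D-vor-alg} for extracting $\vorvis(\T,\S)$ from $\colvis(\T,\S)$, and the two query-time bounds from Lemma \ref{lem:bisInt} and Lemma \ref{lem:FirstbisInt}. Nothing new needs to be proved; it is purely a substitution argument.

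First, I would invoke Theorem \ref{thm:1.5-colvis-alg} to construct $\colvis(\T,\S)$ in $O(n+(m^2+k_c)\log n)$ time. This step also produces the doubly linked list of vertices of the colored visibility map that the Voronoi algorithm sweeps over, so no additional preprocessing is needed. In particular, the preprocessing of $\T$ for ray-shooting queries in $O(n)$ time (used inside Lemma \ref{lem:FirstbisInt} via the computation of $\T[\mu]$) is already absorbed into this term.

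Next, I would apply Lemma \ref{lem:1.5D-vor-alg} with the concrete implementations of the two primitive queries. By Lemma \ref{lem:bisInt}, each \textsc{IsAlwaysCloser} call runs in constant time, so $A(n,m)=O(1)$. By Lemma \ref{lem:FirstbisInt}, each \textsc{FirstRegionChange} call runs in $O(m+\log n\log m + n')$ time, where the $n'$ contributions telescope over all calls into an $O(n)$ term (as argued in the proof of Lemma \ref{lem:time_spent_on_function_calls}), so we may take $F(n,m)=O(m+\log n\log m)$. Substituting these bounds into the expression from Lemma \ref{lem:1.5D-vor-alg} yields
\[
O\bigl((k_c+m^2+mk_v)\cdot 1 + k_v(m+\log n\log m) + n\bigr)
\;=\; O\bigl(n+k_c+m^2+k_v(m+\log n\log m)\bigr).
\]

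Finally, adding the cost of building $\colvis(\T,\S)$, the overall running time is
\[
O\bigl(n+(m^2+k_c)\log n\bigr) + O\bigl(n+k_c+m^2+k_v(m+\log n\log m)\bigr),
\]
and since the term $(m^2+k_c)\log n$ dominates $k_c+m^2$, this simplifies to $O(n+(m^2+k_c)\log n + k_v(m+\log n\log m))$, as claimed. There is no real obstacle; the only minor care needed is to verify that the additive $n'$ in the \textsc{FirstRegionChange} time indeed sums to $O(n)$ across all calls (so that it contributes only an additive $n$ to the bound, not an $n\cdot k_v$), which is exactly what the proof of Lemma \ref{lem:time_spent_on_function_calls} ensures.
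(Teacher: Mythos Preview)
Your proposal is correct and follows exactly the same approach as the paper: compute $\colvis(\T,\S)$ via Theorem~\ref{thm:1.5-colvis-alg}, then plug the query-time bounds from Lemma~\ref{lem:bisInt} and Lemma~\ref{lem:FirstbisInt} into Lemma~\ref{lem:1.5D-vor-alg} and simplify. The paper's own proof is in fact just a one-sentence pointer to this substitution, so your write-up is if anything more detailed than needed.
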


 The algorithm assumes that no bisector of two viewpoints is
 collinear with an edge of $\cal T$, but it can easily be
 adapted to handle this case as well. On the other hand, if during the
 algorithm a relevant bisector is only tangent to the
 terrain (i.e., right before and right after this intersection the
 terrain is closer to the same viewpoint, among the two viewpoints
 defining the bisector), we do not consider this as a change in the Voronoi
 region, even if at
 the tangency point the terrain has two closest viewpoints.

 We also note that in the (hopefully unlikely) case of $k_v \in \Theta(mn)$,
 the simpler divide and conquer algorithm achieves a better running time than this algorithm.

\section{2.5D Terrains}

\subsection{Complexity of the Visibility Structures}

We start by showing that the visibility map of a 2.5D terrain can have $\Omega(m^2n^2)$ complexity.

%\begin{wrapfigure}[12]{r}{0.29\textwidth}
 % \vspace{-1.7\baselineskip}
  %\centering
  %\includegraphics[width=0.28\textwidth]{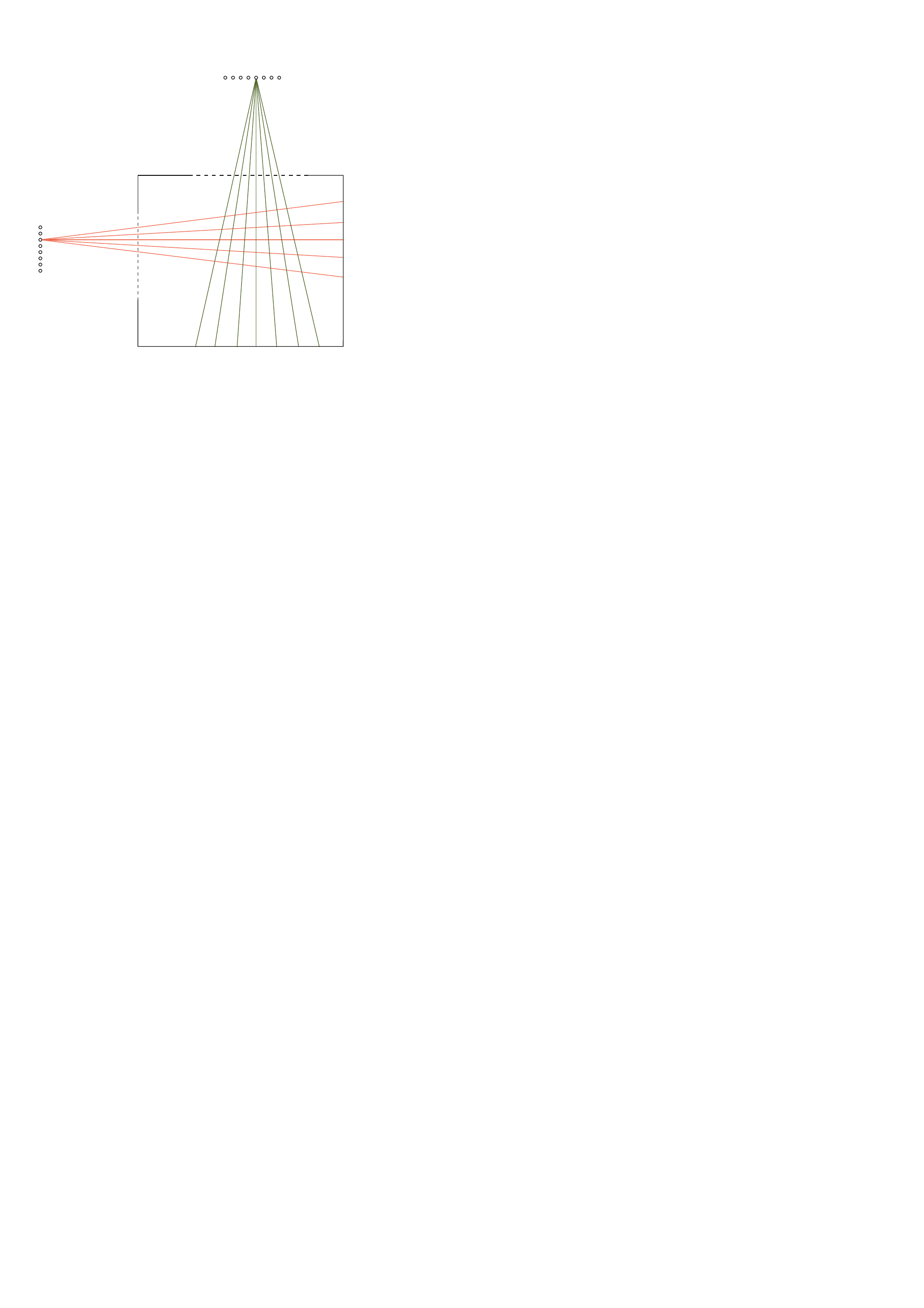}
  %\caption{Viewpoints are shown as white circles and rays indicate the part of the terrain visible from the viewpoint.}
 % \label{fig:2.5D-VisMap-quad}
%\end{wrapfigure}

\begin{proposition}
  \label{prop:lowerbound_vis_2.5d}
  The visibility map $\vis(\T,\G)$, for a 2.5D terrain \T, % with $n$ vertices and $m$
  % viewpoints
  can have complexity $\Omega(m^2n^2)$.
\end{proposition}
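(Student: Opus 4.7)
The plan is to exhibit a terrain $\T$ with $n$ vertices and $m$ viewpoints on it for which $\vis(\T,\G)$ has $\Omega(m^2n^2)$ boundary vertices, each arising from a transverse crossing of two viewshed boundaries that is visible to exactly two viewpoints.

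The starting point is the classical lower-bound construction for a single viewpoint whose viewshed has $\Theta(n^2)$ boundary complexity: take a flat ``screen'' region of $\T$ together with $\Theta(n)$ carefully placed ridges (or spikes) so that, from a distinguished viewpoint, each ridge casts a family of $\Theta(n)$ shadow edges on the screen, producing $\Theta(n^2)$ shadow edges in total. I would adapt this to $m$ viewpoints by placing them along a very short arc above a common screen, so that each viewpoint $p_i$ independently produces a grid-like shadow pattern of $\Theta(n^2)$ edges on the screen, but at a slightly viewpoint-dependent orientation. Since $m\le n$, both the shadow-casters and the $m$ viewpoints fit within the $O(n)$ vertex budget.

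Next, I would argue that for each pair $(i,j)$ the boundaries of $\viewshed{p_i}$ and $\viewshed{p_j}$ intersect transversely at $\Omega(n^2)$ points inside the screen, because two grids of $\Theta(n)$ shadow lines at different orientations must cross $\Theta(n^2)$ times. At any such crossing the screen is locally split into four quadrants: three of them lie inside $\viewshed{p_i}\cup\viewshed{p_j}$ and are therefore visible, while the fourth ``double-shadow'' quadrant is invisible from both $p_i$ and $p_j$. Hence the crossing is a vertex of $\vis(\T,\G)$ as long as no other viewpoint sees the double-shadow cell.

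Ensuring this last condition is the main obstacle. I would arrange the construction so that for each pair $(i,j)$ the double-shadow cells near $(i,j)$-crossings are invisible from every $p_k$ with $k\notin\{i,j\}$. Concretely, I would make the shadows very thin and add small blockers that restrict each viewpoint's view of the screen to a characteristic narrow slab, so that only $p_i$ and $p_j$ can see the intersection of their slabs; again this is affordable because $m\le n$ and the blockers can be amortized against the $O(n)$ vertex budget. Summed over the $\binom{m}{2}$ pairs of viewpoints, this yields $\Omega(m^2n^2)$ distinct vertices on the boundary of $\vis(\T,\G)$, establishing the lower bound.
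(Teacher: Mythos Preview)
Your overall plan---produce many transverse crossings of viewshed boundaries and argue that each is a vertex of $\vis(\T,\G)$---is sound in spirit, but the step you yourself flag as ``the main obstacle'' is not actually resolved. If all $m$ viewpoints sit on a short arc and cast shadow grids at only slightly different orientations, then at a crossing of $\partial\viewshed{p_i}$ and $\partial\viewshed{p_j}$ the double-shadow cell will typically be seen by many other $p_k$, so it is \emph{not} a vertex of $\vis(\T,\G)$. Your proposed fix---restrict each $p_k$ to a narrow slab via extra blockers---does not obviously coexist with the requirement that each pair $(i,j)$ still realizes $\Omega(n^2)$ useful crossings: if the slabs are arranged so that only $p_i$ and $p_j$ see the region where their boundaries cross, then each viewpoint's $\Theta(n)$ shadow lines must be spread over $\Theta(m)$ disjoint pairwise-intersection regions, leaving only $\Theta(n/m)$ lines per region and hence $\Theta(n^2/m^2)$ crossings per pair, which sums to $\Theta(n^2)$ rather than $\Theta(m^2n^2)$. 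You would need a much more careful budget argument (or a different slab geometry) to close this gap, and the proposal does not supply one.

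The paper sidesteps this difficulty with a much simpler construction that you may find instructive. Instead of placing all viewpoints together, it splits them into two \emph{orthogonal} groups: a flat courtyard is surrounded by walls, $\Theta(n)$ narrow windows are cut into the northern wall and $\Theta(n)$ into the western wall, and half the viewpoints sit behind the northern windows, half behind the western ones. Each northern viewpoint then sees the courtyard as $\Theta(n)$ thin \emph{vertical} strips (one per window), and each western viewpoint sees $\Theta(n)$ thin \emph{horizontal} strips. The visible part of the courtyard is simply the union of all $\Theta(mn)$ vertical and $\Theta(mn)$ horizontal strips, so the invisible part is a $\Theta(mn)\times\Theta(mn)$ grid of cells---immediately giving $\Theta(m^2n^2)$ complexity. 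The condition that a double-shadow cell be invisible from \emph{every} viewpoint is automatic here: a grid cell lies outside every vertical strip (hence is unseen by every northern viewpoint) and outside every horizontal strip (hence unseen by every western viewpoint). No per-pair blocking, slabs, or delicate angular perturbations are needed.
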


\begin{proof}
  We present a terrain that consists of a flat (horizontal) rectangle, the \emph{courtyard},
  surrounded by a thin wall, see Figure~\ref{fig:2.5D-VisMap-quad}. We make $O(n)$ (almost) vertical incisions, or
  \emph{windows}, in the northern and western walls. We place half of the
  viewpoints behind windows in the northern wall, and the other half behind
  windows in the western wall. Each viewpoint is placed so that it can see
  through $O(n)$ windows into the courtyard.
It follows that the visibility map inside the courtyard
  forms an $\Theta(mn) \times \Theta(mn)$ grid.
\end{proof}

\begin{figure}[htb]
  \centering
  \includegraphics[page=2]{}
  \caption{The visibility map of \T has complexity $\Omega(m^2n^2)$. Viewpoints are shown as white circles and rays indicate which part of
    the terrain is visible from the viewpoint.}

  \label{fig:2.5D-VisMap-quad}
\end{figure}

%\frank{New stuff starts here}

Clearly, this gives us also an $\Omega(m^2n^2)$ lower bound for the maximum complexity of the colored
visibility map $\colvis(\T,\G)$, and the Voronoi visibility map
$\vorvis(\T,\G)$.

Next, we show that the complexity of the colored visibility map
$\colvis(\T,\G)$ can be at most $O(m^2 n^2)$. To this end, we briefly summarize
a result on hidden surface removal~\cite{deBerg1996generalized}. Consider the
following problem: we are given a set $S$ of objects in $\R^3$, a viewpoint
$p$, and a light source $\ell$, and we wish to find the parts of $S$ that are
visible from $p$, partitioned into lit and unlit pieces. This information can
be captured in the \emph{generalized visibility map}: a subdivision of the
viewing plane (of $p$) into maximal regions such that each region is entirely
(i) invisible from $p$, (ii) visible from $p$ and lit by $\ell$, or (iii)
visible from $p$ but unlit by $\ell$. If $S$ is a 2.5D terrain with $n$
vertices, the generalized visibility map has complexity at most
$O(n^2)$~\cite{deBerg1996generalized}.

From this, we can easily derive that, for any pair of viewpoints $p$ and $q$, the colored
  visibility map $\colvis(\T,\{p,q\})$ has complexity $O(n^2)$: Let $p$ and $q$ be two viewpoints. We observe that the vertices of
  $\colvis(\T,\{p,q\})$ are either part of the generalized visibility map of \T
  with viewpoint $p$ and light source $q$, or they are invisible from $p$ and
  are part of the viewshed of $q$. Both the generalized visibility map, and
  viewshed $\V_q$ consist of at most $O(n^2)$ vertices.
	
Next, we introduce some terminology. Let $v$ be a vertex of \T, and let $p \in \G$ 
be a viewpoint.  We define the \emph {ray} of $p$ and $v$, denoted
$\uparrow_p^v$, to be the half line that starts at
$v$ and has vector $\overrightarrow{pv}$.  Similarly, let $p \in \G$ be a point and
$e=uv$ be an edge of $\T$.  The \emph {vase} of
$p$ and $e$, denoted \vase{p}{e}, is the region in $\R^3$ bounded by $e$,
\vase{p}{u}, and \vase{p}{v} (see Figure~\ref{fig:vases}). 

\begin{figure}[tb]
  \centering
  \subfigure[]{\includegraphics{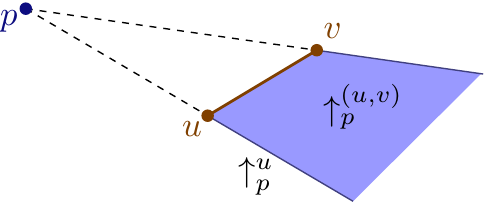}}
  \subfigure[]{\includegraphics{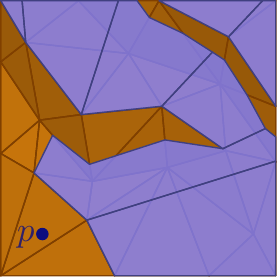}}
  \subfigure[]{\includegraphics{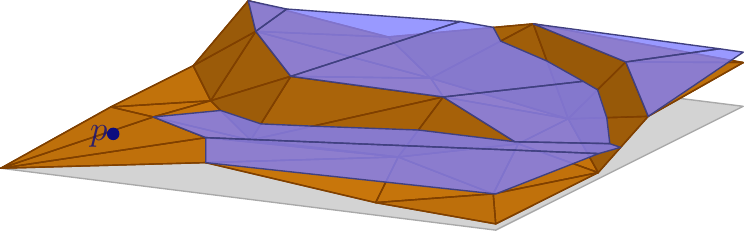}}
  \caption{(a) A ray and a vase. (b) The top-down view of a terrain \T with a
    single viewpoint $p$. The domain is decomposed in the viewshed $\vis(\T,p)$
    and a collection of vases. (c) a 3D view of \T and the vases of $p$. }
  \label {fig:vases}
\end{figure}

Clearly, the complexity of $\colvis(\T,\G)$ is bounded by its number of
vertices. We assume that the vertices are in general position, that is, there
are no four vertices co-planar. We now classify the vertices of $\colvis(\T,\G)$:

\begin {observation} \label {obs:vertex-types}
$\colvis(\T,\G)$ can have three types of vertices:
\begin {enumerate*}
 \item [(1)]  vertices of $\T$,
 \item [(2)]  intersections between an edge of $\T$ and a vase, and
 \item [(3)]  intersections between a triangle of $\T$ and two vases.
 \end {enumerate*}
\end {observation}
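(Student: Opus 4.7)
The plan is to classify each vertex $q$ of $\colvis(\T,\G)$ according to where it lies on $\T$. First I would argue that the arcs of $\colvis(\T,\G)$ come from only two sources: pieces of edges of $\T$, and \emph{shadow boundaries}, i.e., curves on $\T$ along which some viewpoint $p \in \G$ transitions between being visible and invisible. A point $q \in \T$ is on the shadow boundary of $p$ precisely when the segment $\overline{pq}$ grazes some edge $e \in E(\T)$; equivalently, $q$ lies on the vase \vase{p}{e}. Hence every shadow boundary arc is contained in $\T \cap \vase{p}{e}$ for some viewpoint $p$ and some edge $e$.

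With the arcs identified, the classification follows by a case analysis on the location of $q$ on $\T$. If $q$ is a vertex of $\T$, we obtain type (1) directly. If $q$ lies in the relative interior of an edge $e$ of $\T$, then $e$ itself is an arc through $q$; for $q$ to be a genuine vertex of the subdivision a second arc must meet it, and since no other edge of $\T$ can cross $e$ in its interior, this second arc must be a shadow boundary, so $q$ is an intersection of $e$ with a vase, giving type (2). If $q$ lies in the relative interior of a triangle $f$ of $\T$, no edge of $\T$ passes through $q$, so every arc through $q$ is a shadow boundary; a subdivision vertex needs at least two of these, which places $q$ in the intersection of $f$ with two vases and yields type (3).

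The main obstacle is ruling out ``stray'' endpoints, i.e., an arc of $\T \cap \vase{p}{e}$ terminating inside a triangle of $\T$ without simultaneously being on a second vase. Here I would use that the vase is a planar region in $\mathbb{R}^3$ whose boundary consists of $e$ together with the two rays \vase{p}{u} and \vase{p}{v}, so an arc of $\T \cap \vase{p}{e}$ can only terminate on $\partial\vase{p}{e}$. Termination on $e$ yields a point on an edge of $\T$, and is already covered by case (2); termination on the ray \vase{p}{u} forces $q$ onto the boundary of every vase \vase{p}{e'} with $e'$ incident to $u$, so $q$ lies in at least two vases and yields case (3). The general-position assumption (no four coplanar vertices) ensures that we need not consider three or more vases meeting at a single point as a generic feature, completing the classification.
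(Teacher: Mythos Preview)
The paper states this as an \emph{observation} with no accompanying proof; it is simply asserted (under the standing general-position assumption) and then used to count vertices in the proof of Theorem~\ref{thm:colviscomplexity2.5D}. Your argument is therefore not being compared against a paper proof, but it is a sound and natural justification of the observation: identifying the arcs of $\colvis(\T,\G)$ as pieces of terrain edges and pieces of vase-terrain intersections, and then doing a case split on whether the vertex lies at a terrain vertex, in the interior of a terrain edge, or in the interior of a triangle, is exactly the right decomposition.

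Two small remarks. First, your phrase ``precisely when'' in the description of shadow boundaries overstates things: a point $q$ lying on \vase{p}{e} need not be on the visibility boundary of $p$ (the segment $\overline{pq}$ may be blocked before reaching $e$, or $e$ may not be the first grazed edge). You only need, and only use, the forward inclusion---every shadow-boundary point lies on some vase---so the argument is unaffected. Second, your treatment of ``stray endpoints'' handles the case where the curve $\T\cap\vase{p}{e}$ leaves the planar region \vase{p}{e}, but a shadow arc can also end in the interior of \vase{p}{e} because a \emph{different} edge $e'$ takes over as the occluder; at that transition point the line of sight grazes both $e$ and $e'$, so $q$ lies on \vase{p}{e'} as well and the vertex is again of type~(3). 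With that addition the case analysis is complete.
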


\begin{theorem}
  \label {thm:colviscomplexity2.5D}
  The colored visibility map $\colvis (\T,\G)$, for a 2.5D terrain \T, has
  complexity $O(m^2n^2)$.
\end{theorem}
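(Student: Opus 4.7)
The plan is to bound the number of vertices of $\colvis(\T,\G)$ by reducing to the two-viewpoint case, which has already been shown to have complexity $O(n^2)$. By general position (no four vertices of \T coplanar), each vertex of $\colvis(\T,\G)$ has $O(1)$ degree, so bounding the vertex count bounds the total complexity up to constants. The main step will therefore be to assign, to each vertex of $\colvis(\T,\G)$, a pair $\{p,q\} \subseteq \G$ such that the vertex also appears in $\colvis(\T,\{p,q\})$.

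I would go through the three types from Observation~\ref{obs:vertex-types}. Type (1) vertices are vertices of \T, contributing only $n$ in total, so they are negligible; they appear in every pairwise map. For a type (2) vertex $x$, lying on the intersection of a terrain edge with a single vase \vase{p}{e'}, $x$ is a point where the visibility of $p$ changes along the edge, so $x$ is a vertex of $\colvis(\T,\{p,q\})$ for any other viewpoint $q \in \G$. For a type (3) vertex $x$, lying at the intersection of a triangle of \T with two vases, consider the viewpoints $p,q$ defining these two vases: if $p=q$, then $x$ is a vertex of the viewshed \viewshed{p} alone and thus of any $\colvis(\T,\{p,q'\})$; otherwise $p \neq q$ and $x$ is by definition a vertex of $\colvis(\T,\{p,q\})$. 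In all cases, each vertex is charged to at least one pair $\{p,q\}$.

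Summing the bound from the two-viewpoint case over the $\binom{m}{2} = O(m^2)$ pairs gives
\[
   \sum_{\{p,q\} \subseteq \G} |\colvis(\T,\{p,q\})| \;=\; O(m^2) \cdot O(n^2) \;=\; O(m^2 n^2),
\]
which is an upper bound on the number of vertices of $\colvis(\T,\G)$. Since each vertex has constant degree, the edge and face counts are of the same order, yielding the claimed $O(m^2 n^2)$ bound.

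The main obstacle is really only in type (3): one must be sure that a vertex produced by two vases of the \emph{same} viewpoint is not accidentally missed, and that for two vases of distinct viewpoints $p,q$ no extra viewpoints of \G can influence whether the point appears as a vertex of the pairwise map (i.e., that the two-viewpoint map genuinely contains this vertex). General position and the definition of the vase imply both, so the argument goes through cleanly. Combined with the matching lower bound in Proposition~\ref{prop:lowerbound_vis_2.5d}, this establishes that the maximum complexity of $\colvis(\T,\G)$ is $\Theta(m^2 n^2)$.
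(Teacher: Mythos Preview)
Your proposal is correct and follows essentially the same approach as the paper: both arguments use the vertex classification of Observation~\ref{obs:vertex-types} and bound the type~(3) vertices by charging each to a pair $\{p,q\}$ and invoking the $O(n^2)$ bound on $\colvis(\T,\{p,q\})$. The only cosmetic difference is that the paper counts type~(2) vertices directly as $O(mn^2)$ (there are $O(mn)$ vases and $O(n)$ edges), whereas you fold them into the pairwise charging as well; your case split for type~(3) into same-viewpoint versus distinct-viewpoint vases is a detail the paper leaves implicit.
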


\begin {proof}
  Each vase comes from a viewpoint in \G and an edge in $E(\T)$.  Clearly,
  $|V(\T)|$, $|E(\T)|$, $|F(\T)| \in O(n)$.  So, the number of vertices of type (1) is
  at most $O(n)$ and the number of vertices of type (2) is at most $O(mn^2)$. For each vertex $v$ of type (3), there exists a pair 
	of viewpoints $p$ and $q$ such that $v$ is also a vertex of $\colvis(\T,\{p,q\})$. Thus the number
of vertices of type (3) is at most $O(m^2n^2)$.
   We conclude that the
  complexity of the colored visibility map $\colvis(\T,\G)$ is at most
  $O(m^2n^2)$. 
\end{proof}

Since the vertices of the visibility map $\vis(\T,\G)$ are a subset of the
vertices of $\colvis(\T,\G)$, it then also follows that $\vis(\T,\G)$ has
complexity at most $O(m^2n^2)$.

\begin{theorem}
  \label{thm:viscomplexity2.5D}
  The visibility map $\vis (\T,\G)$, for a 2.5D terrain \T, has complexity
  $O(m^2n^2)$.
\end{theorem}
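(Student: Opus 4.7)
The plan is to observe that $\vis(\T,\G)$ is a coarsening of $\colvis(\T,\G)$, so the upper bound follows immediately from Theorem~\ref{thm:colviscomplexity2.5D}. The colored visibility map partitions \T into maximal regions according to the exact subset of \G that sees each point, whereas the visibility map only distinguishes whether this subset is empty or not. Thus every cell of $\colvis(\T,\G)$ is contained in a single cell of $\vis(\T,\G)$, and $\vis(\T,\G)$ can be obtained from $\colvis(\T,\G)$ by merging all cells whose associated subset is nonempty.

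In particular, I would argue that every vertex of $\vis(\T,\G)$ is also a vertex of $\colvis(\T,\G)$. Indeed, a vertex of $\vis(\T,\G)$ is either a vertex of $\T$ (which is a vertex of $\colvis(\T,\G)$ by Observation~\ref{obs:vertex-types}(1)) or a point where the boundary between the visible and invisible regions meets another such boundary or changes direction. At any such point the subset of viewpoints seeing the terrain changes from $\emptyset$ to a nonempty subset (or between two nonempty subsets separated by an arc of the visible/invisible boundary), so by definition the point is also a vertex of $\colvis(\T,\G)$. The same argument applies to edges and faces, hence the total combinatorial complexity of $\vis(\T,\G)$ is at most that of $\colvis(\T,\G)$.

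Combining this observation with Theorem~\ref{thm:colviscomplexity2.5D} gives $O(m^2n^2)$ immediately. There is no significant obstacle here—the nontrivial work was already done in bounding $\colvis(\T,\G)$, where the key ingredient was the $O(n^2)$ bound on the generalized visibility map of a terrain from hidden surface removal, used to handle the type-(3) vertices one viewpoint pair at a time.
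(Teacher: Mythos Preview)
Your proposal is correct and matches the paper's own argument essentially verbatim: the paper simply notes that the vertices of $\vis(\T,\G)$ are a subset of the vertices of $\colvis(\T,\G)$, so the $O(m^2n^2)$ bound follows directly from Theorem~\ref{thm:colviscomplexity2.5D}.
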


Finally, we are interested in the Voronoi visibility map.  $\vorvis(\T,\G)$ can
have vertices which are not in $\colvis(\T,\G)$. These vertices correspond to
intersections of Voronoi edges with terrain triangles. We use \emph{power
  diagrams} to show that the complexity can only be a factor $m$ higher than
that of $\colvis (\T,\G)$.

Let $\mathcal{C} = C_1,..,C_m$ be a set of $m$ circles in $\R^2$, and let $c_i$
and $r_i$ denote the center and radius of $C_i$, respectively. The (2D) power
diagram $\PD(\cal C)$ is the subdivision of $\R^2$ into $m$ regions, one for
each circle, such that $R_i = \{ x \in \R^2 \textrm{ s. t., for all } j \in
\{1,..,m\},\ \pow{C_i,x} \leq \pow{C_j,x} \}$, where $\pow{C_i,x} =
d_2(c_i,x)^2 - r_i^2$ (and $d_2(a,b)$ denotes the Euclidean distance between
$a$ and $b$ in $\R^2$). The (2D) power diagram of $m$ circles has complexity
$O(m)$ and can be computed in $O(m \log m)$ time
\cite{aurenhammer1987power,aurenhammer1988improved}.

Let $\VD(\G)$ denote the $3$-dimensional Voronoi diagram of $\G$. We observe
that the restriction of $\VD(\G)$ to any single plane $H$ in $\R^3$ corresponds
to a \emph{power diagram} $\PD(\mathcal{C}_\G)$ in $\R^2$: Assume without loss
of generality that $H$ is a horizontal plane at $z=0$, and let $\xi \geq
\max_{p \in \G} p_z^2$ be some large value. Any point $a \in H$ is closer to $p
\in \G$ than to $q \in \G$ if (and only if) $d(a,p) = d_3(a,p) \leq d_3(a,q)$,
and hence if $d_3(a,p)^2 \leq d_3(a,q)^2$. Using that $a_z = 0$ we can rewrite
this to $d_2(a,\underline{p})^2 - (\xi - p_z^2) \leq d_2(a,\underline{q}) -
(\xi - q_z^2)$. So, if we introduce a circle $C_p$ in $\mathcal{C}_\G$ for every
viewpoint $p$ with center $\underline{p}$ and radius $r_p$ such that $r_p^2 =
\xi - p_z^2$, then we get that $a$ is closer to $p$ than to $q$ if and only if
$\pow{C_p,a} \leq
\pow{C_q,a}$. % Let $\mathcal{C}_\G$ denote this set of circles.
% we obtain this
% way.
% It follows that the Voronoi diagram $\VD(\G)$ restricted to $H$
% corresponds exactly to $\PD(\mathcal{C}_\G)$.
Thus, we prove:

\begin {theorem}
  \label{thm:2.5d_complexity_vorvis}
  The Voronoi visibility map $\vorvis (\T,\G)$,  for a 2.5D terrain \T, has
  complexity $O(m^3n^2)$.
\end {theorem}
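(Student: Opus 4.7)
The map $\vorvis(\T,\G)$ refines $\colvis(\T,\G)$: inside each maximal connected region $R$ of $\colvis(\T,\G)$ whose associated set of visible viewpoints is $\G'_R \subseteq \G$, the map $\vorvis$ subdivides $R$ by the three-dimensional Voronoi diagram $\VD(\G'_R)$ restricted to $R$. My plan is to bound, triangle by triangle, the new vertices contributed by these local Voronoi partitions, by invoking the power-diagram correspondence just established.

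Fix a triangle $f$ of $\T$ and let $c_f$ denote the complexity of $\colvis(\T,\G)$ restricted to $f$; by Theorem~\ref{thm:colviscomplexity2.5D}, $\sum_f c_f = O(m^2n^2)$. For every $\colvis$-region $R\subseteq f$, the restriction of $\VD(\G'_R)$ to the supporting plane of $f$ is the planar power diagram $\PD(\mathcal{C}_{\G'_R})$ of $|\G'_R|\le m$ circles and therefore has complexity $O(m)$. Its vertices lying in the interior of $R$ contribute at most $O(m)$ new vertices of $\vorvis$ per region, and since $\colvis|_f$ has $O(c_f)$ regions, this yields at most $O(m c_f)$ interior vertices on $f$.

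The remaining new vertices of $\vorvis$ on $f$ are intersections between edges of the local power diagrams $\PD(\mathcal{C}_{\G'_R})$ and edges of $\colvis|_f$. Each edge of $\colvis|_f$ separates two regions, and each region's power diagram has $O(m)$ edges. Since every power-diagram edge is a line segment (or ray) it meets a given $\colvis$-edge in at most one point, so each edge of $\colvis|_f$ accounts for $O(m)$ boundary crossings. Summed over the $O(c_f)$ edges of $\colvis|_f$, the total number of boundary crossings on $f$ is also $O(m c_f)$.

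Hence $\vorvis$ restricted to $f$ has complexity $O(c_f + m c_f) = O(m c_f)$, and summing over the $O(n)$ triangles of $\T$ gives $\sum_f O(m c_f) = O(m \sum_f c_f) = O(m \cdot m^2 n^2) = O(m^3 n^2)$, as claimed. The main subtlety, and the reason to argue region by region rather than via a single global overlay, is that the set of ``competing'' viewpoints varies from one $\colvis$-region to another, so one cannot simply overlay $\colvis(\T,\G)$ with a single power diagram of all $m$ viewpoints; fortunately, each local power diagram still has size $O(m)$, which is exactly what is needed for the sums above to telescope to the claimed bound.
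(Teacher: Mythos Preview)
Your proof is correct and follows essentially the same approach as the paper: bound the extra vertices of $\vorvis$ inside each $\colvis$-face by the $O(m)$ complexity of the planar power diagram associated to that face, and multiply by the $O(m^2n^2)$ faces of $\colvis$. Your version is in fact more careful than the paper's one-paragraph argument---you explicitly work triangle by triangle (so every region lies in a single plane), you separately account for interior power-diagram vertices and for crossings with $\colvis$-edges, and you correctly point out that the relevant diagram in each region is $\VD(\G'_R)$ rather than $\VD(\G)$---but the underlying idea is identical.
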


\begin{proof}
  The restriction of $\VD(\G)$ to any single plane in $\R^3$ corresponds to a
  power diagram in $\R^2$. A power diagram in $\R^2$ has complexity $O(m)$
  \cite{aurenhammer1987power}. The colored visibility map $\colvis(\T,\G)$ has
  $O(m^2n^2)$ vertices, and hence $O(m^2n^2)$ faces. Thus, each face of
  $\colvis(\T,\G)$ can contain at most $O(m)$ vertices of $\vorvis(\T,\G)$. % The
  % theorem follows.
\end{proof}

%\frank{new stuff ends here}

\subsection{Algorithms to Construct the Visibility Structures}

\subsubsection{Computing the (Colored) Visibility Map}

Katz et al.~\cite{kos-ehsrosus-92} developed an $O((n \alpha(n) + k) \log n)$ time
algorithm to compute the viewshed of a single viewpoint, where $k$ is the
output complexity and $\alpha(n)$ is the extremely slowly growing inverse of
the Ackermann function.  Coll et al.~\cite{cms-gvmv-10} use this algorithm to
compute the visibility map of a 2.5D terrain in $O(m^2n^4)$ time and
space.

\begin{figure}[tb]  
\centering
  \includegraphics{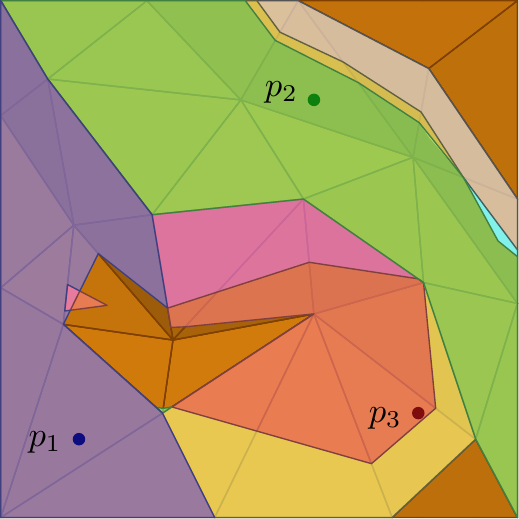}
  \caption{Overlay \overviewshed.}
  \label{fig:overlay_viewsheds}
\end{figure}

Essentially they project the individual viewsheds onto $\R^2$, and
construct the overlay $\overviewshed = \bigoplus_{p \in \G}
\underline{\viewshed{p}}$ (see Figure~\ref{fig:overlay_viewsheds}). It is then
easy to construct the (colored) visibility map from \overviewshed. We use the
same approach. However, using our observations from the previous section, we can
show that even if the viewsheds have complexity $\Theta(n^2)$, we can compute the
(colored) visibility map in $O(m^3n^2)$ time. More specifically, we obtain:

 \begin{lemma}
  \label{lem:construction_overlay}
  Given a 2.5D terrain \T,
  %with $n$ vertices and a set \G of $m$ viewpoints,
  the planar subdivision \overviewshed can be constructed in $O(m(n\alpha(n) +
  \min(k_c,n^2))\log n + mk_c)$ time.
\end{lemma}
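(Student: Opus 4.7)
The plan is to construct $\overviewshed$ in two stages: first compute the $m$ projected viewsheds individually, then incrementally overlay them into a single planar subdivision.

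For the first stage, I would apply the algorithm of Katz et al.~\cite{kos-ehsrosus-92} to each viewpoint in $\G$. The key preliminary observation is that every vertex of $\underline{\viewshed{p}}$ can be classified exactly as in Observation~\ref{obs:vertex-types}, but using only vases from $p$; hence each such vertex is also a vertex of $\colvis(\T,\G)$, since the subset of viewpoints covering the corresponding terrain point changes there. Combined with the classical $O(n^2)$ upper bound for a single viewshed, this yields $|\underline{\viewshed{p}}| = O(\min(k_c,n^2))$ for every $p\in\G$. Each invocation of Katz et al.\ therefore runs in $O((n\alpha(n)+\min(k_c,n^2))\log n)$ time, and summing over the $m$ viewpoints accounts for the first term of the claimed bound.

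For the second stage, I would maintain a running overlay and add one viewshed at a time. Since any overlay of a subset of the viewsheds is a refinement of the final subdivision $\overviewshed$, every intermediate overlay has complexity $O(k_c)$; the incoming viewshed has complexity $O(\min(k_c,n^2)) = O(k_c)$ by the first-stage observation. Using a standard overlay algorithm for two planar subdivisions that runs in time linear in input plus output size, each merge costs $O(k_c)$, contributing $O(mk_c)$ over the $m$ insertions. Adding the costs of the two stages gives the claimed time.

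The main obstacle is the per-viewshed bound in the first stage: without the observation that $|\underline{\viewshed{p}}| = O(k_c)$, the output-sensitive term in Katz et al.\ would have to be charged to the worst-case $\Theta(n^2)$, giving $\Theta(mn^2\log n)$ and losing output sensitivity when $k_c \ll n^2$. The same observation is what keeps each incremental merge in the second stage bounded by $O(k_c)$ rather than being dominated by the input viewshed; so proving it carefully is the crucial step the rest of the argument rests on.
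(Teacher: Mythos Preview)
Your proposal is essentially the paper's own argument: compute the individual viewsheds with Katz et al., bound each $\kappa_i$ by $\min(k_c,n^2)$ via the observation that every viewshed vertex is already a $\colvis$ vertex, and then overlay incrementally with a linear-time overlay algorithm (the paper uses Finke--Hinrichs), charging $O(k_c)$ per merge.

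One terminological slip: you write that ``any overlay of a subset of the viewsheds is a refinement of the final subdivision $\overviewshed$''. The relation goes the other way---$\overviewshed$ is a refinement of every partial overlay---and that is precisely why the partial overlays have complexity at most $k_c$. Your conclusion is the right one; just flip the word.
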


\begin{proof}
  Let $\G = p_1,..,p_m$ be the set of viewpoints. To compute the individual
  viewsheds we use the algorithm of Katz et al. \cite{kos-ehsrosus-92}. For a
  single viewpoint $p_i$ this takes $O((n\alpha(n) + \kappa_i)\log n)$ time,
  where $\kappa_i$ is the complexity of the viewshed \viewshed{p_i}. Hence, we
  need $O(\sum_{i=1}^m (n\alpha(n) + \kappa_i)\log n) = O(mn\alpha(n)\log n +
  \log n \sum_{i=1}^m \kappa_i)$ time to compute all $m$ viewsheds. Each
  individual viewshed has maximum complexity $O(n^2)$. We also have that for
  all $i$, $\kappa_i \leq k_c$. Hence, $\sum_{i=1}^m \kappa_i =
  O(m\min(k_c,n^2))$.

  We then iteratively build \overviewshed, starting with the empty
  subdivision. In each step $i$, with $1 \leq i \leq m$, we compute the overlay
  of \overviewshed with $\underline{\viewshed{p_i}}$ using the algorithm by
  Finke and Hinrichs \cite{finke1995overlaying}. Given two simply connected
  planar subdivisions $S_1$ and $S_2$ with $n_1$ and $n_2$ vertices
  respectively, this algorithm computes the overlay $S_1 \oplus S_2$ in
  $O(n_1+n_2 + k)$ time and space, where $k$ denotes the output
  complexity. % \frank{maybe note something on the simply connectedness}
  Since at every step the complexity of \overviewshed is at most $k_c$, it
  follows that we can compute \overviewshed in $O(mk_c)$ time. The lemma
  follows.
\end{proof}

\begin{theorem}
  \label{thm:2.5d_compute_vis_and_colvis}
Given a 2.5D terrain \T, both the visibility map $\vis(\T,\G)$ and the colored visibility map
  $\colvis(\T,\G)$  can be constructed
  in $O(m(n\alpha(n) +  \min(k_c,n^2))\log n + mk_c)$ time.
\end{theorem}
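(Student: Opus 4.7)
The plan is to reduce the construction of both maps to the computation of the overlay \overviewshed, which by Lemma~\ref{lem:construction_overlay} can be built within the claimed time bound. First I would invoke that lemma to obtain \overviewshed together with its DCEL (doubly-connected edge list) representation. Since \overviewshed is the overlay of the individually projected viewsheds $\underline{\viewshed{p_i}}$, each of its faces is covered by exactly one subset $\G' \subseteq \G$, namely those viewpoints whose projected viewshed contains the face.

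Next I would show how to annotate \overviewshed with these subsets so as to obtain $\colvis(\T,\G)$ essentially for free. The natural approach is to maintain the labels incrementally during the iterative overlay construction in Lemma~\ref{lem:construction_overlay}: when we overlay the current subdivision (already labeled with the subset of viewpoints covering each face) with $\underline{\viewshed{p_i}}$, each face of the resulting overlay inherits the label of the face of the old subdivision that contains it, possibly augmented by $p_i$ if it lies inside $\underline{\viewshed{p_i}}$. The algorithm of Finke and Hinrichs returns precisely this containment information at each merge step, so the bookkeeping adds only $O(1)$ work per face created, which is subsumed by the $O(mk_c)$ term. Lifting the planar subdivision \overviewshed back to the terrain surface (i.e., intersecting each face with the triangles of \T it projects over) only refines faces along projected terrain edges, and since edges of $\T$ are already part of each viewshed's boundary they are present in \overviewshed; hence the lift is immediate. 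This yields $\colvis(\T,\G)$ within the stated bound.

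Finally, to produce $\vis(\T,\G)$ I would traverse the faces of $\colvis(\T,\G)$ and classify each one as visible (if its label is non-empty) or invisible, then merge adjacent faces that agree on their visibility status by removing the shared edges from the DCEL. A standard linear traversal of the $O(k_c)$ faces suffices, so no additional asymptotic cost is incurred. Combining the bound from Lemma~\ref{lem:construction_overlay} with the constant extra work per face gives the stated time for both maps.

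The main obstacle I anticipate is making sure that the incremental labeling is correct and can actually be read off from Finke and Hinrichs' algorithm without an extra logarithmic factor per face, and arguing that lifting \overviewshed to the terrain does not increase the complexity beyond $k_c$. Both issues reduce to the observation that terrain edges are already present in each individual viewshed's subdivision (since viewshed boundaries are made up of portions of terrain edges and vase-triangle intersections projected down), so the overlay already respects the terrain triangulation.
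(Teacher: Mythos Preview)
Your approach is essentially the paper's: it states the theorem immediately after Lemma~\ref{lem:construction_overlay} with no further argument, relying on the sentence ``It is then easy to construct the (colored) visibility map from \overviewshed.'' You have simply spelled out what ``easy'' means, and the incremental labeling during the Finke--Hinrichs merges is exactly the right way to do it.

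One small inaccuracy worth flagging: your claim that ``edges of \T are already part of each viewshed's boundary'' is not quite true. The boundary of a single viewshed consists of (projected) vase--triangle intersection arcs together with \emph{some} terrain edges, namely those where visibility changes; an edge of \T lying entirely inside the visible region of $p_i$ is not on the boundary of $\underline{\viewshed{p_i}}$ and hence need not appear in \overviewshed. So a face of \overviewshed may project over several triangles of \T. This does not affect the time bound, though: you can either (i) overlay \overviewshed once more with the projected terrain triangulation, which costs $O(n + k_c)$ and is absorbed in the $O(mk_c)$ term, or (ii) observe that the label of a face (the set of viewpoints seeing it) does not change across an interior terrain edge, so for the purpose of reporting $\colvis$ and $\vis$ you do not actually need the lift to respect the triangulation. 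Either fix is routine; just do not rely on the stated justification.
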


\subsubsection{Computing the Voronoi Visibility Map.}

Let $F$ be a face of the colored visibility map $\colvis(\T,\G)$, and let
$\G_F$ denote the set of viewpoints that can see $F$. For each such face $F$ we
compute the intersection of $F$ with the $\VD(\G_F)$. We do this via the power
diagram: i.e. consider the plane $H$ containing $F$, and compute the power
diagram on $H$ with respect to the viewpoints in $\G_F$. This takes
$O(k_cm\log m)$ time in total, since $\colvis(\T,\G)$ has $O(k_c)$ faces, and
each power diagram can be computed in $O(m \log m)$ time. Each power diagram is
constrained to a single face, so we glue all of them together and project the
result onto $\R^2$. This yields a subdivision
$\mathbb{W}$ of size $O(k_cm)$. We now compute \overviewshed in $O(m(n\alpha(n) +
  \min(k_c,n^2))\log n + mk_c)$ time (as described above), and overlay it with
$\mathbb{W}$ in $O(k_cm + k_c + k_v) = O(k_cm)$ time. Hence:

\begin{theorem}
  \label{thm:2.5d_compute_vorvis}
Given a 2.5D terrain \T, the Voronoi visibility map $\vorvis(\T,\G)$ can be
  constructed in $O(m(n\alpha(n) + \min(k_c,n^2))\log n + mk_c\log m)$ time.
\end{theorem}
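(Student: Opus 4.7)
The plan is to follow the strategy outlined just before the statement and combine the construction of the colored visibility map from Theorem~\ref{thm:2.5d_compute_vis_and_colvis} with a per-face power diagram computation, then merge everything. First, I would invoke Theorem~\ref{thm:2.5d_compute_vis_and_colvis} to build $\colvis(\T,\G)$ in $O(m(n\alpha(n) + \min(k_c,n^2))\log n + mk_c)$ time. Crucially, the algorithm behind that theorem can be made to annotate each face $F$ of $\colvis(\T,\G)$ with the set $\G_F \subseteq \G$ of viewpoints that see $F$ (this information is already encoded in the overlay $\overviewshed$ of individual viewsheds used in the proof).

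Next, for every face $F$ of $\colvis(\T,\G)$, I would compute $\vorvis(\T,\G) \cap F$. Since $F$ lies in a single plane of $\R^3$ and all viewpoints in $\G_F$ see $F$ entirely, the restriction of the three-dimensional Voronoi diagram $\VD(\G_F)$ to the plane of $F$ coincides with a two-dimensional power diagram $\PD(\mathcal{C}_{\G_F})$, as established in the reduction preceding Theorem~\ref{thm:2.5d_complexity_vorvis}. Computing this power diagram takes $O(|\G_F|\log |\G_F|) = O(m\log m)$ time using the algorithm of Aurenhammer. Clipping it to $F$ adds time linear in the combined complexity of $F$ and the computed power diagram inside $F$.

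Summing over all faces, the power-diagram step costs $O(k_c m\log m)$, because $\colvis(\T,\G)$ has $O(k_c)$ faces and each power diagram takes $O(m\log m)$. The total complexity of the pieces produced is $O(k_cm)$, matching the bound from Theorem~\ref{thm:2.5d_complexity_vorvis}, so gluing the clipped power diagrams along shared edges of $\colvis(\T,\G)$ (and projecting to $\R^2$ if desired) is done in time linear in this size plus $k_v = O(k_c m)$. Adding the cost of building $\colvis(\T,\G)$ yields
\[
O\bigl(m(n\alpha(n) + \min(k_c,n^2))\log n + mk_c\bigr) + O(k_c m\log m),
\]
which simplifies to the claimed bound $O(m(n\alpha(n) + \min(k_c,n^2))\log n + mk_c\log m)$.

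The main delicate point is ensuring that we truly have the sets $\G_F$ available after running the colored visibility map algorithm, and that the power-diagram reduction applies uniformly to every face $F$; both follow from the fact that the entire interior of $F$ sees exactly the viewpoints in $\G_F$, so the planar restriction of $\VD(\G_F)$ correctly captures closest-visible distances within $F$. The remainder is bookkeeping: clipping each planar power diagram to its face and concatenating the resulting subdivisions into $\vorvis(\T,\G)$.
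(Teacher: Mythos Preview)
Your proposal is correct and follows essentially the same approach as the paper: build $\colvis(\T,\G)$ (via the overlay $\overviewshed$), compute a power diagram in the supporting plane of each face $F$ with respect to $\G_F$, and combine. The only cosmetic difference is that the paper first glues the per-face power diagrams into a single planar subdivision $\mathbb{W}$ and then overlays $\mathbb{W}$ with $\overviewshed$, whereas you clip each power diagram to its face $F$ before gluing; both routes yield the same running time and the same final structure.
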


% In the worst case this solves to $O(m^4n^2\log n)$ time.

\section{Viewpoints with limited sight}
\label{sec:limited-sight}

Throughout the paper we have assumed viewpoints have unlimited sight, as is common in computational geometry.
However, it is interesting to study what happens if each viewpoint $p_i$
can only see for a limited distance $r_i$, since this is reasonable for many applications.
Below we explain how our results can be adapted to this realistic setting.

\subsection {1.5D terrains}

We study the case $r_1=\cdots=r_m$.
The maximum complexity of the colored and Voronoi visibility maps
remains $\Theta(m n)$. In contrast, reducing the range of the
viewpoints can worsen the maximum complexity of the visibility
map, and in particular it is possible to achieve complexity
$\Theta(m n)$.

Computing the visibility map seems more expensive in this case. In particular, dividing
the map into left- and right- visibility might not result in an
output sensitive algorithm, since the left-visibility map can have much higher
complexity than that of the final map. 
Unfortunately, it is unclear how to produce the visibility map in an output-sensitive fashion.
Therefore we resort to extracting the visibility map from the colored visibility map.

The colored visibility map can be constructed in almost the same running time as when the range of the viewpoints is infinite, but the algorithm in Section~\ref{subsec:1.5-alg-col} needs several modifications, which are explained below.
Finally, the algorithm for the Voronoi visibility map presented in Section~\ref{sec:alg1.5Vor} works with no modification if viewpoints see only up to a limited distance, since the colored visibility map encapsulates all events related to the appearance or disappearance of viewpoints, regardless of whether sight is unlimited or not.

\subsubsection{Construction of  $\colvis(\T,\G)$} \label{subsec:colvis-limitedsight}

We next show that, if the viewpoints have the same limited sight, the colored visibility map $\colvis(\T,\G)$, for a 1.5D terrain  \T,  can be constructed in $O(n \log n + (k_c+m^2) \log n)$ time.

The following algorithm first constructs a map for left-visibility while sweeping the terrain from left to right, then a map for right-visibility in a symmetric way, and finally merges both to obtain $\colvis(\T,\G)$. 

The algorithm maintains a collection of \emph{active} viewpoints \A, defined as viewpoints that are both visible and within sight at that point of the sweep.
\A is represented with a balanced binary tree where the viewpoints are sorted by their $x$-coordinate. 
In addition, instead of using one global event queue, we will maintain one event queue for each edge of the terrain, where all events that take place on points of that edge will be stored.

In the following, assume that each viewpoint is the center of a disk of radius $r$.

The following observation is somewhat analogous to the order claim for unlimited sight.

\begin{observation}
\label{obs:left-visLimSight}
Let $p_i$ and $p_j$ be two viewpoints that can see within distance $r$, and suppose $p_i$ is to the left of
$p_j$. Let $q \in \T$ be a point on the terrain that has an unobstructed line of sight to both $p_i$ and $p_j$, and such that $q$ is at distance $r$ from $p_j$. 
Then if $q$ is on the bottom right quadrant of $p_j$,  $q$ is not within sight of $p_i$.
\end{observation}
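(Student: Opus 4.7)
The plan is to show that the visibility hypothesis forces $p_i$ to lie strictly above $p_j$, and then a direct coordinate-wise distance comparison yields $d(p_i,q) > r = d(p_j,q)$.

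First, I would fix coordinates $p_i = (a_1,b_1)$, $p_j = (a_2,b_2)$, and $q = (x,y)$, so the hypotheses become $a_1 < a_2 < x$, $y < b_2$, together with $(x-a_2)^2 + (y-b_2)^2 = r^2$. Since $p_j$ sits on the terrain at $(a_2,b_2)$, the segment $\overline{p_i q}$ must cross the vertical line $x = a_2$ at height at least $b_2$, for otherwise $p_j$ and the surrounding terrain would block the view. Equivalently, $p_i$ must lie on or above the line $\ell$ through $p_j$ and $q$.

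Next, I would observe that $\ell$ has slope $(y-b_2)/(x-a_2) < 0$, so extending $\ell$ to the left of $p_j$ strictly raises it above $b_2$. Evaluating at $x = a_1 < a_2$ gives $\ell(a_1) > b_2$, and combining with the visibility condition yields $b_1 \geq \ell(a_1) > b_2$.

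Finally, I would compare the two squared distances coordinate by coordinate. The $x$-contributions satisfy $(x-a_1)^2 > (x-a_2)^2$ because $0 < x-a_2 < x-a_1$; the $y$-contributions satisfy $(y-b_1)^2 > (y-b_2)^2$ because $y < b_2 < b_1$. Summing, $d(p_i,q)^2 > d(p_j,q)^2 = r^2$, so $q$ lies outside the sight disk of $p_i$, as required. The only delicate step is extracting $b_1 > b_2$: this is where the hypothesis that $q$ lies strictly in the \emph{bottom}-right quadrant (and not merely to the right) is essential, since it ensures $\ell$ has strictly negative slope. The rest is a routine arithmetic check and I do not foresee any further obstacle.
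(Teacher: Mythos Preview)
Your argument is correct. The paper states this result as an observation without any accompanying proof, so there is nothing to compare your approach against; your write-up in fact supplies the justification that the paper omits. The key geometric step---that unobstructed sight from $p_i$ to $q$ forces the segment $\overline{p_iq}$ to pass at or above the terrain point $p_j$ at $x=a_2$, hence $p_i$ lies on or above the line through $p_j$ and $q$---is exactly the right way to extract $b_1>b_2$, after which the coordinate-wise distance comparison is immediate.
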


%As explained above, the left-visibility map will be constructed as the terrain is swept from left to right. The events of this sweep are related to edges, thus each edge will have an associated balanced binary search tree. Note that even though we are aware the structure is a tree, in the description below it is referred to as a list\rodrigo{Use some other term instead of list... collection? sequence? Or just use a letter, like $\mathcal{A}$, for them}. 

The sweep starts at the leftmost edge of the terrain.
We distinguish the following four types of events:
(i) a new edge begins;
(ii) an active viewpoint stops seeing the current edge from its upper right quadrant, that is, the edge crosses the boundary of the upper right quadrant of the disk of radius $r$ centered at the viewpoint (see Figure~\ref{fig:lim-range-colvis}, left);
(iii) an active viewpoint stops seeing the current edge from its bottom right quadrant, that is, the edge crosses the boundary of the bottom right quadrant of the disk of radius $r$ centered at the viewpoint; and 
(iv) a viewpoint becomes active.

\begin{figure}[tb]
\centering
\includegraphics[scale=1.2]{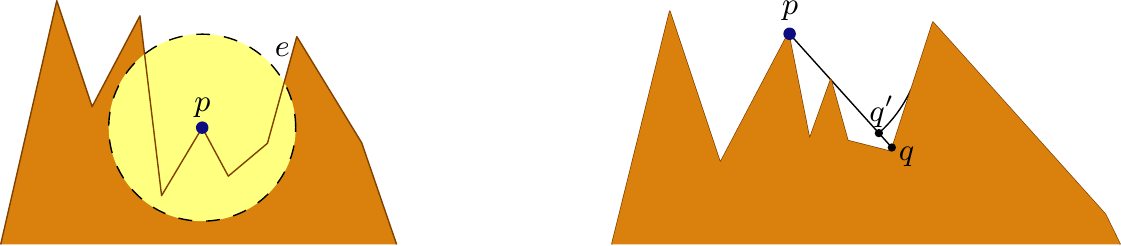}
\caption{Left: The active viewpoint $p$ stops seeing edge $e$ from its upper right quadrant. Right: Computation of the first point of the terrain where viewpoint $p$ reappears.}
\label{fig:lim-range-colvis}
\end{figure}

Next we explain how the algorithm handles each type of event.

(i) A new edge $v_iv_j$ begins.

Several viewpoints may stop being visible after $v_i$ and therefore need to be identified and removed from \A. However, before removing the viewpoints we need to check where they reappear and add at those points events of type (iv) in the event queues of the corresponding edges. The computation of the points of the terrain (if any) where they reappear is done as follows: First, for every viewpoint $p$, we perform a ray-shooting query with the ray $\overrightarrow{pv_i}$. Let $q$ be the point obtained (if any). If $q$ is at distance at most $r$ from $p$, we return $q$. Otherwise, let $q'$ be the point on $\overrightarrow{pv_i}$ at distance $r$ from $p$. We perform a ray-shooting query with a circular arc of radius $r$ going counterclockwise from $q'$ (see Figure~\ref{fig:lim-range-colvis}, right). We return the point that we obtain (if any).

Note that the viewpoints that cannot see $v_iv_j$ have consecutive $x$-coordinates among those in \A, must include the rightmost viewpoint, and thus appear on the ``right side'' of \A.  
Therefore, it is possible to process all viewpoints from right to left until the first viewpoint that sees $v_iv_j$ is found. 

In case there is a viewpoint on $v_i$, it should be inserted in \A as the rightmost element. Furthermore, we also need to find the point (if any) where this viewpoint stops seeing the terrain from its upper right quadrant. This can be done using a circular ray-shooting query with a circular arc of radius $r$ going clockwise from the point that has the same $x$-coordinate as $v_i$, lies above $v_i$, and is at distance $r$ from $v_i$. If the ray-shooting query returns a point in the upper right quadrant, then this point becomes an event of type (ii) on the appropriate edge.
%\rodrigo{I am not sure what the following case refers to... isn't the exit through the bottom-right quadrant covered in the next paragraph?} \maria{Well, it can happen that you do the ray-shooting query, and instead of obtaining a point in the upper right quadrant, you obtain a point in the bottom right quadrant.}
%\rodrigo{That's true, but wouldn't that be detected with the search of the next paragraph anyway?}
%If the viewpoint stops seeing the terrain from its bottom right quadrant, then an event of type (iii) is added at that point on the corresponding edge.
 
Furthermore, the viewpoints that see $v_i$ may not see the whole edge and also need to be identified. 
At this stage we only need to identify the viewpoints that stop seeing $v_iv_j$ from their bottom right quadrant. 
According to Observation~\ref{obs:left-visLimSight}, \A can be searched from left to right until the first viewpoint that fully sees $v_iv_j$ is found. For all the viewpoints that stop seeing $v_iv_j$ from their bottom right quadrant, we add events of type (iii) in the event queue of edge $v_iv_j$. 
%In this step, some viewpoints that stop seeing $v_iv_j$ from their upper-right quadrant may be checked again unnecessarily, but this will only happen once per viewpoint.\maria{I don't understand this sentence.}
%\rodrigo{I don't fully understand it now either. I think it was there to point out that you may spend some time on a viewpoint that you already know will become invisible , because it was identified in the previous phase as leaving the terrain through its upper-right quadrant. But now I don't see what it meant exactly anymore. Remove sentence?}

%Finally, all the remaining events of types (ii)--(iv) that occur on edge $\overline{v_iv_j}$ have to be processed in order of their $x$-coordinate.

(ii) An active viewpoint stops seeing the current edge from its upper right quadrant.

Such a viewpoint will never become active again during the sweep, so it is permanently removed from \A.

(iii) An active viewpoint stops seeing the current edge from its bottom right quadrant.

Even though the viewpoint no longer sees the current edge, it may become active again during the sweep. Therefore, we find the next point where the viewpoint sees the terrain again using a circular ray-shooting query with a circular arc of radius $r$ going counterclockwise from the point where the current edge crosses the boundary of
the bottom right quadrant of the disk of radius $r$ centered at the viewpoint. If such a point exists, an event of type (iv) is added on the corresponding edge. Note that if the ray-shooting hits a point of the terrain on the right of the current point, then such point is visible, since the sector of the disk between the current point and the re-entry point is empty. Finally, we remove the viewpoint from \A.

(iv) A viewpoint becomes active.

We insert the viewpoint in \A and check whether the viewpoint stops seeing the current edge from its bottom right quadrant. In case it does, an event of type (iii) is added at the point where the viewpoint becomes invisible again.

Once the left-visibility map is constructed, the right-visibility map can be built by mirroring the procedure described above. In the end, both maps are merged and the resulting map is $\colvis(\T,\G)$.

\paragraph{Running time.} The events of type (i) corresponds to a terrain vertex, thus there are $O(n)$ of them, while each event of type (ii), (iii) or (iv) can be associated with a change in the final colored visibility map.
Furthermore, the algorithm only processes viewpoints whose active status changes.
 
Processing an event involves a constant number of rectilinear or circular ray-shootings queries, taking $O(\log n)$ time each~\cite{cegghss-rspugt-94,cceo-hdcrs-04}, a constant number of operations in \A and in some of the event queues of the edge, taking $O(\log m)$ time each. 

Events of type (ii), (iii) and (iv) only involve a single viewpoint and always produce a change in $\colvis(\T,\G)$. 
Events of type (i) may involve many viewpoints and produce only one change in $\colvis(\T,\G)$. 
In the same way as for unlimited sight, we bound the time spent on these events by charging the time to the point at which the viewpoint becomes active again---events of type (iv). 
As before, it can happen that several viewpoints become active exactly at the same point, but for a given pair of viewpoints, this happens at most once. 
Therefore, the total time spent handling all events is $O((k_c+m^2) \log n)$. 

Finally, the terrain has to be preprocessed to allow fast ray-shooting queries with rectilinear or circular arcs, a procedure requiring $O(n \log n)$ time and $O(n)$ space~\cite{cegghss-rspugt-94,cceo-hdcrs-04}. 
%The binary trees that represent the event queues can be built in $O(m \log m)$ time, 

It follows that the total running time of the algorithm is $O(n \log n + (k_c+m^2) \log n)$.

\paragraph{Correctness.} The algorithm explained above is correct once it is proven that it cannot miss any changes in the final map. The sweep should not miss when a viewpoint becomes within sight or out of reach, nor when it appears or disappears on the terrain. At every vertex of the terrain, the algorithm verifies which viewpoints remain visible and within reach, as well as which are hidden, so it is impossible to miss that a viewpoint is no longer active. 
Moreover, for each viewpoint that no longer is active the algorithm computes the next point where it becomes active again (if such point exists). 
Regarding range, a viewpoint can stop seeing the terrain from its bottom left/right quadrant or from its upper left/right quadrant, but each viewpoint only stops seeing the terrain from its upper left/right quadrant once. This exit point is found when the sweep stops at the viewpoint for the first time, thus this type of event cannot be missed by the algorithm. The exit points from the bottom left/right quadrants are found when the sweep reaches the edges where these events take place. 
According to Observation~\ref{obs:left-visLimSight}, it suffices to check the left end of \A, since the viewpoints stop seeing the terrain in order, as long as the exit is made through their bottom left/right quadrant. In conclusion, these exit points are not missed either.

We summarize the results in this section with the following theorem.

\begin{theorem} \label{thm:1.5-map-alg-lim}
Given a 1.5D terrain \T and a set $\G$ of viewpoints with the same limited sight,
$\vis ({\cal T},\G)$ and $\colvis ({\cal T},\G)$ can be constructed in $O(n \log n + (m^2 + k_c )\log n)$ time.
$\vorvis({\T},\S)$ can be
  constructed in $O(n \log n + (m^2 + k_c) \log n + k_v (m+\log n \log m))$ time.
\end{theorem}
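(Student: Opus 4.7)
The plan is to combine the three pieces that have already been developed: (a) the sweep-based construction of $\colvis(\T,\G)$ for limited sight given in Section~\ref{subsec:colvis-limitedsight}, (b) an easy extraction of $\vis(\T,\G)$ from $\colvis(\T,\G)$, and (c) the output-sensitive Voronoi algorithm of Section~\ref{sec:alg1.5Vor} run on top of $\colvis(\T,\G)$. The bound for $\colvis(\T,\G)$ is exactly the running time established for the algorithm of Section~\ref{subsec:colvis-limitedsight}: an $O(n \log n)$ preprocessing step for rectilinear and circular ray shooting, plus $O(\log n)$ per event, where the events are the $O(n)$ vertex events together with $O(k_c+m^2)$ viewpoint appearance/disappearance events.

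For $\vis(\T,\G)$, the plan is to observe that it is a coarsening of $\colvis(\T,\G)$: two adjacent regions of $\colvis(\T,\G)$ collapse into one region of $\vis(\T,\G)$ whenever their viewpoint sets are either both empty or both non-empty. Hence, once $\colvis(\T,\G)$ has been computed, a single left-to-right pass through it produces $\vis(\T,\G)$ in $O(k_c)$ additional time, which is absorbed by the bound above.

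For $\vorvis(\T,\G)$, the plan is to invoke the algorithm of Section~\ref{sec:alg1.5Vor} unchanged. As already noted before the theorem, that algorithm is driven entirely by the events of $\colvis(\T,\G)$: whenever a viewpoint appears or disappears (whether because of occlusion or because of the range bound), the corresponding event is already recorded in the colored visibility map. Thus the correctness proof of Lemma~\ref{lem:1.5D-vor-alg} carries over verbatim, as do the implementations of \textsc{IsAlwaysCloser} (Lemma~\ref{lem:bisInt}) and \textsc{FirstRegionChange} (Lemma~\ref{lem:FirstbisInt}), which only use bisectors and ray-shooting on terrain intervals on which the viewpoints involved are guaranteed to be visible. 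Plugging the new bound for $\colvis(\T,\G)$ into Theorem~\ref{thm:1.5D-vor-alg} yields the stated $O(n \log n + (m^2 + k_c)\log n + k_v(m + \log n \log m))$ time.

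The only potentially delicate point, and the one I would verify most carefully, is that the classification of events in Section~\ref{subsec:colvis-limitedsight} really captures every transition in $\colvis(\T,\G)$ under limited sight. Concretely, one must check that a viewpoint can only leave the active set $\A$ through (ii) crossing the boundary of its upper-right quadrant, (iii) crossing the boundary of its bottom-right quadrant, or occlusion at a terrain vertex; and that Observation~\ref{obs:left-visLimSight} (the range analogue of the order claim) suffices to localize the viewpoints affected by a new edge to a contiguous suffix or prefix of $\A$. Once these are in hand, the charging argument that bounds the number of events by $O(n + k_c + m^2)$ goes through as in the unlimited-sight case via Lemma~\ref{lem:simul-vis}, and the overall time bounds follow.
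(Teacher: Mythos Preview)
Your proposal is correct and follows essentially the same approach as the paper: construct $\colvis(\T,\G)$ with the limited-sight sweep of Section~\ref{subsec:colvis-limitedsight}, extract $\vis(\T,\G)$ from it, and run the Voronoi algorithm of Section~\ref{sec:alg1.5Vor} unchanged on top of $\colvis(\T,\G)$. The paper likewise asserts (without further proof) that each pair of viewpoints becomes simultaneously active at most once, mirroring your appeal to the argument of Lemma~\ref{lem:simul-vis}.
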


\subsection{2.5D terrains}

Let $B_i$ denote the ball of radius $r_i$ centered at $p_i$, and let $\B(\G)
= \{B_i \mid p_i \in \G\}$. We are now interested in the complexity of
$\vis(\T,\G) \cap \bigcup \B(\G)$. %\maria{Not sure: it could be that a point of the terrain has line of sight from viewpoint $p_i$, is at distance at most $r_j$ from viewpoint $p_j$, but no viewpoint satisfies the two conditions. Right?}

In this case, limited sight brings additional problems as it creates several new types of vertices, which are not contemplated in Obs.~\ref {obs:vertex-types}:
\begin {enumerate*}
 \item [(4)] intersections between an edge of $\T$ and
a single sphere,
\item[(5)] intersections between a triangle of $\T$ and two spheres, and
\item[(6)] intersections between a triangle of $\T$, a sphere and a vase.\footnote{There are more types of ``vertices'' that do not lie on \T, for example,
  intersections of three spheres, intersections of two spheres and a vase,
  etc.}
 \end {enumerate*}
  The number of vertices of types (4), (5), and (6) is $O(mn)$,  $O(m^2n)$, and
$O(m^2n^2)$ respectively. Since all these types of vertices are dominated by the number of vertices of type
(3), the complexity bounds of our visibility structures do not change.

%\maria{Do we know anything about algorithms in this case? Discussing a bit with Rodrigo, it seems that after computing the individual viewsheds, we could intersect them with a sphere that represents the limited sight. Can we then compute unions, etc. in the same running time? If so, we would still have the problem that the running time would depend on the complexity of ColVis in the case of unlimited range, which is not ideal. Maybe there exist algorithms to compute individual viewsheds in 2.5D with limited range? We should check the bibliography.}

\section{Final Remarks} \label{sec:final-remarks}
In this paper we studied visibility with multiple viewpoints on polyhedral terrains for the first time,
 and presented a thorough study on the complexities and algorithms for three fundamental visibility structures.
Our results show that considering multiple viewpoints converts classical visibility problems into much more challenging ones, even for 1.5D terrains.

Moreover, our results lead to many intriguing questions.  For 1.5D terrains, is
there an efficient algorithm to construct the Voronoi visibility map whose
running time does not depend on $k_c$?  In 2.5D, the worst-case complexity for
the Voronoi visibility map is not tight; it would be % very
interesting to close this gap. Algorithmically, in 2.5D the main challenge is
to find an algorithm to construct the structures directly, avoiding the
computation of the individual viewsheds.

\paragraph{Acknowledgments.}
The authors would like to thank several anonymous reviewers for their helpful and insightful comments, and Mark de Berg for pointing us to~\cite {deBerg1996generalized}.
 F.~H., V.~S., and R.~S. were partially supported by
projects MINECO MTM2012-30951/FEDER, Gen. Cat. DGR2009SGR1040, and ESF EUROCORES programme EuroGIGA
- ComPoSe IP04 - MICINN Project EUI-EURC-2011-4306. 
M.~L. and
F.~S. were supported by
Netherlands Organisation for Scientific Research (NWO) under project no 639.021.123 and 612.001.022. I.~M. was supported by FCT project PEst-C/MAT/UI4106/2011 with COMPETE number FCOMP-01-0124-FEDER-022690. M.~S.\ was supported by the
project NEXLIZ - CZ.1.07/2.3.00/30.0038, which is co-financed by
the European Social Fund and the state budget of the Czech
Republic, by ESF EuroGIGA project ComPoSe as
F.R.S.-FNRS - EUROGIGA NR 13604, and by ESF EuroGIGA project GraDR as
GA\v{C}R GIG/11/E023. 
R.~S. was funded by Portuguese funds through CIDMA (Center for Research and Development in Mathematics and Applications) and FCT (Funda\c{c}{\~a}o para a Ci{\^e}ncia e a Tecnologia), within project PEst-OE/MAT/UI4106/2014, and by FCT grant SFRH/BPD/88455/2012. 
\bibliographystyle{abbrv}
\bibliography{refs}

\end{document}